\def\BibTeX{{\rm B\kern-.05em{\sc i\kern-.025em b}\kern-.08em
    T\kern-.1667em\lower.7ex\hbox{E}\kern-.125emX}}
  \providecommand\BibTeX{{%
    Bib\TeX}}}
\renewcommand{\epsilon}{\varepsilon}
\renewcommand{\theta}{\vartheta}
\newcommand{\G}{\ensuremath{\mathcal{G}}}
\newcommand{\T}{\ensuremath{\mathcal{T}}}
\newcommand{\Cols}{\ensuremath{\mathsf{Col}}}
\def\newarrow#1{\mathop{{\hbox{\setbox0=\hbox{$\scriptstyle{#1\quad}$}{$%
\mathrel{\mathop{\setbox1=\hbox to
\wd0{\rightarrowfill}\ht1=3pt\dp1=-2pt\box1}\limits^{#1}}%
$}}}}}
\newcommand{\Transition}[3]{\ensuremath{#1 \newarrow{#2} #3}}
\newcommand{\TransitionD}[4]{\ensuremath{#1 \newarrow{#2}_{#4} #3}}
\newcommand{\Func}[1]{\ensuremath{\mathcal{F}_{#1}}}
\newcommand{\reset}[1]{\ensuremath{\mathit{rst}_{#1}}}
\newcommand{\kbisim}[1]{\ensuremath{\approx_{#1}}}
\newcommand{\Prop}{(\textsf{Prop})\xspace}
\newcommand{\Forth}{(\textsf{Forth})\xspace}
\newcommand{\Back}{(\textsf{Back})\xspace}
\newcommand{\mucalc}[1]{\ensuremath{\mathcal{L}_{\mu}^{#1}}}
\newcommand{\polymucalc}{\mucalc{\omega}}
\newcommand{\PHFL}[1]{\textsc{PHFL}\ensuremath{^{#1}}}
\newcommand{\trPHFL}[1]{\textsc{PHFL}\ensuremath{^{#1}_{\mathsf{tr}}}}
\newcommand{\Vars}{\ensuremath{\mathsf{Var}}}
\newcommand{\mudiam}[2]{\ensuremath{\langle #1 \rangle_{#2}}}
\newcommand{\mubox}[2]{\ensuremath{[ #1 ]_{#2}}}
\newcommand{\allbox}[1]{\ensuremath{[#1]^*}}
\newcommand{\mutrue}{\ensuremath{\mathtt{t\!t}}}
\newcommand{\repl}[2]{\ensuremath{\{#1{\shortleftarrow}#2\}}}
\newcommand{\sem}[3]{\ensuremath{[\![ #1 ]\!]^{#2}_{#3}}}
\newcommand{\mono}[1]{\ensuremath{\mathit{mono}(#1)}}
\newcommand{\poly}[1]{\ensuremath{\mathit{poly}(#1)}}
\newcommand{\PTIME}{\textsc{P}\xspace}
\newcommand{\NPTIME}{\textsc{NP}\xspace}
\newcommand{\PSPACE}{\textsc{PSpace}\xspace}
\newcommand{\EXPTIME}{\textsc{ExpTime}\xspace}
\newcommand{\EXPSPACE}{\textsc{ExpSpace}\xspace}
\newcommand{\NLOGSPACE}{\textsc{NLogSpace}\xspace}
\newcommand{\bisim}[1]{#1/\ensuremath{_{\sim}}}
\newcommand{\OneNonUnivNFA}[1]{\textsc{1NonUnivNFA}\ensuremath{^{#1}}\xspace}
\newcommand{\TwoNonUnivNFA}[1]{\textsc{2NonUnivNFA}\ensuremath{^{#1}}\xspace}
\newtheorem{theorem}{Theorem}
\newtheorem{proposition}[theorem]{Proposition}
\newtheorem{lemma}[theorem]{Lemma}
\newtheorem{corollary}[theorem]{Corollary}
\theoremstyle{plain}
\newtheorem{definition}[theorem]{Definition}
\newtheorem{example}[theorem]{Example}
\tikzstyle{every state}=[inner sep=0pt, minimum size=3mm]
\tikzset{initial text={}}
\tikzstyle{tnode}=[sharp corners,solid,thin,draw=black,matrix,matrix of nodes,inner sep=0pt,minimum size=2mm,
\tikzstyle{snode}=[tnode,minimum size=1.5mm]
\tikzstyle{a0}=[red]
\tikzstyle{a1}=[blue]
\tikzstyle{rst0}=[red,dashed]
\tikzstyle{rst1}=[blue,dashed]
\tikzstyle{f0}=[fill=red]
\tikzstyle{f1}=[fill=blue]
\newcommand{\fzero}[2]{|[f0]|{\scriptsize\color{white} #1}\&|[fill=none]|{\scriptsize #2}\\}
\newcommand{\fone}[2]{|[fill=none]|{\scriptsize #1}\&|[f1]|{\scriptsize\color{white} #2}\\}
\newcommand{\fboth}[2]{|[f0]|{\scriptsize\color{white} #1}\&|[f1]|{\scriptsize\color{white} #2}\\}
\newcommand{\fnone}[2]{|[fill=none]|{\scriptsize #1}\&|[fill=none]|{\scriptsize #2}\\}
\begin{document}

\title[A Bisimulation-Inv.-Based Approach to the Separation of Poly.\ Compl.\ Classes]{A Bisimulation-Invariance-Based Approach to the Separation of Polynomial Complexity Classes
}


\author{Florian Bruse}
\address{Technical University of Munich, Germany}

\author{Martin Lange}
\address{University of Kassel, Germany}

\begin{abstract}
We investigate the possibility to separate the bisimulation-invariant fragment of P from that of NP, resp.\ PSPACE. We build on Otto's Theorem stating
that the bisimulation-invariant queries in P are exactly those that are definable in the polyadic $\mu$-calculus, and use a known construction from
model checking in order to reduce definability in the polyadic $\mu$-calculus to definability in the ordinary modal $\mu$-calculus within the
class of so-called power graphs, giving rise to a notion of relative regularity. We give examples of certain bisimulation-invariant queries in NP, 
resp.\ PSPACE, and characterise their membership in P in terms of relative non-regularity of particular families of tree languages. 
A proof of non-regularity for all members of one such family would separate the corresponding class from P, but the combinatorial complexity
involved in it is high. On the plus side, the step into the bisimulation-invariant world alleviates the order-problem that other approaches in descriptive
complexity suffer from when studying the relationship between P and classes above.  
\end{abstract}



	
\maketitle

\section{Introduction}

Without a doubt, the problem or determining whether \PTIME{=}\NPTIME is one of the, if not \emph{the} most famous and notorious open problem 
in complexity theory, theoretical computer science or even computer science as a whole. Despite ongoing effort for decades, 
there is still no proof either way, showing that any problem with polynomially verifiable witnesses (i.e.\ problem in \NPTIME)
also has polynomially computable solutions, or that some such problem cannot be solved by a polynomial-time algorithm.

The outstanding status of the \PTIME{=}\NPTIME question of course attracts a lot of attention, for instance in the form of proposed 
solutions that pop up regularly, typically claiming polynomial-time solvability of an NP-complete problem in rather short 
and sketchy papers. There is, however, also serious work on this question, for instance showing that certain proof methods
are unsuitable for settling the problem either way, cf.\ relativisation \cite{DBLP:journals/siamcomp/BakerGS75} or algebrisation 
\cite{DBLP:conf/stoc/AaronsonW08}.

This paper does not claim to solve the \PTIME{=}\NPTIME problem. We take some established results from the literature and work out
(some rather unwieldy) characterisations of languages of infinite trees such that a negative answer to the 
question of \PTIME{=}\NPTIME can be given by a proof of relative non-regularity for all members of a family of tree languages.
Relative non-regularity means that a proof via a pumping argument for instance would have to be confined to some other given class of tree-languages which makes
finding such a pumping argument even trickier. 

The possibility to settle the \PTIME{=}\NPTIME question by -- certainly complicated -- proofs of non-regularity of tree languages 
may seem surprising. 
However, 
%
the potential separation of complexity classes using arguments of formal expressive power is not new. It is at the heart 
of \emph{descriptive complexity theory}, the research field that aims at characterising 
computational complexity classes by means of the expressiveness of logical formalisms. It originates from Fagin's Theorem
\cite{Fagin74} equating \NPTIME with $\exists$SO, the existential fragment of Second-Order Logic. Characterisations of other
main time and space complexity classes have followed, for example \PSPACE{=}FO+PFP \cite{JACM::AbiteboulVV1997}, as well as
characterisations of $k$-\EXPTIME and $k$-\EXPSPACE for $k \ge 1$ through fragments of Higher-Order Logic with Least or Partial 
Fixpoints, cf.\ \cite{Imm89}. This opens up the principal possibility to separate complexity classes using tools of logical
inexpressivity but, to the best of our knowledge, the only known examples of successful proofs only re-establish results that 
are known otherwise already, for instance from the time or space hierarchy theorems \cite{HartmanisS65,FOCS::StearnsHL1965,Cook:STOC72}. 

Characterisations for well-known computational complexity classes by means of logics have successfully only been achieved for
classes containing \NPTIME. For classes below that, in particular \PTIME, one is faced with the \emph{order problem}: the input
to a computational device like a Turing Machine naturally comes with an order, as such inputs are written down on a Turing tape.
The ``input'' to a logical formula need not be ordered, though, i.e.\ a binary relation ordering the structure's elements either 
is or is not part of the structure itself. 

For characterisations of complexity classes from \NPTIME onwards, this makes no difference because the existence of a total order 
on the underlying structure can be formalised in $\exists$SO. Hence, a set of
ordered structures is definable in such logics iff their unordered versions are. 

For obtaining a logical characterisation of the complexity class \PTIME (and, with that, a possible approach to the settling of 
the \PTIME{=}\NPTIME question), the presence or absence of a total order does seem to make a difference, though. A logical 
characterisation of \PTIME (on general, \emph{unordered} structures) has not been found yet, 
despite active and ongoing research, cf.\ \cite{DBLP:conf/lics/Grohe08, DBLP:journals/siglog/DawarP24}.
The Immerman-Vardi Theorem \cite{Imm:relqcp,STOC::Vardi1982} provides a characterisation of those sets of structures
(otherwise also called \emph{queries}, \emph{languages}, \emph{problems}, etc.) that are \emph{order-invariant} and polynomial-time
computable in terms of FO+LFP -- First-Order Logic with Least Fixpoints.

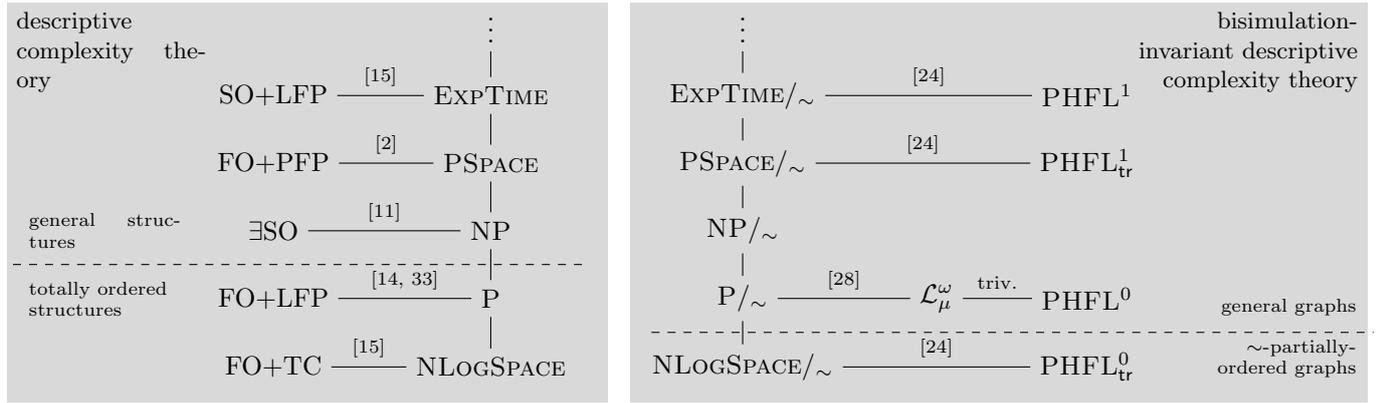
\begin{figure*}[t]
\begin{center}
  \begin{tikzpicture}

     \fill[gray!30] (-.6,-2.8) rectangle (9.1,2.5);
     \fill[gray!30] (-.9,-2.8) rectangle (-8.8,2.5);

     \node at (-8.8,2.5) [anchor=north west] {\parbox{2.5cm}{\small descriptive \\ complexity theory}};
     \node at (9.1,2.5) [anchor=north east] {\parbox{3cm}{\flushright \small bisimulation-invariant descriptive \\ complexity theory}};
     
     \matrix [row sep=3mm, column sep=9mm] (cc) {
         & \node (ccc) {$\vdots$}; & \node (bisccc) {$\vdots$}; & & \\  
         \node (solfp) {SO+LFP}; & \node (exptime) {\EXPTIME}; & 
           \node (bisexptime) {\bisim{\EXPTIME}}; & & \node (phfl1) {\PHFL{1}}; \\
         \node (fopfp) {FO+PFP}; & \node (pspace) {\PSPACE}; & 
           \node (bispspace) {\bisim{\PSPACE}}; & & \node (phfl1tr) {\trPHFL{1}}; \\
         \node (eso) {$\exists$SO}; & \node (nptime) {\NPTIME}; & 
           \node (bisnptime) {\bisim{\NPTIME}}; & & \\
         \node (folfp) {FO+LFP}; & \node (ptime) {\PTIME}; & 
           \node (bisptime) {\bisim{\PTIME}}; & \node (polymucalc) {\polymucalc}; & \node (phfl0) {\PHFL{0}}; \\
         \node (fotc) {FO+TC}; & \node (nlogspace) {\NLOGSPACE}; & \node (bisnlogspace) {\bisim{\NLOGSPACE}}; & & \node (phfl0tr) {\trPHFL{0}}; \\
           };
     
     \tikzset{every node/.style={font=\scriptsize}}
      
     \draw (solfp) edge node [above] {\cite{Imm:lanccc}} (exptime) 
             (bisexptime) edge node [above] {\cite{conf/ifipTCS/LangeL14}} (phfl1)
           (fopfp) edge node [above] {\cite{JACM::AbiteboulVV1997}} (pspace) 
             (bispspace) edge node [above] {\cite{conf/ifipTCS/LangeL14}} (phfl1tr)
           (eso) edge node [above] {\cite{Fagin74}} (nptime)
           (folfp) edge node [above] {\cite{Imm:relqcp,STOC::Vardi1982}} (ptime)
             (bisptime) edge node [above] {\cite{Otto/99b}} (polymucalc)
             (polymucalc) edge node [above] {triv.} (phfl0)
           (fotc) edge node [above] {\cite{Imm:lanccc}} (nlogspace) 
             (bisnlogspace) edge node [above] {\cite{conf/ifipTCS/LangeL14}} (phfl0tr);
           
     \draw (nlogspace) edge (ptime)
           (ptime) edge (nptime)
           (nptime) edge (pspace)
           (pspace) edge (exptime)
           (exptime) edge (ccc)
           (bisnlogspace) edge (bisptime)
           (bisptime) edge (bisnptime)
           (bisnptime) edge (bispspace)
           (bispspace) edge (bisexptime)
           (bisexptime) edge (bisccc);
     
     \path (bisptime) -- ++(-1.2,-.45) coordinate (bislinel);
     \draw[dashed,thin] (bislinel) -- ++(9.5,0) coordinate (bisliner);
      
     \path (ptime) -- ++(1.2,.45) coordinate (liner);
     \draw[dashed,thin] (liner) -- ++(-7.5,0) coordinate (linel);
     
     \path (linel) -- ++(.1,.1) node [anchor=south west] {\parbox{2cm}{general structures}};
     \path (linel) -- ++(.1,-.1) node [anchor=north west] {\parbox{2cm}{totally ordered \\ structures}};
     
     \path (bisliner) -- ++(-.1,.1) node [anchor=south east] {\parbox{2cm}{\flushright general graphs}};
     \path (bisliner) -- ++(-.1,.1) node [anchor=north east] {\parbox{2cm}{\flushright $\sim$-partially- \\ordered graphs}};

  \end{tikzpicture}
\end{center}
\caption{Some known logical characterisations of standard complexity classes and bisimulation-invariant counterparts.}
\label{fig:bisimcomplexity}
\end{figure*}

While finding a characterisation of \emph{unordered}, polynomial-time computable queries is certainly interesting not just for
the question of settling \PTIME{=}\NPTIME, it is worth noting that such a characterisation is not necessary, at least for showing
\PTIME$\ne$\NPTIME, since this ``only'' requires finding one query in \NPTIME that does not belong to \PTIME, and that query need
not be order-invariant. It is therefore equally interesting to provide logical characterisations of fragments of \PTIME, i.e.\ 
polynomial-time computable queries over classes of structures with stronger invariance properties. 

One such work in this direction should perhaps deserve better attention: Otto \cite{Otto/99b} has studied the class \PTIME/$_{\sim}$
of polynomial-time computable queries on labeled graphs that are \emph{bisimulation-invariant}. He showed that this is captured by 
the \emph{Polyadic $\mu$-Calculus} $\polymucalc$ \cite{AndersenPMC:1994}, an extension of the better known \emph{Modal $\mu$-Calculus}
\mucalc{} \cite{Kozen83}. The polyadic version was first studied by Andersen as a formal specification language in program verification 
\cite{AndersenPMC:1994};\footnote{This technical report seems to have disappeared from the internet.}
it was independently studied by Otto under the name ``higher-dimensional $\mu$-calculus'' \cite{Otto/99b}. 
Here we stick to the earlier and shorter name.

Otto's Theorem was also the starting point for the development of a descriptive \emph{bisimulation-invariant} complexity 
theory on graphs, now with known characterisations of the complexity classes \bisim{\PSPACE} \cite{conf/ifipTCS/LangeL14} and
\bisim{$k$-\EXPTIME} \cite{DBLP:journals/corr/abs-2209-10311} for $k \ge 1$ etc., in terms of fragments of Polyadic Higher-Order Fixpoint
Logic \PHFL{}, an extension of the modal $\mu$-calculus by polyadic predicates and by higher-order functions, cf.\ Fig.~\ref{fig:bisimcomplexity}. We refer to \cite{conf/ifipTCS/LangeL14} for a formal definition of \PHFL{} and the fragments 
mentioned in this hierarchy. Their concrete syntax and semantics will play no further role in the developments here.

Fig.~\ref{fig:bisimcomplexity} shows one of the true benefits of the study of bisimulation-invariant queries on graphs with 
respect to computational complexity: the order problem has shifted its position (and its nature). Otto's Theorem, i.e.\ 
\bisim{\PTIME}{=}\polymucalc, relies on the availability of an order on the bisimilarity-equivalence classes on a graph, i.e.\ 
a particular partial order on the graph's nodes. One of the key components in its proof is the observation that this order is
definable in \polymucalc. Hence, the identification of any bisimulation-invariant graph query from complexity class $\mathcal{C}$
that is not definable in \polymucalc{} would separate \bisim{\PTIME} from $\bisim{\mathcal{C}}$. Moreover, we have the following
connection regarding separations.

\begin{theorem} 
\label{thm:bisseparate}
\enspace
\begin{enumerate}
\item[a)] \PTIME{=}\NPTIME iff \bisim{\PTIME}{=}\bisim{\NPTIME},
\item[b)] \PTIME{=}\PSPACE iff \bisim{\PTIME}{=}\bisim{\PSPACE}.
\end{enumerate}
\end{theorem}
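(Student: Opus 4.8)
The plan is to prove both equivalences by one and the same argument, carried out for part a) and then transferred verbatim to part b). One implication is immediate: \bisim{\PTIME} and \bisim{\NPTIME} are, by definition, the bisimulation-closed graph queries that lie in \PTIME, resp.\ in \NPTIME, so an equality of the two complexity classes collapses the two fragments as well. For the converse we argue contrapositively. Assuming \PTIME$\ne$\NPTIME, we fix a language $L\subseteq\{0,1\}^*$ that lies in \NPTIME{} but not in \PTIME, and from it construct a bisimulation-invariant graph query $Q_L$ that lies in \bisim{\NPTIME} but not in \bisim{\PTIME}; this witnesses that the two fragments differ.

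The first ingredient is a \emph{rigid} encoding of finite words as pointed labeled graphs over a signature with one action and one proposition $p$. For $w=w_1\cdots w_n$ let $\mathrm{enc}(w)$ be the directed path $r=v_0\to v_1\to\cdots\to v_n$, rooted at $r$, with $p$ holding exactly at those $v_i$ for which $w_i=1$. Two observations drive the proof. First, $\mathrm{enc}(w)$ has $n+1$ nodes and is computable from $w$ in logarithmic space, in particular in polynomial time. Second, a simple directed path is rigid under bisimilarity: from $v_i$ the unique dead end is reachable only by a path of length exactly $n-i$, and an induction on this quantity shows that $v_i\sim v_j$ forces $i=j$; hence $\mathrm{enc}(w)$ is its own bisimulation quotient. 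Consequently, for every pointed labeled graph $G$ one has $G\sim\mathrm{enc}(w)$ if and only if the bisimulation quotient $\bar G$ of $G$, taken up to isomorphism, is a simple $p$-labeled directed path rooted at its source; in that case $w$ is read off verbatim from the $p$-pattern along $\bar G$ and is the unique word with $G\sim\mathrm{enc}(w)$. In particular $|w|\le|\bar G|\le|G|$, so decoding a graph to a word is cheap and does not inflate the instance.

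Next we take $Q_L$ to be the bisimulation closure of $\{\,\mathrm{enc}(w) : w\in L\,\}$; it is closed under bisimilarity by construction. Membership is decided as follows: on input $G$, compute the bisimulation quotient $\bar G$ of $G$ (polynomial time), test in polynomial time whether $\bar G$ is a simple $p$-labeled directed path rooted at its source, reject if it is not, and otherwise read off the unique candidate word $w$ and accept iff $w\in L$. The first three steps are in \PTIME{} and the last is in \NPTIME, run on an input of size $|w|\le|G|$, so $Q_L$ lies in \bisim{\NPTIME}. Conversely, if $Q_L$ were decidable in polynomial time, then $w\in L$ iff $\mathrm{enc}(w)\in Q_L$ would decide $L$ in polynomial time, contradicting the choice of $L$; hence $Q_L$ does not lie in \bisim{\PTIME}. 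For part b) one replaces \NPTIME{} by \PSPACE{} throughout, using that ``polynomial time followed by a \PSPACE{} test on a polynomially bounded input'' is again in \PSPACE.

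We do not expect a deep obstacle here; the proof is a syntactic encoding married to the fact that bisimulation quotients are computable in polynomial time. The one point that genuinely needs care is keeping all complexities aligned \emph{relative to the size of the input graph}: decoding $G$ back to $w$ must be cheap and must not blow up the instance, which is exactly why one uses a rigid, linear-size path encoding (so that $|w|\le|G|$) rather than, say, a gadget that marks positions by attached paths and incurs a quadratic blow-up. This is also where the method runs out: deciding bisimilarity --- equivalently, computing the quotient --- is hard for \PTIME, so the decoding step cannot be performed below \PTIME; that is why the theorem connects \bisim{\PTIME} only to classes above \PTIME, and why an analogous equivalence with \bisim{\NLOGSPACE} is not obtained in this way.
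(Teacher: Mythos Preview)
Your proof is correct but follows a different route from the paper's. The paper exploits the fact that \NPTIME (resp.\ \PSPACE) already contains \emph{natural} bisimulation-invariant complete problems, namely the non-universality problem \OneNonUnivNFA{} (resp.\ \TwoNonUnivNFA{}) for NFA over a one-letter (resp.\ two-letter) alphabet: given such a $Q\in\bisim{\NPTIME}$ that is \NPTIME-hard, the assumption \PTIME$\ne$\NPTIME gives $Q\notin\PTIME\supseteq\bisim{\PTIME}$ immediately, and the separation follows in one line. You instead start from an \emph{arbitrary} $L\in\NPTIME\setminus\PTIME$ and manufacture a bisimulation-invariant version of it via a rigid path encoding together with the polynomial-time computability of bisimulation quotients. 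Your argument is more self-contained---it does not presuppose knowledge of any particular hard bisimulation-invariant problem---and it makes explicit a general transfer principle from separations above \PTIME to their bisimulation-invariant fragments; your closing remark on why the method does not reach below \PTIME is a nice bonus. The paper's argument, on the other hand, is shorter and, more importantly for its purposes, singles out the concrete problems \OneNonUnivNFA{} and \TwoNonUnivNFA{} that drive the entire development in Sect.~\ref{sec:separate}.
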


\begin{proof}
The ``only if''-direction in each case is trivial. The ``if''-direction is given by the fact that both \NPTIME and \PSPACE 
contain hard problems that are bisimulation-invariant, for example the non-universality problem for NFA over a one-letter, resp.\ 
two-letter alphabet (\OneNonUnivNFA{} / \TwoNonUnivNFA{}). I.e.\ we have $\OneNonUnivNFA{} \in \bisim{\NPTIME}$. Suppose that 
\PTIME{$\ne$}\NPTIME. By \NPTIME-hardness of \OneNonUnivNFA{}, we then have $\OneNonUnivNFA{} \not\in \PTIME$. Since 
\bisim{\PTIME}{$\subseteq$}\PTIME, we also have $\OneNonUnivNFA{} \not\in \bisim{\PTIME}$ and therefore 
\bisim{\PTIME}{$\ne$}\bisim{\NPTIME}. The reasoning for (b) is the same using \TwoNonUnivNFA{} instead. 
\end{proof}

So, not only would a separation of \bisim{\PTIME} from \bisim{\NPTIME}, resp.\ \bisim{\PSPACE} not suffer from the order problem
and lift to a separation of the respective non-bisimulation-invariant classes, but there is also a third reason for studying
the relationships on the bisimulation-invariant side: bisimulation-invariance reduces the question of definability of a set of
graphs in a logic (or any other formalism) to that of definability of a \emph{tree language}. This is a simple consequence of
the well-known fact that graphs and their tree unfoldings are bisimilar. Hence, in order to show that the set of graphs with 
some particular property is \emph{not} definable in a logic that is bisimulation-invariant, it suffices to show this for the
set of (unranked) trees with that property. And this allows machinery from formal (tree) language theory to be employed for
such tasks.

The rest of the paper is organised as follows. Sect.~\ref{sec:prel} introduces the necessary technical preliminaries: labeled graphs
and trees, bisimulations, the polyadic $\mu$-calculus, etc. It also recalls Otto's Theorem and a product construction which allows
the question of definability in \polymucalc{} to be reduced to that of definability in \mucalc{} within the set of so-called 
power graphs, leading to the notion of relative regularity introduced in Sect.~\ref{sec:relreg}.
Sect.~\ref{sec:char} then studies the structure of such power graphs in terms of languages of unranked, infinite trees.
Sect.~\ref{sec:separate} picks up on the two problems \OneNonUnivNFA{} and \TwoNonUnivNFA{} mentioned in the proof of Thm.~\ref{thm:bisseparate}
and characterises tree unfoldings of the respective power graphs. Together with the characterisation of power graphs from 
Sect.~\ref{sec:char} we then obtain two families $(L_d)_{d \ge 1}$ of languages of unranked, infinite trees such that their
non-regularity, relative to the language of tree unfoldings of $d$-powers of graphs, for all $d \ge 1$ is a witness for 
$\bisim{\PTIME} \ne \bisim{\NPTIME}$, resp.\ $\bisim{\PTIME} \ne \bisim{\PSPACE}$. The final step to such a separation along
these lines -- a formal proof of relative non-regularity of these tree languages -- will involve some difficult combinatorial 
arguments, though, and is therefore und unfortunately beyond the scope of this paper.

\section{Preliminaries}
\label{sec:prel}

We write $[n]$ for the set $\{0,\dotsc,n-1\}$.

\subsection{Graphs, Trees, Bisimulations}

Let $\Cols$ be a finite, non-empty set of \emph{colours}, and $\Sigma$ be a finite, non-empty \emph{alphabet}.\footnote{In fact, 
it would suffice to restrict ourselves to cases where $1 \le |\Sigma| \le 2$.}
A \emph{$(\Sigma,\Cols)$-coloured, rooted graph} (or just \emph{$(\Sigma,\Cols)$-graph} for short) is a 
$\G = (V,\Transition{}{}{},L,v_I)$ such that 
$V$ is a set of nodes with a designated \emph{root} node $v_I \in V$. The relation 
$\Transition{}{}{} \subseteq V \times \Sigma \times V$ contains 
directed edges that are labeled with a symbol from $\Sigma$. We write $\Transition{v}{a}{v'}$ instead of 
$(v,a,v') \in \Transition{}{}{}$. 
The function $L : V \to 2^{\Cols}$ assigns a set of colours to each 
node. Hence, each edge has a unique colour, but nodes can have an arbitrary number of colours. When the names of colours and alphabet
symbols are irrelevant we may also simply speak of $(k_1,k_2)$-graphs for $k_1 = |\Sigma|$ and $k_2 = |\Cols|$.


A graph with a root node $v_I$, in which every node $v$ is reachable from $v_I$ on a unique path, is a tree. 
A \emph{$(\Sigma,\Cols)$-tree} is a $(\Sigma,\Cols)$-graph that happens to be a tree.

A \emph{bisimulation} between $(\Sigma,\Cols)$-graphs $\G = (V,\Transition{}{}{},L, v_I)$ and $\G' = (V',\Transition{}{}{},L',v'_I)$ is a 
binary relation $R \subseteq V \times V'$ such that for all $(u,v) \in R$ we have
\begin{itemize}
\item \Prop $L(u) = L'(v)$,
\item \Forth for all $a \in \Sigma$ and all $u' \in V$ such that $\Transition{u}{a}{u'}$ there is $v' \in V'$ such that 
      $\Transition{v}{a}{v'}$ and $(u',v') \in R$,  
\item \Back for all $a \in \Sigma$ and all $v' \in V'$ such that $\Transition{v}{a}{v'}$ there is $u' \in V$ such that 
      $\Transition{u}{a}{u'}$ and $(u',v') \in R$.  
\end{itemize}
Nodes $u,v$ are said to be \emph{bisimilar}, written $u \sim v$, if there is a bisimulation $R$ such that $(u,v) \in R$. 
Two $(\Sigma,\Cols)$-graphs $\G,\G'$ with root nodes $v_I$ and $v'_I$ are bisimilar, written $\G \sim \G'$, if $v_I \sim v'_I$.
A set $\mathfrak{G}$ of $(\Sigma,\Cols)$-graphs is said to be \emph{bisimulation-invariant} if for all $\G \in \mathfrak{G}$ 
and all $\G'$ such that $\G \sim \G'$ we have $\G' \in \mathfrak{G}$.

The \emph{unfolding} of a $(\Sigma,\Cols)$-graph $\G = (V,\Transition{}{}{},L,v_I)$ is the $(\Sigma,\Cols)$-tree 
$\G^{\mathsf{unf}} = (V^+,\Transition{}{}{},L^+,(v_I))$ 
where $V^+$ is the least set of finite, non-empty sequences over $V$ that contains $(v_I)$ and is closed under $\Transition{}{}{}$ 
in the following sense. If $(v_0,\ldots,v_n) \in V^+$ then for all $a \in \Sigma$ and all $v' \in V$ such that 
$\Transition{v_n}{a}{v'}$ in $\G$ we have $\Transition{(v_0,\ldots,v_n)}{a}{(v_0,\ldots,v_n,v')}$ in $\G^{\mathsf{unf}}$. 
At last, $L^+(v_0,\ldots,v_n) := L(v_n)$.

It is well-known that bisimulations cannot distinguish graphs from their tree unfoldings.

\begin{proposition}
\label{prop:treeunfoldbisim}
Let $\Sigma,\Cols$ be given and $\G$ be a $(\Sigma,\Cols)$-graph. Then $\G \sim \G^{\mathsf{unf}}$. 
\end{proposition}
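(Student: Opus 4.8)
The plan is to exhibit an explicit bisimulation between $\G$ and its unfolding $\G^{\mathsf{unf}}$ and to check the three clauses of the definition directly. Write $\G = (V, \Transition{}{}{}, L, v_I)$ and $\G^{\mathsf{unf}} = (V^+, \Transition{}{}{}, L^+, (v_I))$. The natural candidate is the relation
\[ R \;=\; \{\, ((v_0,\ldots,v_n),\, v_n) \;:\; (v_0,\ldots,v_n) \in V^+ \,\} \;\subseteq\; V^+ \times V, \]
which sends every node of the unfolding --- a finite path in $\G$ starting at $v_I$ --- to its last vertex. Since $(v_I) \in V^+$, we have $((v_I), v_I) \in R$, so $R$ relates the two roots; it therefore remains to verify that $R$ is a bisimulation.

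For \Prop, if $((v_0,\ldots,v_n), v_n) \in R$ then $L^+(v_0,\ldots,v_n) = L(v_n)$ holds by the very definition of $L^+$. For \Forth, suppose $\Transition{(v_0,\ldots,v_n)}{a}{w}$ in $\G^{\mathsf{unf}}$; by the definition of the unfolding the only outgoing $a$-edges of $(v_0,\ldots,v_n)$ lead to nodes of the form $(v_0,\ldots,v_n,v')$ with $\Transition{v_n}{a}{v'}$ in $\G$, so $w = (v_0,\ldots,v_n,v')$ for such a $v'$, and $v'$ is the required witness: $\Transition{v_n}{a}{v'}$ and $(w, v') \in R$. For \Back, suppose $\Transition{v_n}{a}{v'}$ in $\G$; since $V^+$ is closed under $\Transition{}{}{}$, the sequence $(v_0,\ldots,v_n,v')$ lies in $V^+$ and $\Transition{(v_0,\ldots,v_n)}{a}{(v_0,\ldots,v_n,v')}$ in $\G^{\mathsf{unf}}$, with $((v_0,\ldots,v_n,v'), v') \in R$. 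Hence $R$ is a bisimulation witnessing $\G^{\mathsf{unf}} \sim \G$, and since the converse of a bisimulation is again a bisimulation (the clauses \Forth and \Back are mirror images of one another), $\sim$ is symmetric, giving $\G \sim \G^{\mathsf{unf}}$.

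There is essentially no obstacle here: the statement is a textbook fact and the verification is routine. The only point that deserves a moment's care is that $R$ is defined on \emph{all} of $V^+$, which is exactly guaranteed by the inductive characterisation of $V^+$ as the least set of sequences containing $(v_I)$ and closed under $\Transition{}{}{}$; this is what makes the \Forth clause exhaustive (every unfolding-edge is of the stated form) and the \Back clause always realisable (every $\G$-successor of $v_n$ gives rise to a corresponding unfolding-edge out of $(v_0,\ldots,v_n)$).
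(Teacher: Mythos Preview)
Your proof is correct and is the standard argument; the paper itself does not give a proof of this proposition at all, merely stating it as well-known. Your explicit choice of $R$ as the ``last vertex'' relation and the routine verification of \Prop, \Forth and \Back is exactly what one would expect.
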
 

For a set $\mathfrak{G}$ of $(\Sigma,\Cols)$-graphs we write 
$\mathfrak{G}^{\mathsf{unf}} := \{ \G^{\mathsf{unf}} \mid \G \in \mathfrak{G} \}$ for the set of $(\Sigma,\Cols)$-trees
that are unfoldings of graphs in $\mathfrak{G}$.

\subsection{The Polyadic $\mu$-Calculus}

The polyadic modal $\mu$-calculus is interpreted over $(\Sigma,\Cols)$-graphs. It can be seen as a second-order (modal) logic
whose formulas define properties of tuples of nodes of some fixed arity $d$. 
Formally, the syntax of
the \emph{$d$-ary} modal $\mu$-calculus $\mucalc{d}$ is the following for given $\Sigma$ and $\Cols$ as above. 
\begin{displaymath}
\varphi \enspace := \enspace c_i \mid X \mid \varphi \wedge \varphi \mid \neg\varphi \mid \mudiam{a}{i}\varphi \mid \mu X.\varphi \mid 
           \sigma\varphi
\end{displaymath}
where $c \in \Cols$, $i \in [d] := \{0,\ldots,d-1\}$, $a \in \Sigma$, $\sigma: [d] \to [d]$ and $X$ is taken from some countably infinite set 
of variables $\Vars$. The polyadic $\mu$-calculus is 
$\polymucalc := \bigcup_{d \ge 1} \mucalc{d}$. The intuition for these formulas is that they talk
about $d$-tuples. $c_i$ expresses that the $i$-th component of such a tuple satisfies $c$ and
$\mudiam{a}{i}$ is the standard modal diamond for the $i$-th component of such a tuple. 
The \emph{replacement operator} $\sigma \varphi$, with $\sigma \colon [d] \to [d]$ being
a not necessarily injective mapping, 
expresses that the tuple obtained by re-arranging the current tuple according to $\sigma$
satisfies $\varphi$.

As is routinely done with the ordinary modal $\mu$-calculus, we 
assume that no two distinct fixpoint subformulas $\mu X.\psi$ of some formula $\varphi$ both use the same fixpoint variable
$X$. We also need to demand that each fixpoint variable $X$ occurs under an even number of negations in its then uniquely
defining fixpoint subformula $\mu X.\psi$.

Further Boolean, modal and fixpoint operators are introduced in the standard way: 
$\nu X.\varphi := \neg \mu X.\neg\varphi[\neg X/X]$, $\mubox{a}{i} := \neg \mudiam{a}{i}\neg\varphi$, 
$\varphi \vee \psi := \neg(\neg\varphi \wedge \neg\psi)$, $\mutrue := c_0 \vee \neg c_0$ for some $c$, etc. 
We will also use intuitive notation for functions $\sigma$ in the replacement operator. For instance, the
replacement of $i$ by $j$, leaving all other indices unchanged, will simply be denoted as $\repl{i}{j}$.

Given a $(\Sigma,\Cols)$-graph $\G = (V,\Transition{}{}{},L,v_I)$ and an assignment $\theta: \Vars \to 2^{V^d}$ of (fixpoint) 
variables to sets of $d$-tuples of nodes, a formula $\varphi$ defines such a set $\sem{\varphi}{\G}{\vartheta}$ in the 
following way. We write $\bar{v}_i$ to denote the $i$-th component of the $d$-tuple $\bar{v}$, i.e.\ 
$\bar{v} = (\bar{v}_0,\ldots,\bar{v}_{d-1})$. We also write $\bar{v}[i \leftarrow u]$ for the tuple that results from
$\bar{v}$ by replacing its $i$-th component with the node $u$.
\begin{align*}
\sem{c_i}{\G}{\vartheta} &:= \{ \bar{v} \in V^d \mid c \in L(\bar{v}_i) \} \\ 
\sem{X}{\G}{\vartheta} &:= \vartheta(X) \\
\sem{\varphi \wedge \psi}{\G}{\vartheta} &:= \sem{\varphi}{\G}{\vartheta} \cap \sem{\psi}{\G}{\vartheta} \\ 
\sem{\neg\varphi}{\G}{\vartheta} &:= V^d \setminus \sem{\varphi}{\G}{\vartheta} \\ 
\sem{\mudiam{a}{i}\varphi}{\G}{\vartheta} &:= \{ \bar{v} \in V^d \mid \exists u \in V \text{ s.t. } \Transition{\bar{v}_i}{a}{u}
  \text{ and } \\ &\hspace*{2.5cm} \bar{v}[i \leftarrow u] \in \sem{\varphi}{\G}{\vartheta} \} \\
\sem{\mu X.\varphi}{\G}{\vartheta} &:= \bigcap \{ T \subseteq V^d \mid \sem{\varphi}{\G}{\vartheta[X \mapsto T]} \subseteq T \}  \\
\sem{\sigma\varphi}{\G}{\vartheta} &:= \{ \bar{v} \mid 
   (\bar{v}_{\sigma(0)},\ldots,\bar{v}_{\sigma(d-1)}) \in \sem{\varphi}{\G}{\vartheta} \}
\end{align*}
where $\vartheta[X \mapsto T]$ denotes the variable assignment that maps $X$ to $T$ and any other variable $Y$ to 
$\vartheta(Y)$. 

As usual, variable assignments only need to be given for the free variables in a formula, and for formulas without
free (second-order) variables $X$ we can just write $\sem{\varphi}{\G}{}$ instead of $\sem{\varphi}{\G}{\vartheta}$ for
an arbitrary $\vartheta$. A set $\mathfrak{G}$ of $(\Sigma,\Cols)$-graphs is then said to be \emph{$\mucalc{d}$-definable}, if 
there is a closed formula $\varphi \in \mucalc{d}$ such that 
$\mathfrak{G} = L(\varphi) := \{ \G = (V,\Transition{}{}{},L,v_I) \mid (v_I,\ldots,v_I) \in \sem{\varphi}{\G}{} \}$. 
Two formulas $\varphi,\psi$ are \emph{equivalent}, written $\varphi \equiv \psi$, if $L(\varphi) = L(\psi)$.

Note that $\mucalc{1}$ is just the ordinary modal $\mu$-calculus whose formulas define monadic predicates in underlying 
graphs. In particular, when $d=1$ then $\sigma\varphi \equiv \varphi$, as the only possible replacement $\sigma$ of 
type $[1] \to [1]$ is the identity function. We therefore also write $\mucalc{}$ instead of $\mucalc{1}$. 

As with the ordinary $\mu$-calculus, there is also a game-theoretic semantics for $\polymucalc$ that is equivalent 
to the denotational semantics via the Knaster-Tarski Theorem, cf.\ \cite{Kna28,Tars55}, given above. It is well-known
that the semantics for \mucalc{} can also be given via \emph{parity games}, cf.\ \cite{Stirling95}, and this is
the case for each $\mucalc{d}$ for $d \ge 1$ as well. In these games, the players push $d+1$ tokens around; one
token on the subformula graph of a given formula, and $d$ tokens on the underlying graph representing a $d$-tuple
of nodes. The moves are straight-forward, for instance when the formula token is on a subformula of the form 
$\mudiam{a}{i}\psi$, then the existential player moves the $i$-th node token along an $a$-edge to a successor, 
and the formula token gets moved to $\psi$. The winner of infinite plays is determined by the fixpoint type of the
outermost fixpoint variable occurring infinitely often. For details, we refer to \cite{Lange:FICS15}. Here, a vague
intuitive notion of this game-theoretic semantics is sufficient in order to aid the understanding of the meaning
of given $\polymucalc$-formulas.
 

We call a formula of $\mucalc{d}$ \emph{$i$-rooted} for $i \in [d]$, if for all of its subformulas of the form $c_j$ or 
$\mudiam{a}{j}\psi$ we have $j \ne i$, and for all of its subformulas $\sigma\psi$ we have $\sigma = \repl{j}{i}$ for
some $j \ne i$. Intuitively, in an evaluation of an $i$-rooted formula in a tuple $(v_0,\ldots,v_{d-1})$, the $i$-th
component is only ever used in order to reset another component to the root of the underlying graph.

We make use of two established theorems about the polyadic $\mu$-calculus. The first one -- Otto's Theorem -- relates the 
notion of definability in $\polymucalc$ with that of recognisability in polynomial time. As usual, a set $\mathfrak{G}$ of 
$(\Sigma,\Cols)$-graphs is said to be \emph{polynomial-time recognisable} if there is a polynomial-time algorithm that,
given an arbitrary $(\Sigma,\Cols)$-graph, correctly determines its membership in $\mathfrak{G}$. For the worst-case
running time estimations, we assume that graphs are represented finitely either via standard adjacency lists or matrices.
Infinite graphs are allowed for as long as they can be represented finitely via explicit back-edges. From now on, we 
restrict our attention to finitely representable graphs. Note that tree unfoldings of finite graphs are finitely 
representable, and so are the $d$-dimensional products of finitely representable graphs introduced formally below.
 
\begin{proposition}[Otto's Theorem \cite{Otto/99b}]
\label{prop:otto}
Let $\Sigma,\Cols$ be given and $\mathfrak{G}$ be a set of (finitely representable) $(\Sigma,\Cols)$-graphs. The 
following are equivalent.
\begin{enumerate}
\item[a)] $\mathfrak{G}$ is recognisable in polynomial time and bisimulation-invariant.
\item[b)] $\mathfrak{G}$ is definable by a $d$-rooted formula of $\mucalc{d+1}$ for some $d \ge 1$.  
\end{enumerate}
\end{proposition}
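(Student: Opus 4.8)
The plan is to prove the two implications separately; the direction (b)$\Rightarrow$(a) is routine, while (a)$\Rightarrow$(b) carries all the weight. For (b)$\Rightarrow$(a), suppose $\mathfrak{G} = L(\varphi)$ for a $d$-rooted $\varphi \in \mucalc{d+1}$. Bisimulation-invariance is inherited from $\polymucalc$ in general: a bisimulation between $\G$ and $\G'$ induces one between their $(d{+}1)$-fold products, and a standard induction on $\varphi$, with the usual fixpoint argument for $\mu$ and $\nu$, shows that bisimilar tuples agree on every formula; in particular $(v_I,\ldots,v_I) \in \sem{\varphi}{\G}{}$ iff $(v'_I,\ldots,v'_I) \in \sem{\varphi}{\G'}{}$. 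For the complexity bound one invokes the product construction recalled below: evaluating $\varphi$ over $\G$ amounts to evaluating an ordinary $\mucalc{}$-formula over the $(d{+}1)$-th power of $\G$, which has size $|\G|^{d+1}$ and is thus polynomial since $d$ is fixed by $\varphi$. As $\varphi$ is fixed its alternation depth is a constant, so the associated parity game has a bounded number of priorities and is solvable in polynomial time; hence $\mathfrak{G}$ is polynomial-time recognisable.

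For (a)$\Rightarrow$(b), let $\mathfrak{G}$ be bisimulation-invariant and decided by a polynomial-time algorithm $M$. I would route the argument through \emph{ordered} structures via the Immerman--Vardi Theorem (FO+LFP captures \PTIME on finite ordered structures). Since restricting a graph to the part reachable from $v_I$ yields a bisimilar graph, we may assume every node is reachable from the root; and since $\G$ is bisimilar to its bisimulation quotient $\bisim{\G}$, membership in $\mathfrak{G}$ depends only on $\bisim{\G}$. The crucial ingredient -- and the main obstacle -- is that $\polymucalc$ can define, inside any finite graph $\G$, a linear preorder $\preceq$ on $V$ whose induced equivalence is exactly bisimilarity. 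As a warm-up one writes bisimilarity itself as a greatest-fixpoint formula of $\mucalc{2}$, e.g.
\[
  \mathrm{bisim} \;:=\; \nu Z.\;\Bigl(\bigwedge_{c\in\Cols}(c_0 \leftrightarrow c_1)\Bigr) \wedge \bigwedge_{a\in\Sigma}\bigl(\mubox{a}{0}\mudiam{a}{1}Z \,\wedge\, \mubox{a}{1}\mudiam{a}{0}Z\bigr),
\]
and then obtains $\preceq$ by a fixpoint recursion that orders the nodes by their \emph{bisimulation signatures}: once the classes seen so far have been linearly ordered, a node's signature is its colour set together with the set of pairs $(a,\text{class})$ of its outgoing edges, and signatures are compared lexicographically; on a finite graph this recursion stabilises after $|V|$ rounds and unfolds into a $\polymucalc$-formula of fixed arity defining $\preceq$ over $V \times V$.

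With $\preceq$ in hand, the quotient $\bisim{\G}$ carries a canonical linear order, and as a finite relational structure over the vocabulary $\{<\} \cup \{c : c \in \Cols\} \cup \{E_a : a \in \Sigma\}$ it is ordered; the query ``$(\G,<) \mapsto$ [does $M$ accept $\G$?]'' is well-defined on finite ordered structures, is order-invariant (the answer ignores $<$), and is computable in polynomial time (build the quotient, then run $M$). By Immerman--Vardi it is defined by an FO+LFP sentence $\psi$. The final step translates $\psi$ into a $d$-rooted formula of $\mucalc{d+1}$ over $\G$ itself by \emph{recomputing first-order variables from the root} rather than carrying them around: one component is reserved as a permanent root anchor, which -- since evaluation starts at $(v_I,\ldots,v_I)$ -- never leaves $v_I$; $\exists x_i.\chi$ becomes a reset of component $i$ to that anchor followed by the reachability modality $\mu Y.(\chi \vee \bigvee_{a\in\Sigma}\mudiam{a}{i}Y)$ on component $i$, which is sound precisely because every node is reachable from the root; $c(x_i)$ becomes $c_i$; $x_i = x_j$ and $E_a(x_i,x_j)$ are rendered through $\mathrm{bisim}$ (for the edge atom, $\mudiam{a}{i}$ followed by the bisimilarity test of components $i$ and $j$); $<$ becomes the strict part of $\preceq$; and the LFP operators of $\psi$ become $\mu$ operators. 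If $\psi$ uses $d$ first-order variables the result lies in $\mucalc{d+1}$; its only replacement operators are resets to the anchor, which is exactly the shape the $d$-rooted restriction permits; and $L$ of this formula equals $\mathfrak{G}$ by construction. Beyond the definable-order lemma already flagged, the delicate points here are to organise the translation so that it never needs a replacement other than reset-to-root -- Otto's observation is that all navigation can be carried out by teleporting a token to the root and re-traversing the graph -- and to track arities so that the final formula lands precisely in $\mucalc{d+1}$ with component $d$ as the anchor.
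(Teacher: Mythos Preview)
The paper does not prove this proposition at all; it is quoted as an established result from Otto's work, with only the one-sentence remark that ``the restriction on $d$-rootedness is seen by a close inspection of the theorem's proof where the additional $(d{+}1)$-th index is only used to mark the root of an underlying graph.'' There is therefore no proof in the paper to compare against.

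That said, your sketch is in fact Otto's original argument, and the architecture is right: (b)$\Rightarrow$(a) via the product construction and fixed-priority parity games; (a)$\Rightarrow$(b) via definability of a linear preorder on bisimulation classes in $\polymucalc$, then Immerman--Vardi on the ordered quotient, then a syntactic translation of FO+LFP back into $\polymucalc$ with a dedicated root-anchor component. You correctly identify the definable-order lemma as the load-bearing step and describe it accurately (iterated refinement by signatures). One concrete wrinkle: your rendering of $E_a(x_i,x_j)$ as ``$\mudiam{a}{i}$ followed by the bisimilarity test of components $i$ and $j$'' overwrites component $i$, so a conjunction like $E_a(x_i,x_j)\wedge c(x_i)$ would be mistranslated. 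The fix is to use a spare component --- reset it to the anchor, navigate it, test bisimilarity with $i$ to locate the right class, and perform the edge test from the spare --- which is still $d$-rooted but bumps the arity by a constant. This only affects your claim that $d$ FO-variables land you in $\mucalc{d+1}$ on the nose; since the statement asks for \emph{some} $d\ge 1$, nothing breaks.
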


The restriction on $d$-rootedness is seen by a close inspection of the theorem's proof where the additional $(d+1)$-th index 
is only used to mark the root of an underlying graph. An immediate consequence of Otto's
Theorem is therefore that every $\polymucalc$-formula is equivalent to a $d$-rooted $\polymucalc$ formula (of possibly higher
dimensionality). We can therefore assume, henceforth, that all formulas of any fragment $\mucalc{d+1}$ of $\polymucalc$ are 
$d$-rooted.

The second result about $\polymucalc$ that we rely on, concerns its model checking problem. Lange and Lozes \cite{LL-FICS12} 
have shown, based on Andersen's initial considerations \cite{AndersenPMC:1994}, that the model checking problem for  
$\polymucalc$ can be reduced to the model checking problem for $\mucalc{}$ using a particular product construction.

\subsection{Power Graphs}

For any $d \ge 1$ we consider the signature functor $\Func{d}$ that turns a pair $(\Sigma,\Cols)$ of sets of actions and
colours into the pair of sets 
\begin{displaymath}
\Func{d}(\Sigma,\Cols) := \big((\Sigma \times [d]) \cup \{ \reset{i} \mid 0 \le i < d \}, \Cols \times [d]\big)\ .
\end{displaymath}
We write an action $(a,i)$ simply as $a_i$ and, likewise, a colour $(c,i)$ simply as $c_i$.
  
\begin{definition}
Let $d \ge 1$. The \emph{$d$-(dimensional) product} of $d$ many $(\Sigma,\Cols)$-graphs 
$\G_i = (V_i,\TransitionD{}{}{}{i},L_i,v^i_I)$ with $i \in [d]$ is the 
$\Func{d}(\Sigma,\Cols)$-graph $\prod_{i=0}^{d-1} \G_i := (V_0 \times \dotsb \times V_{d-1},\Transition{}{}{},L,(v^0_I,\dotsc,v^{d-1}_I))$ where 
\begin{itemize}
\item $\Transition{(v_0,\ldots,v_{d-1})}{a_i}{(v'_0,\ldots,v'_{d-1})}$ if $\TransitionD{v_i}{a}{v'_i}{i}$ 
      and $v'_j = v_j$ for all $j \ne i$,
\item $\Transition{(v_0,\ldots,v_{d-1})}{\reset{i}}{(v'_0,\ldots,v'_{d-1})}$ if $v'_i = v^i_I$
      and $v'_j = v_j$ for all $j \ne i$,
\item $c_i \in L(v_0,\ldots,v_{d-1})$ if $c \in L^i(v_i)$
\end{itemize}
for $a \in \Sigma$, $c \in \Cols$ and $i \in [d]$.
\end{definition}

Note that $\prod_{i=0}^{d-1} \G_i$ can be seen as the asynchronous product of the $d$ graphs $\G_0, \dotsc, G_{d-1}$ where
action $a_i$ denotes an $a$-move in the $i$-th component with all other components remaining unchanged - hence the \emph{asynchronous}
product. Moreover, this product is enriched with additional transitions $\reset{i}$ that reflect the \emph{resetting} of 
the $i$-th component in a tuple to the root node of the underlying $i$-th graph. A tuple node in the product retains all
information about the colours of its components, encoded in the enlarged colour space. 

In the special case of $\G_0 = \dotsc = \G_{d-1} =: \G$ we call the $d$-product also the $d$-th \emph{power} of
$\G$ and write $\G^d$ instead of $\prod_{i=0}^{d-1} \G$. We write $\textsc{Power}_{d}$ for the language
of all $\Func{d}(\Sigma,\Cols)$-graphs that are bisimilar to a $d$-th power of some $(\Sigma, \Cols)$-graph
$\G$.

\begin{figure}[t]
\begin{center}
\begin{tikzpicture}[node distance=18mm, label distance=-1pt]

  \node[tnode] (00)               {\fnone{0}{0}}; 
  \node[tnode] (01) [right of=00] {\fone{0}{1}};  
  \node[tnode] (02) [right of=01] {\fnone{0}{2}}; 
  \node[tnode] (10) [below of=00] {\fzero{1}{0}}; 
  \node[tnode] (11) [right of=10] {\fboth{1}{1}}; 
  \node[tnode] (12) [right of=11] {\fzero{1}{2}}; 
  \node[tnode] (20) [below of=10] {\fnone{2}{0}}; 
  \node[tnode] (21) [right of=20] {\fone{2}{1}};  
  \node[tnode] (22) [right of=21] {\fnone{2}{2}}; 
  
  \foreach \y in {0,1,2} {
  \path[-Latex,semithick] (0\y) edge [a0, bend left=20] (1\y)
                          (1\y) edge [a0, bend left=20] (2\y)
                          (2\y) edge [a0, bend left=20] (1\y);
    }
  \foreach \x in {0,1,2} {
  \path[-Latex,semithick] (\x0) edge [a1, bend left=20] (\x1)
                          (\x1) edge [a1, bend left=20] (\x2)
                          (\x2) edge [a1, bend left=20] (\x1);
    }
  \path[semithick,rst0,-Latex] (10) edge [bend left=20] (00);
  \path[semithick,rst0,-Latex,rounded corners,draw] (20) -- ++(-.65,0) |- (00);
  \foreach \y in {1,2} {
    \path[semithick,rst0,-Latex] (1\y) edge [bend left=20] (0\y);
    \path[semithick,rst0,-Latex,rounded corners,draw] (2\y) -- ++(.5,0) |- (0\y);
  }
  \path[semithick,rst1,-Latex] (01) edge [bend left=20] (00);
  \path[semithick,rst1,-Latex,rounded corners,draw] (02) -- ++(0,.6) -| (00);
  \foreach \x in {1,2} {
    \path[semithick,rst1,-Latex] (\x1) edge [bend left=20] (\x0);
    \path[semithick,rst1,-Latex,rounded corners,draw] (\x2) -- ++(0,-.5) -| (\x0);
  }
  \path[semithick,rst0,-Latex] (00) edge [out=30,in=60,looseness=12]   (00)
                               (01) edge [out=70,in=110,looseness=15]  (01)
                               (02) edge [out=30,in=60,looseness=12]   (02);
  \path[semithick,rst1,-Latex] (00) edge [out=230,in=200,looseness=12] (00)
                               (10) edge [out=160,in=200,looseness=7]  (10)
                               (20) edge [out=230,in=200,looseness=12] (20);
  
  \path[Latex-] (00) edge ++(-.4,.4);

  \matrix[row sep=3mm,anchor=north] (legend) at (6,.4) {
    \node (l1) {$=$}; \\
    \node (l2) {$=$}; \\
    \node (l3) {$=$}; \\
    \node (l4) {$=$}; \\
    \node (l5) {$=$}; \\
    \node (l6) {$=$}; \\
    \node (l7) {$=$}; \\
  };
  
  \path[draw,semithick,a0,Latex-,shorten <=2mm]  (l1) -- ++(-1,0);
  \path[draw,-Latex,shorten <=2mm]  (l1) -- node [above=-1pt,pos=.6] {\scriptsize $a_0$} ++(1,0);

  \path[draw,semithick,a1,Latex-,shorten <=2mm] (l2) -- ++(-1,0);
  \path[draw,-Latex,shorten <=2mm]  (l2) -- node [above=-1pt,pos=.6] {\scriptsize $a_1$} ++(1,0);

  \path[draw,semithick,rst0,Latex-,shorten <=2mm]  (l3) -- ++(-1,0);
  \path[draw,-Latex,shorten <=2mm]  (l3) -- node [above=-1pt,pos=.6] {\scriptsize $\reset{0}$} ++(1,0);

  \path[draw,semithick,rst1,Latex-,shorten <=2mm] (l4) -- ++(-1,0);
  \path[draw,-Latex,shorten <=2mm]  (l4) -- node [above=-1pt,pos=.6] {\scriptsize $\reset{1}$} ++(1,0);

  \path (l5) -- +(-.7,0) node[tnode] {\fzero{}{}} -- +(.6,0) node[state,label=0:{\scriptsize $f_0$}] {};

  \path (l6) -- +(-.7,0) node[tnode] {\fone{}{}} -- +(.6,0) node[state,label=0:{\scriptsize $f_1$}] {};

  \path (l7) -- +(-.7,0) node[tnode] {\fboth{}{}} -- +(.6,0) node[state,label=0:{\small $f_0 \atop f_1$}] {};
  
\end{tikzpicture}
\end{center}
\caption{$\Func{2}(\{a\},\{f\})$-graph that is the $2$-power of the $(\{a\},\{f\})$-graph in Ex.~\ref{exm:power}.}
\label{fig:power}
\end{figure}

\vspace*{-3mm}
\begin{example}
\label{exm:power}
Consider the $(\{a\},\{f\})$-graph
\begin{tikzpicture}[every state/.style={font=\scriptsize,inner sep=2pt,minimum size=2mm}]
  \node[state]                            (0)              {$0$};
  \node[state,label=90:{\scriptsize $f$}] (1) [right of=0] {$1$};
  \node[state]                            (2) [right of=1] {$2$};
  
  \path[-Latex] (0) edge [bend left=20] (1)
                (1) edge [bend left=20] (2)
                (2) edge [bend left=20] (1);
                
  \draw[Latex-] (0) -- ++(-.5,0);
\end{tikzpicture}
where the edge labels are omitted for brevity since they necessarily all are `$a$'. It's 2nd power $\G^2$ is
depicted in Fig.~\ref{fig:power}.
\end{example}

This product construction facilitates one half of a reduction of the model checking problem for $\polymucalc$ to that
of $\mucalc{}$. The other half is given by a simple syntactic transformation on formulas. From a $d$-rooted 
$\mucalc{d+1}$-formula $\varphi$ over $(\Sigma,\Cols)$ we obtain its \emph{monofication} $\mono{\varphi}$ as the 
$\mucalc{}$-formula over $\Func{d}(\Sigma,\Cols)$ that is obtained by  
\begin{itemize}
\item regarding every atomic subformula $c_i$, i.e.\ colour $c \in \Cols$ indexed by some dimension $i \in [d]$, as the atomic 
      subformula $c_i$ from $\Cols \times [d]$, and 
\item successively replacing every operator $\mudiam{a}{i}$ by $\mudiam{a_i}{}$, and every operator $\repl{i}{d}$ by 
      $\mudiam{\reset{i}}{}$.
\end{itemize} 

\begin{proposition}[\cite{AndersenPMC:1994,LL-FICS12}]
\label{prop:langelozes}
Let $\Sigma,\Cols$ be given, $d \ge 1$, $\varphi \in \mucalc{d+1}$ be closed and $d$-rooted, and $\G$ be a 
$(\Sigma,\Cols)$-graph with root $v_I$. Then $(v_I,\ldots,v_I) \in \sem{\varphi}{\G}{}$ iff 
$(v_I,\ldots,v_I) \in \sem{\mono{\varphi}}{\G^d}{}$. 
\end{proposition}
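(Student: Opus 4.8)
The plan is to prove, by structural induction on the subformulas of $\varphi$, a statement relating the $(d+1)$-ary semantics over $\G$ with the monadic semantics over $\G^d$, and then to read off the proposition as the special case $\psi=\varphi$, $\bar v=(v_I,\dots,v_I)$: there the assignment is irrelevant ($\varphi$ is closed), and the projection that deletes the last coordinate sends $(v_I,\dots,v_I)\in V^{d+1}$ to the root $(v_I,\dots,v_I)\in V^d$ of $\G^d$. The guiding observation is that $d$-rootedness forces the last ($d$-th) component of a tuple to behave like the constant $v_I$ throughout this evaluation: no colour atom $c_d$ and no diamond $\mudiam{a}{d}$ occurs, and the only replacements are of the form $\repl{i}{d}$, which copy the $d$-th component into position $i$ and leave the $d$-th component itself untouched. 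Hence every tuple encountered while evaluating a subformula still carries $v_I$ in coordinate $d$, and such tuples correspond bijectively, via the coordinate-deleting projection $\pi$, to the nodes $V^d$ of $\G^d$.

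Concretely, let $\pi$ denote this bijection from $V^d\times\{v_I\}\subseteq V^{d+1}$ onto $V^d$, lift it to tuple sets by $r(T):=\{\pi(\bar w)\mid \bar w\in T,\ \bar w_d=v_I\}$ for $T\subseteq V^{d+1}$, and call assignments $\vartheta$ (over $\G$) and $\vartheta'$ (over $\G^d$) \emph{compatible} when $\vartheta'(X)=r(\vartheta(X))$ for every variable $X$. I would then prove, by induction on a (necessarily $d$-rooted) subformula $\psi$ of $\varphi$: for all compatible $\vartheta,\vartheta'$ and all $\bar v\in V^{d+1}$ with $\bar v_d=v_I$, one has $\bar v\in\sem{\psi}{\G}{\vartheta}$ iff $\pi(\bar v)\in\sem{\mono{\psi}}{\G^d}{\vartheta'}$. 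The atomic cases $c_i$ (with $i\ne d$) and $X$, and the Boolean cases, are immediate; for negation one uses that complementation inside $V^{d+1}$, restricted to tuples with $d$-th coordinate $v_I$, corresponds under $\pi$ to complementation inside $V^d$. For $\mudiam{a}{i}\psi$ with $i\in[d]$ one uses that $\bar v[i{\leftarrow}u]$ again has $v_I$ in coordinate $d$ and that $\Transition{\bar v_i}{a}{u}$ in $\G$ iff $\Transition{\pi(\bar v)}{a_i}{\pi(\bar v)[i{\leftarrow}u]}$ in $\G^d$. For $\repl{i}{d}\psi$ one uses that, since $\bar v_d=v_I$, the re-arranged tuple is $\bar v[i{\leftarrow}v_I]$, whose $\pi$-image is exactly the unique $\reset{i}$-successor of $\pi(\bar v)$ in $\G^d$, matching $\mono{\repl{i}{d}\psi}=\mudiam{\reset{i}}{}\mono{\psi}$.

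The case $\mu X.\psi$ is the only one that needs genuine work. The induction hypothesis for $\psi$, applied with $X$ bound to $T$ on the $\G$-side and to $r(T)$ on the $\G^d$-side, says precisely that $r$ intertwines the two monotone fixpoint operators $F\colon T\mapsto\sem{\psi}{\G}{\vartheta[X\mapsto T]}$ on $2^{V^{d+1}}$ and $F'\colon S\mapsto\sem{\mono{\psi}}{\G^d}{\vartheta'[X\mapsto S]}$ on $2^{V^d}$, i.e.\ $r\circ F=F'\circ r$. From this one gets $r(\mathrm{lfp}\,F)=\mathrm{lfp}\,F'$, hence $r(\sem{\mu X.\psi}{\G}{\vartheta})=\sem{\mu X.\mono{\psi}}{\G^d}{\vartheta'}$, by a short lattice argument: $r(\mathrm{lfp}\,F)$ is a fixpoint of $F'$ and so contains $\mathrm{lfp}\,F'$; conversely $\{\bar w\mid \bar w_d\ne v_I\}\cup\{\bar w\mid \bar w_d=v_I,\ \pi(\bar w)\in\mathrm{lfp}\,F'\}$ is a prefixpoint of $F$ (using that $\mathrm{lfp}\,F'$ is a fixpoint of $F'$ together with $r\circ F=F'\circ r$), and therefore contains $\mathrm{lfp}\,F$, which after applying $r$ yields the reverse inclusion. (One can equally iterate Kleene-style from $\emptyset$ through all ordinals, since $r$ preserves arbitrary unions and $\emptyset$.) Translating back through $\pi$ on tuples with $d$-th coordinate $v_I$ completes the induction; note finally that $\mono{\cdot}$ alters neither negations nor variables, so it preserves the syntactic positivity requirement and $\mono{\varphi}$ is a well-formed $\mucalc{}$-formula whose fixpoints are defined. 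I expect this fixpoint transfer --- pinning down the restrict-and-project map $r$ and the attendant lattice reasoning precisely --- to be the only real obstacle; the other cases are routine bookkeeping dictated by the $d$-rootedness invariant.
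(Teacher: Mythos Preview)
Your proof is correct. The paper does not prove this proposition at all; it is quoted as a known result from \cite{AndersenPMC:1994,LL-FICS12} and used as a black box. So there is no ``paper's own proof'' to compare against here.

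That said, your argument is precisely the expected one and would serve as a self-contained proof. The structural induction with the invariant ``the $d$-th coordinate stays at $v_I$'' is exactly the right device, and your handling of the fixpoint case is clean: the intertwining $r\circ F = F'\circ r$ follows directly from the induction hypothesis applied to compatible assignments, and your explicit prefixpoint $\{\bar w\mid \bar w_d\ne v_I\}\cup\{\bar w\mid \bar w_d=v_I,\ \pi(\bar w)\in\mathrm{lfp}\,F'\}$ correctly gives the reverse containment (note that $r$ applied to this set is exactly $\mathrm{lfp}\,F'$, so the induction hypothesis applies). The alternative via transfinite Kleene iteration would work equally well and is arguably shorter, since $r$ preserves $\emptyset$ and arbitrary unions. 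One cosmetic remark: the proposition's right-hand side writes $(v_I,\dots,v_I)\in\sem{\mono{\varphi}}{\G^d}{}$, where this tuple is now the single node of $\G^d$ (a $1$-tuple in the monadic semantics), not a $(d{+}1)$-tuple; your projection $\pi$ handles this passage correctly.
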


\begin{example}
Take, for instance, $\varphi := \mudiam{a}{0}(f_0 \wedge \mudiam{a}{1}\mubox{a}{1}f_1) \in \mucalc{2}$. It is not 
hard to see that $\G \models \varphi$ for the graph $\G$ from Ex.~\ref{exm:power}. With tokens $0$ and $1$ placed
on the initial node, it is possible to push token $0$ to the middle node satisfying $f$ and then push token $1$
to a node, namely the right one, from which every further push of that token along one edge puts it on a node
satisfying $f$ as well.
 
Likewise, it is not hard to see that $\G^2 \models \mudiam{a_0}{}(f_0 \wedge \mudiam{a_1}{}\mubox{a_1}{}f_1) =: \mono{\varphi}$ 
when inspecting $\G^2$ as depicted in Fig.~\ref{fig:power}. From the initial node in the upper left corner there
is a red edge to node $v$ in the upper middle that satisfies $f_0$ such that from $v$ there is a blue edge to
node $v'$ in the lower middle from where all blue edges lead to nodes satisfying $f_1$, namely the node in the
middle only.  
\end{example}

We observe that monofication can be reversed as it is clearly an injective operation. Given a formula $\varphi \in \mucalc{}$
over $\Func{d}(\Sigma,\Cols)$, we obtain its \emph{polyfication} $\poly{\varphi}$ by reversing the replacements of
the modal operators in the monofication process, and regarding atomic subformulas $c_i$ now as $c$ indexed with $i$. 

It is tempting to assume that -- while monofication reduces model checking of polyadic formulas to the of ordinary
modal formulas -- one can simply reverse the process such that polyfication reduces model checking of ordinary
modal formulas to the of polyadic ones. This is not strictly the case, though. It only reduces model checking of
ordinary modal formulas \emph{over $d$-products of graphs} to model checking of polyadic formulas over the corresponding
underlying factor of this $d$-product. We state this here only for $d$-powers for the sake of simplicity.

\begin{lemma}
\label{lem:polyfication}
Let $\Sigma,\Cols$ be given, $d \ge 1$, $\varphi \in \mucalc{}$ over $\Func{d}(\Sigma,\Cols)$ be closed, and $\G_i$ 
for $i=0,\ldots,d-1$ be $(\Sigma,\Cols)$-graph, each with root $v^i_I$, that are all bisimilar to some some graph
$\G$ with root $v_I$. Then $(v^0_I,\ldots,v^{d-1}_I) \in \sem{\varphi}{\G^d}{}$ iff 
$(v_I,\ldots,v_I) \in \sem{\poly{\varphi}}{\G}{}$. 
\end{lemma}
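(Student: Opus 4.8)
The plan is to reduce the statement to Proposition~\ref{prop:langelozes} applied to the polyfication of $\varphi$, after first removing the distinction between the individual graphs $\G_i$ and the common graph $\G$. I would begin by recording two elementary facts about $\poly{\varphi}$. Since polyfication only undoes the renamings of modalities and atoms carried out by monofication and leaves the Boolean and fixpoint structure (in particular the pattern of negations) untouched, $\poly{\varphi}$ is again a closed, well-formed formula, now in $\mucalc{d+1}$, and $\mono{\poly{\varphi}} = \varphi$. Moreover $\poly{\varphi}$ is $d$-rooted: each of its atomic subformulas is some $c_j$ with $j \in [d]$, each of its modal subformulas is some $\mudiam{a}{j}\psi$ with $j \in [d]$, and each of its replacement subformulas is $\repl{j}{d}$ with $j \in [d]$; in every case the index $d$ only ever serves as the target of a reset. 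Hence Proposition~\ref{prop:langelozes} is applicable to $\poly{\varphi}$ together with any $(\Sigma,\Cols)$-graph.

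Next I would pass from the $d$-product $\prod_{i=0}^{d-1}\G_i$ (the $\Func{d}(\Sigma,\Cols)$-graph in whose node set the tuple $(v^0_I,\ldots,v^{d-1}_I)$ actually lies) to the power $\G^d$. Choosing, for each $i$, a bisimulation $R_i$ between $\G_i$ and $\G$ with $(v^i_I, v_I) \in R_i$, I would let $R$ relate a tuple $(u_0,\ldots,u_{d-1})$ of $\prod_{i=0}^{d-1}\G_i$ to a tuple $(w_0,\ldots,w_{d-1})$ of $\G^d$ exactly when $(u_i, w_i) \in R_i$ for every $i$. A routine verification shows that $R$ is a bisimulation of $\Func{d}(\Sigma,\Cols)$-graphs: condition \Prop holds componentwise by \Prop for the $R_i$ (the colour $c_i$ of a tuple depends only on its $i$-th component); an $a_i$-transition changes only the $i$-th component, so \Forth and \Back for it reduce to \Forth, resp.\ \Back, for $R_i$; and a $\reset{i}$-transition moves the $i$-th component to $v^i_I$, resp.\ $v_I$, which are $R_i$-related, with all other components unchanged. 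Since the roots are $R$-related we get $\prod_{i=0}^{d-1}\G_i \sim \G^d$, and as $\mucalc{}$ is bisimulation-invariant this yields $(v^0_I,\ldots,v^{d-1}_I) \in \sem{\varphi}{\prod_{i=0}^{d-1}\G_i}{}$ iff $(v_I,\ldots,v_I) \in \sem{\varphi}{\G^d}{}$.

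Finally I would apply Proposition~\ref{prop:langelozes} to the closed, $d$-rooted formula $\poly{\varphi} \in \mucalc{d+1}$ and the graph $\G$, obtaining $(v_I,\ldots,v_I) \in \sem{\poly{\varphi}}{\G}{}$ iff $(v_I,\ldots,v_I) \in \sem{\mono{\poly{\varphi}}}{\G^d}{}$, and then substitute $\mono{\poly{\varphi}} = \varphi$; chaining this with the equivalence from the previous step gives the claim. I do not anticipate a genuine obstacle: the argument is essentially bookkeeping, and the only place calling for some care is checking that the componentwise product $R$ of the bisimulations $R_i$ is again a bisimulation once the asynchronous products are equipped with the reset edges — the clause for $\reset{i}$ is exactly where the hypothesis $(v^i_I, v_I) \in R_i$ is used.
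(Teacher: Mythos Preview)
Your argument is correct and follows the same approach as the paper: verify that $\poly{\varphi}$ is a closed, $d$-rooted $\mucalc{d+1}$-formula with $\mono{\poly{\varphi}} = \varphi$, apply Proposition~\ref{prop:langelozes} to $\poly{\varphi}$ and $\G$, and bridge $\prod_{i=0}^{d-1}\G_i$ and $\G^d$ via bisimulation-invariance of $\mucalc{}$. The paper merely asserts $\prod_{i=0}^{d-1}\G_i \sim \G^d$ without spelling out the componentwise bisimulation, whereas you make this explicit (including the $\reset{i}$-clause), so your write-up is in fact slightly more detailed than the paper's own proof.
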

\begin{proof}
From Prop.~\ref{prop:langelozes} and the observation that $\poly{\mono{\varphi}} = \varphi$ for 
$\varphi \in \polymucalc$ and $\mono{\poly{\varphi}}$ for $\varphi \in \mucalc{}$, we get that 
$(v^0_I,\ldots,$ $v^{d-1}_I) \in \sem{\varphi}{\G'}{}$ iff $(v_I,\ldots,v_I) \in \sem{\poly{\varphi}}{\G}{}$
where $\G' := \prod_{i=0}^{d-1} \G_i$. Then note that, if $\G_i \sim \G$ for all $i=0,\ldots,d_1$, then
$\G' \sim \G^d$ from which the lemma's claim follows by bisimulation-invariance of $\mucalc{}$. 
\end{proof}

%

\section{Relative Regularity}
\label{sec:relreg}

We recall a well-known result about the expressive power of the ordinary modal $\mu$-calculus, namely
that it coincides with that of finite automata interpreted over (unranked) trees.

\begin{proposition}[\cite{lncs164*313,focs91*368,JW:autmcr,Wilke:2001:BBMS}]
\label{prop:regularautomaton}
Let $\Sigma,\Cols$ be given and $\mathfrak{G}$ be a bisimulation-invariant set of $(\Sigma,\Cols)$-graphs. 
The following are equivalent.
\begin{enumerate}
\item[a)] $\mathfrak{G}$ is definable by a formula of $\mucalc{}$.  
\item[b)] $\mathfrak{G}^{\mathsf{unf}}$ is a regular tree language.
\end{enumerate}
\end{proposition}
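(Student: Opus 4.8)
The plan is to push the statement down to trees and then apply the classical equivalence between the modal $\mu$-calculus and parity tree automata. The first step is a bookkeeping observation. Since $\mucalc{}$-formulas are bisimulation-invariant and every graph is bisimilar to its unfolding (Prop.~\ref{prop:treeunfoldbisim}), a bisimulation-invariant set $\mathfrak{G}$ of $(\Sigma,\Cols)$-graphs is completely determined by the trees it contains: for every graph $\G$ one has $\G \in \mathfrak{G}$ iff $\G^{\mathsf{unf}} \in \mathfrak{G}$, and $\G \models \varphi$ iff $\G^{\mathsf{unf}} \models \varphi$ for every $\varphi \in \mucalc{}$. Because a tree is isomorphic to its own unfolding, this yields $\mathfrak{G}^{\mathsf{unf}} = \{\, T \mid T \text{ a } (\Sigma,\Cols)\text{-tree with } T \in \mathfrak{G} \,\}$, and $\mathfrak{G}^{\mathsf{unf}}$ is closed under bisimulation among trees: if $T \sim T'$ and $T \in \mathfrak{G}^{\mathsf{unf}}$, then $T' \sim T \sim \G$ for some $\G \in \mathfrak{G}$, hence $T' \in \mathfrak{G}$ and therefore $T' \in \mathfrak{G}^{\mathsf{unf}}$. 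Consequently $\mathfrak{G}$ is $\mucalc{}$-definable iff the bisimulation-closed tree language $L := \mathfrak{G}^{\mathsf{unf}}$ is definable on trees by some $\varphi \in \mucalc{}$, i.e.\ $L = \{\, T \mid T \models \varphi \,\}$, and it then suffices to show that the latter holds iff $L$ is regular. For the automata side I would regard a $(\Sigma,\Cols)$-tree as a $\Sigma$-edge-labelled, $2^{\Cols}$-node-labelled transition tree (possibly infinitely branching) and use the standard automaton model for such structures, where the one-step transition condition speaks, for each $a \in \Sigma$, about the \emph{set} of types realised among the $a$-successors; under this model ``regular tree language'' means ``recognised by a nondeterministic parity tree automaton of this kind'', and the cited closure and simulation results apply verbatim.

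For the direction (a)$\Rightarrow$(b), given $\varphi \in \mucalc{}$ with $\mathfrak{G} = L(\varphi)$, I would translate $\varphi$ into an equivalent alternating parity tree automaton $\mathcal{A}_\varphi$ with $L(\mathcal{A}_\varphi) = \{\, T \mid T \models \varphi \,\}$ and then apply the simulation theorem (elimination of alternation, \cite{lncs164*313,focs91*368}) to obtain an equivalent nondeterministic parity tree automaton; its language is regular by definition and equals $\mathfrak{G}^{\mathsf{unf}}$ by the first paragraph.

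For the converse (b)$\Rightarrow$(a), $\mathfrak{G}^{\mathsf{unf}}$ is regular and, by the first paragraph, bisimulation-closed among trees. By the Janin--Walukiewicz characterisation of the bisimulation-invariant regular tree languages as exactly the $\mucalc{}$-definable ones (\cite{JW:autmcr}, equivalently: every bisimulation-closed parity tree automaton is equivalent to a modal automaton), there is $\varphi \in \mucalc{}$ with $\{\, T \mid T \models \varphi \,\} = \mathfrak{G}^{\mathsf{unf}}$. Lifting this back through the bookkeeping step --- $\G \in \mathfrak{G}$ iff $\G^{\mathsf{unf}} \in \mathfrak{G}^{\mathsf{unf}}$ iff $\G^{\mathsf{unf}} \models \varphi$ iff $\G \models \varphi$ --- gives $L(\varphi) = \mathfrak{G}$.

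The substance of the argument lies entirely in the two invoked results: the simulation theorem (removal of alternation) for (a)$\Rightarrow$(b), and the bisimulation-invariance half of the Janin--Walukiewicz theorem for (b)$\Rightarrow$(a). The only non-trivial bookkeeping on top of these, and the point I would expect to need the most care, is to confirm that the automaton model chosen for the unranked, possibly infinitely branching $\Sigma$-labelled trees is robust enough --- the appropriate Boolean closure and projection properties, a well-behaved basis/modal-automaton formalism --- for those two theorems to transfer without modification. Since all of this is supplied by the cited literature, the proof itself is essentially the reduction-to-trees argument of the first paragraph together with these invocations.
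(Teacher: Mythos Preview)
The paper does not give a proof of this proposition at all: it is stated with citations to \cite{lncs164*313,focs91*368,JW:autmcr,Wilke:2001:BBMS} and treated as a known result, so there is nothing to compare against line by line. Your sketch is a correct and standard reconstruction of how the statement is obtained from precisely those cited results --- the translation of $\mucalc{}$ into alternating parity automata and the simulation theorem for (a)$\Rightarrow$(b), and the Janin--Walukiewicz characterisation (together with the equivalence of regularity and MSO-definability on trees) for (b)$\Rightarrow$(a). Your bookkeeping reduction to trees via Prop.~\ref{prop:treeunfoldbisim} and bisimulation-invariance is exactly what is needed to bridge between graphs and $\mathfrak{G}^{\mathsf{unf}}$, and your caution about the automaton model for unranked, unordered, possibly infinitely-branching trees is well placed; the paper later adopts the symmetric alternating parity automata of \cite{Wilke:2001:BBMS} (Def.~\ref{def:sta}) for just this reason.
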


Monofication may appear to reduce the problem of definability in $\polymucalc$ to that of definability in 
$\mucalc{}$, i.e.\ regularity. This is not the case, though. The combination of Prop.~\ref{prop:langelozes} and 
Lemma~\ref{lem:polyfication} facilitates a characterisation of definability in $\polymucalc$ by definability
of a corresponding set of $d$-power graphs in $\mucalc{}$. This leads us to the notion of relative regularity.

\begin{definition}
Let $\Sigma,\Cols$ be given, and let $R,L$ be sets of $(\Sigma,\Cols)$-trees. We say that $L$ is 
\emph{regular relative to $R$}, if there is a formula $\varphi_L \in \mucalc{}$ 
s.t.\ for all trees $\T \in R$ we have: $\T \in L$ iff $\T \models \varphi_L$.
\end{definition} 

Note that regularity of $L$ relative to $R$ is not the same as regularity of $R \cap L$ which would entail the
existence of an $\mucalc{}$-formula that
\begin{itemize}
\item for any tree $\T \in R$ correctly determines whether or not $\T \in L$, and
\item rejects any tree $\T \not\in R$.
\end{itemize}
Regularity of $L$ relative to $R$ is weaker in the sense that it allows the underlying formula to give
arbitrary answers regarding membership in $L$ on all trees not belonging to $R$. In
other words, it requires the existence of some regular $L' \supseteq L$ such that
$L' \cap R = L$, but $R$ need not be regular itself. 

\begin{example}
Let $\Sigma = \{a,b\}$, $\Cols=\{f\}$. Consider the set $R_{\mathsf{word}}$ of all $(\Sigma,\Cols)$-trees such that every
path from the root -- assuming that each node has at least one child -- is labeled with the same word
from $(2^\Cols \times \Sigma)^\omega$. It is not hard to see that $R_{\mathsf{word}}$ is non-regular.\footnote{Interestingly, 
$R_{\mathsf{word}}$ is definable in \trPHFL{1} \cite{LLVG:TCS:2014} and therefore bisimulation-invariant and 
checkable in \PSPACE, cf.\ Fig.~\ref{fig:bisimcomplexity}, which may make it an alternative to the language
considered in Sect.~\ref{sec:sepPSPACE}.}

Moreover, consider the set $L_{\mathsf{uni}}$ of all $(\Sigma,\Cols)$-trees for which there is an $n$ such that all 
nodes on level $n$ are labeled $f$. This is equally not hard to recognise as being non-regular.\footnote{A 
proof of non-regularity, resp.\ non-definability of $L_{\mathsf{uni}}$ has even been 
published a while ago, cf.\ \cite{Emerson:1987:UIT}. As a side note we remark that $L_{\mathsf{uni}}$ is also
definable in $\trPHFL{1}$, cf.\ \cite{conf/icla/Lange19}. It is not yet another alternative language for the 
construction in Sect.~\ref{sec:sepPSPACE} as $R_{\mathsf{word}}$ is; it is in fact the language that is used 
there.}

However, $L_{\mathsf{uni}}$ is in fact regular relative to $R_{\mathsf{word}}$ because on trees where every path
forms the same sequence of alternative state labels and actions, we have that \emph{all} nodes at some distance from
the root satisfy $f$ iff some \emph{node} does. Hence, within the language $R_{\mathsf{word}}$, the trees belonging
to $L_{\mathsf{uni}}$ are characterised by the $\mucalc{}$-formula $\mu X.f \vee \mudiam{a}{}X \vee \mudiam{b}{}X$.
\end{example}

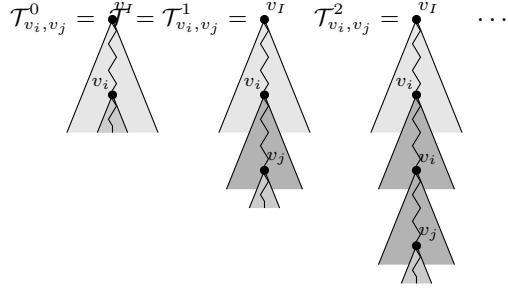
\begin{figure}
\begin{center}
\begin{tikzpicture}

  \tikzstyle{every node}=[draw,fill=black,shape=circle,inner sep=0pt,minimum size=3pt]
  \tikzstyle{zz}=[decorate,decoration={zigzag,amplitude=1.5pt}]
  
  \coordinate (1v0) at (0,0);
  \coordinate (1vi) at (0,-1);
  \coordinate (1vj) at (0,-2);
  
  \path[fill=gray!20] (1v0) -- ++(.6,-1.5) -- ++(-1.2,0) -- cycle;
  \draw (1v0) -- +(.6,-1.5) (1v0) -- +(-.6,-1.5);
  \path[fill=gray!60] (1vi) -- ++(.5,-1.25) -- ++(-1,0) -- cycle;
  \draw (1vi) -- +(.5,-1.25) (1vi) -- +(-.5,-1.25);
  \path[fill=gray!40] (1vj) -- ++(.2,-.5) -- ++(-.4,0) -- cycle;
  \draw (1vj) -- +(.2,-.5) (1vj) -- +(-.2,-.5);
     
  \node [label=45:{\scriptsize $v_I$}]  (1v0) at (1v0) {};
  \node [label=135:{\scriptsize $v_i$}] (1vi) at (1vi) {};
  \node [label=45:{\scriptsize $v_j$}]  (1vj) at (1vj) {};
  
  \draw[zz] (1v0) -- (1vi);
  \draw[zz] (1vi) -- (1vj);
  \draw[zz] (1vj) -- ++(0,-.5);

  \coordinate (0v0) at (-2,0);
  \coordinate (0vi) at (-2,-1);
  
  \path[fill=gray!20] (0v0) -- ++(.6,-1.5) -- ++(-1.2,0) -- cycle;
  \draw (0v0) -- +(.6,-1.5) (0v0) -- +(-.6,-1.5);
  \path[fill=gray!40] (0vi) -- ++(.2,-.5) -- ++(-.4,0) -- cycle;
  \draw (0vi) -- +(.2,-.5) (0vi) -- +(-.2,-.5);
     
  \node [label=45:{\scriptsize $v_I$}]  (0v0) at (0v0) {};
  \node [label=135:{\scriptsize $v_i$}] (0vi) at (0vi) {};
  
  \draw[zz] (0v0) -- (0vi);
  \draw[zz] (0vi) -- ++(0,-.5);

  \coordinate (2v0)  at (2,0);
  \coordinate (2vi)  at (2,-1);
  \coordinate (2vi2) at (2,-2);
  \coordinate (2vj)  at (2,-3);
  
  \path[fill=gray!20] (2v0) -- ++(.6,-1.5) -- ++(-1.2,0) -- cycle;
  \draw (2v0) -- +(.6,-1.5) (2v0) -- +(-.6,-1.5);
  \path[fill=gray!60] (2vi) -- ++(.5,-1.25) -- ++(-1,0) -- cycle;
  \draw (2vi) -- +(.5,-1.25) (2vi) -- +(-.5,-1.25);
  \path[fill=gray!60] (2vi2) -- ++(.5,-1.25) -- ++(-1,0) -- cycle;
  \draw (2vi2) -- +(.5,-1.25) (2vi2) -- +(-.5,-1.25);
  \path[fill=gray!40] (2vj) -- ++(.2,-.5) -- ++(-.4,0) -- cycle;
  \draw (2vj) -- +(.2,-.5) (2vj) -- +(-.2,-.5);
     
  \node [label=45:{\scriptsize $v_I$}]  (2v0)  at (2v0)  {};
  \node [label=135:{\scriptsize $v_i$}] (2vi)  at (2vi)  {};
  \node [label=45:{\scriptsize $v_i$}] (2vi2) at (2vi2) {};
  \node [label=45:{\scriptsize $v_j$}]  (2vj)  at (2vj)  {};
  
  \draw[zz] (2v0) -- (2vi);
  \draw[zz] (2vi) -- (2vi2);
  \draw[zz] (2vi2) -- (2vj);
  \draw[zz] (2vj) -- ++(0,-.5);

  \tikzset{every node/.style={fill=none,draw=none}}
  
  \node[anchor=east,font=\small] at (-.1,0)  {$\T = \T^1_{v_i,v_j} =$};
  \node[anchor=east,font=\small] at (-2.1,0) {$\T^0_{v_i,v_j} =$};
  \node[anchor=east,font=\small] at (1.9,0)  {$\T^2_{v_i,v_j} =$};
  
  \node at (3,0) {$\ldots$};
  
\end{tikzpicture}
\end{center}

\caption{The operation of pumping in trees as described in Def.~\ref{def:pump}.}
\label{fig:pumping}
\end{figure}

We now show that relative regularity enjoys a variant of the well-known
pumping lemma, formulated for tree languages. First
we define what it means to pump a tree. Note that if a graph
already is a tree, it is isomorphic to its own tree unfolding,
and we can refer to each node by the unique finite path from the
root that reaches it.

Let $\T$ be a tree with node set $V$ and initial node $v_I$. 
Let $\pi = v_I,v_1, v_2,\dotsc$ be a path in $\G$ and let $v_i, v_j$
with $i < j$ be two nodes on that path. Partition $V$ into three sets
as follows:
\begin{itemize}
\item $V_<$ is the set of $v_I,v'_1,\dotsc,v'_k$ that do not
contain the full sequence $v_I, v_1, \dotsc,v_i$ as a prefix,
\item $V_>$ is the set of $v_I,v'_1,\dotsc,v'_k$ that
do contain the full sequence $v_I,v_1,\dotsc,v_j$ as a prefix,
\item $V_p$ is the set of $v_I,v'_1,\dotsc,v'_{k}$ that contain
$v_I,v_1,\dotsc,v_i$ as a prefix, but not $v_I,v_1,\dotsc,v_j$.
\end{itemize}
Intuitively, $V_<$ is the set of nodes that either sit strictly above $v_i$
on $\pi$, or branch from $\pi$ strictly before $v_i$, the set $V_>$ is
the set of nodes that sit in $\pi$ below $v_j$ or branch from it after $v_j$,
and $V_p$ is the set of nodes that sit on $\pi$ between $v_i$ but strictly below $v_j$,
or branch from $\pi$ before $v_j$, but at or after $v_i$.

\begin{definition}
\label{def:pump}
Let $\T = (V, \Transition{}{}{}, L, v_I)$ be a $(\Sigma, \Cols)$-tree,
let $\pi = v_I,v_1,v_2,\dotsc$ be a path in $\T$, and let $v_i, v_j$ with
$i < j$ be two nodes on $\pi$.
The family $\T^k_{v_i,v_j}$ of $(\Sigma, \Cols)$- trees, one for each $k \geq 0$, 
i defined as 
$(V^k, \TransitionD{}{}{}{k}, L^k, v_I)$ where
\begin{itemize}
\item $V^k = (V \setminus V_p) \cup (V_p \times [k])$,
\item $L^k(v) = L(v)$ if $v \in V \setminus V_p$ and $L^k((v, k')) = L(v)$,
\item $\TransitionD{}{}{}{k}$ is defined via
\begin{itemize}
\item $\TransitionD{v}{a}{v'}{k}$ if $v, v' \in V \setminus V_p$ and $\Transition{v}{a}{v'}$,
\item $\TransitionD{(v,k')}{a}{(v', k')}{k}$ if $\Transition{v}{a}{v'}$ and $v, v' \in V_p$,
\item $\TransitionD{(v_{j-1}, k')}{a}{(v_i, k'+1)}{k}$ if $k' < k-1$ and $\TransitionD{v_{j-1}}{a}{v_j}{k}$,
\item if $k > 0$ then $\TransitionD{v_{i-1}}{a}{(v_i,0)}{k}$ if $\Transition{v_{i-1}}{a}{v_i}$ and further $\TransitionD{(v_{j-1}, k-1)}{a}{v_j}{k}$ if $\Transition{v_{j-1}}{a}{v_j}$,
\item if $k = 0$ then $\TransitionD{v_{i-1}}{a}{v_j}{0}$ if $\Transition{v_{i-1}}{a}{v_i}$.
\end{itemize} 
\end{itemize}
We say that $\T^k_{v_i,v_j}$ is obtained by pumping $k$ times between $v_i$ and $v_j$.
\end{definition}
An intuition for this operation can be found in Fig.~\ref{fig:pumping}: 
take from $\T$ the subgraph obtained by taking the 
path between $v_i$ inclusive and $v_j$ exclusive, as well as all its descendants that 
are not also descendants of $v_j$. Then either cut that subgraph (resulting in $\T^0_{v_i, v_j}$, left side),
or by repeating it $k$ times (resulting in e.g.\ $\T^2_{v_i,v_j}$, right side).

We now state the pumping lemma.

\begin{lemma}[Pumping Lemma for Relatively Regular Tree Languages]
\label{lem:pumping}
Let $\Sigma,\Cols$ be given and $L,R$ be sets of $(\Sigma,\Cols)$-trees. If $L$ is regular relative
to $R$ then there is an $n$ such that for every tree $\T \in L \cap R$ and every path $\pi = v_I,v_1,v_2,\ldots$
starting in the root node $v_I$, there are $i < j \leq n$ such for all $k\geq 0$, the tree 
$\T^k_{v_i,v_j}$ belongs to $L$ if it belongs to $R$.
\end{lemma}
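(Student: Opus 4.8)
The plan is to reduce the statement to an ordinary automaton-theoretic pumping argument. I would first unpack the hypothesis: a witness $\varphi_L \in \mucalc{}$ for the regularity of $L$ relative to $R$ defines the (plain) tree language $L' := \{\T \mid \T \models \varphi_L\}$, and by the very definition of relative regularity this $L'$ satisfies $L \cap R = L' \cap R$. Hence it suffices to exhibit an $n$ --- depending only on $\varphi_L$ --- such that for every $\T \in L'$ and every path $\pi = v_I,v_1,v_2,\ldots$ there are $i<j\le n$ with $\T^k_{v_i,v_j} \in L'$ for all $k \ge 0$. Indeed, for an arbitrary $\T \in L\cap R$ we then have $\T \in L'$, we obtain such $i,j$, and for each $k$: if $\T^k_{v_i,v_j} \in R$ then $\T^k_{v_i,v_j} \in L' \cap R = L \cap R \subseteq L$, which is exactly what the lemma asks for.

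By Prop.~\ref{prop:regularautomaton}, together with the standard fact that $\mucalc{}$ over trees corresponds to finite-state tree automata, $L'$ is a regular tree language, so it is recognised by a parity tree automaton $\mathcal{A}$ over $(\Sigma,\Cols)$. I would take $\mathcal{A}$ in the usual nondeterministic (disjunctive) normal form, so that a run of $\mathcal{A}$ on a tree $\T$ amounts to a labelling $\rho$ of the nodes of $\T$ by states of $\mathcal{A}$, subject to a local transition condition at each node and to the parity condition along every infinite branch. Now set $n := |Q|+1$ where $Q$ is the state set of $\mathcal{A}$. Given $\T \in L'$, fix an accepting run $\rho$, and apply the pigeonhole principle to $\rho(v_1),\ldots,\rho(v_n)$: this produces $1\le i<j\le n$ with $\rho(v_i) = \rho(v_j) =: q$. (If $\pi$ has fewer than $n$ nodes because it ends in a leaf, the statement holds vacuously for it, so this step is harmless.)

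The heart of the argument is then to show that $\mathcal{A}$ accepts $\T^k_{v_i,v_j}$ for every $k\ge 0$. I would construct a run $\rho^k$ that agrees with $\rho$ on the untouched part $V\setminus V_p$, and on each of the $k$ copies of $V_p$ (see Def.~\ref{def:pump}) is the corresponding copy of $\rho$ restricted to $V_p$. The only edges at which $\rho^k$ is not a verbatim copy of $\rho$ are the splice edges produced by the pumping operation: $v_{i-1}\to v_j$ when $k=0$, and $(v_{j-1},k')\to(v_i,k'+1)$ and $(v_{j-1},k-1)\to v_j$ when $k\ge 1$. At each such edge, the (copied) transition of the predecessor demands on its successor the state $\rho(v_j)$, while the successor is a copy of $v_i$; since $\rho(v_i) = \rho(v_j) = q$, the transition condition that has to hold at the successor is literally the one that $\rho$ already verified at $v_i$ in $\T$, so $\rho^k$ is a legal run. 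For the parity condition I would observe that every infinite branch of $\T^k_{v_i,v_j}$ is obtained from an infinite branch of $\T$ by inserting finitely many copies of the finite segment $v_i,\ldots,v_{j-1}$, by deleting that segment ($k=0$), or without any change, and correspondingly the sequence of states that $\rho^k$ reads along such a branch differs from the one $\rho$ reads along the matching branch of $\T$ only by a finite modification; since the parity condition depends only on which states recur infinitely often, it survives. Hence $\rho^k$ is accepting and $\T^k_{v_i,v_j} \in L'$.

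I expect the last point --- the simultaneous check that the splices preserve the local transitions and that no branch of the pumped tree acquires a losing parity value --- to be the main obstacle, simply because it forces one to juggle all the branch types (the spine, branches into $V_<$, branches into $V_>$, and branches into subtrees hanging off the pumped copies of $V_p$) at once. Working with the nondeterministic normal form, where a run is a mere node-labelling, is what keeps this bookkeeping in check; with alternating automata one would instead have to match whole bundles of automaton states at $v_i$ and $v_j$ and trace their individual histories through the run-tree, which is considerably more delicate.
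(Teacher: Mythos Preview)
Your argument is correct, but it follows a different route from the paper's.

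The paper works directly with the alternating (symmetric) parity tree automaton $\mathcal{A}_L$ obtained from $\varphi_L$, without nondeterminising. Since runs of such an automaton are not simple node-labellings, the paper instead records, at each node $v_l$ along $\pi$, the \emph{set} $S_l = \{q \mid \exists \text{ has a positional winning strategy from } (v_l,q)\}$, applies pigeonhole to these sets (hence $n = 2^{|Q|}+1$), and then splices together $\exists$'s positional winning strategies on the three regions $V_<$, $V_p$, $V_>$. This is precisely the ``bundles of automaton states'' scenario you sketch in your last paragraph.

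Your approach trades that bookkeeping for an appeal to the disjunctive normal form for $\mucalc{}$ (Janin--Walukiewicz style $\nabla$-automata), after which runs really are single-state node-labellings and the splice is the clean cut-and-paste you describe. The cost is the nondeterminisation step itself, which is a nontrivial external result and can blow up $|Q|$ exponentially --- so your bound $|Q|+1$ on the nondeterministic state set is not actually tighter than the paper's $2^{|Q|}+1$ on the alternating one. Since the lemma only needs \emph{some} $n$, either route is fine; yours is shorter once the normal form is granted, the paper's is more self-contained. One small remark: you should make explicit which nondeterministic model for \emph{unranked} trees you mean (the $\nabla$-based one), since the local-transition check at the splice points relies on the fact that the transition at a node constrains only the \emph{states} of its children, not their labels --- which is exactly what $\nabla$ provides.
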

Note that this differs from typical pumping lemmas in the sense that it is permitted
for trees that result from pumping to not be members of $L$, as long as they also are not members of $R$.
In other words, any proof of relative non-regularity requires the pumping to happen within the context of $R$. This makes successful 
applications of this form of Pumping Lemma for showing non-regularity relative to some language $R$ more difficult 
since a candidate tree $\T$ needs to be chosen such that the tree $\T'$, obtained by repeating or removing parts 
in $\T$, does not only fall outside of $L$ but remains inside $R$. 

Before we show Lem.~\ref{lem:pumping}, we introduce the notion of \emph{tree automata}, following
the presentation in \cite{Wilke:2001:BBMS} with small adjustments, mainly to adjust
to the multi-colour setting. 

\begin{definition}
\label{def:sta} Fix an alphabet $\Sigma$ and a set of colours $\Cols$. An alternating (symmetric) parity tree automaton (APT) is an $\mathcal{A} = (Q, q_I, \delta, \Omega)$
where $Q$ is a finite set of \emph{states} with $q_I \in Q$ the \emph{initial state}, $\Omega \colon Q \to \mathbb{N}$ the \emph{priority labeling} and $\delta\colon Q \to \mathcal{F}$ the \emph{transition function} 
where $\mathcal{F}$ is the set of formulas derived from the following grammar:
\[
\mathcal{F} \Coloneqq c \mid \neg c \mid \mudiam{a}{} q \mid \mubox{a}{} q \mid q \vee q \mid q \wedge  q
\]
where $c \in \Cols, a \in \Sigma, q \in Q$.
\end{definition}
Let $\G = (V, \Transition{}{}{}, L, v_I)$ be a $(\Sigma, \Cols)$-graph. Acceptance
of $\mathcal{A}$ is defined as a two-person game between $\exists$ and $\forall$,
where $\exists$ tries to show that $\mathcal{A}$ accepts $\G$ and $\forall$
tries to show the opposite. The positions of this game are of the form $(V \times Q)$,
with $(v_I, q_I)$ the initial position. Plays start in the initial position
and are extended by one of the players as follows:
\begin{itemize}
\item Plays in positions $(v, c)$ and $(v, \neg c)$ are over. Player
$\exists$ wins the first kind if $c \in L(v)$, and the second kind if $c \notin L(v)$,
otherwise $\forall$ wins.
\item In positions of the form $(v, \mudiam{a}{} q)$, player $\exists$
picks $v'$ with $\Transition{v}{a}{v'}$ and the play continues in $(v', q)$.
\item In positions of the form $(v, \mubox{a}{} q)$, player $\forall$
picks $v'$ with $\Transition{v}{a}{v'}$ and the play continues in $(v', q)$.
\item In positions of the form $(v, q_1 \vee q_2)$, player $\exists$ picks 
$i \in \{1,2\}$ and the play continues in $(v, q_i)$.
\item In positions of the form $(v, q_1 \wedge q_2)$, player $\forall$ picks 
$i \in \{1,2\}$ and the play continues in $(v, q_i)$.
\end{itemize}
A player that is stuck for lack of successors loses the game immediately.
An infinite play $(v_I, q_I), (v_1, q_1), (v_2, q_2), \dotsc$ induces a sequence
$\Omega(v_I), \Omega(v_1), \Omega(v_2), \dotsc$. Player $\exists$ wins the play
if the highest number that occurs infinitely often in this sequence is even,
otherwise $\forall$ wins. It is a standard result that in each position,
exactly one of the players has a positional winning strategy,
i.e.\ one where the choice in a position $(v, q)$ only depends on $v$ and $q$ \cite{focs91*368}.
We define the language of graphs accepted by $\mathcal{A}$, written $\mathcal{L}(\mathcal{A})$,
to be the set of all $(\Sigma, \Cols)$-graphs on which $\exists$ has a positional winning 
strategy from $(v_I, q_I)$. 

\begin{proof}[Proof of Lemma~\ref{lem:pumping}]
Suppose that $L$ is regular relative to $R$. Then there is $\varphi_L$ and, by 
Prop.~\ref{prop:regularautomaton} there is $\mathcal{A}_L = (Q, q_i, \delta, \Omega)$, 
such that for all $\T \in R$ we have that $\T \in L$ iff $\T \in \mathcal{L}(\mathcal{A}_L)$.

Let $n = 2^{|Q|} + 1$.
Let $\T = (V, \Transition{}{}{}, L,v_i) \in R \cap L$ be a tree, let $\pi = v_I,v_1,\dotsc$
be a path in $\T$. Since $\T \in L(\mathcal{A}_L)$, player
$\exists$ has a positional winning strategy from $(v_I, q_I)$ in the acceptance game. Now consider
the sequence of sets $S_1, S_2, \dotsc$ where $S_i = \{q \mid \exists \text{ has pos.\ win.\ str.\
from } (v_i, q)\}$. By the pigeonhole principle, there are $i, j \leq n$
with $i < j$ and $S_i = S_j$.\footnote{In fact, $S_i \subseteq S_j$ suffices,
but this requires the same upper bound on $n$.} 

We now claim that $\exists$ has a positional winning strategy 
for all $\T^k_{v_i,v_j}$. 
Independently of $k$, any position
of the form $(v, q)$ with $v \in V_>$ can only be reached via visiting
a position of the form $(v_j, q)$ for some $q$, whence it suffices to show 
that $\exists$ can enforce that this only happens for $q \in S_j$; after
that she can play out her strategy from $\T$, as the trees below $v_j$
in $\T$ and $\T^k_{v_i,v_j}$ are isomorphic and that strategy is positional.

Also, independently of $k$, we let $\exists$ play out her strategy from $\T$
on the part of $\T^k_{v_i,v_j}$ that looks like $\T_<$. 
This means that she wins all plays that stay in $\T_<$ forever.

Let $k = 0$.
Any play leaving $\T_<$ in $\T$ would do so via a move from $(v_{i-1}, q)$
to $(v_j, q')$ for some $q, q'$ with $q' \in S_i$, since by assumption 
on $\exists$'s strategy in $\T_<$, we have $q \in S_{i-1}$. Note that
this move is not necessarily triggered by $\exists$. Hence, any
move out of $T_<$ in $T^0_{v_i,v_j}$ would lead to $v_j$, and since 
$S_i = S_j$, the play continues in $(v_j, q')$ from which $\exists$
has a positional winning strategy. Hence $\T^0_{v_i,v_j} \in L(\mathcal{A}_L)$.

Now let $k > 1$. By the same argument, any move from $(v_{i-1}, q)$
towards some $(v', q') \notin \T_<$ would necessarily
lead to $((v_i,0), q')$ with $q' \in S_i$. We now let $\exists$ play the following
strategy: In positions of the form $((v, k'), q)$, she
plays the same way as in $(v, q)$, with moves to $(v_j, q')$ being
replaced by  moves to $((v_i, k'+1), q')$ if $k' < k-1)$.
Note that this means the following:
If the play reaches a position $((v_l, k'), q)$ with $i \leq l < j$,
then $q \in S_l$. By the same argument as above, this also
means that any play that reaches $(v_j, q)$ must be such that $q \in S_j$,
whence it is won by $\exists$.

We argue that this strategy wins all play for $\exists$. Since
all plays that stay in $\T_<$ forever, or reach $T_>$ eventually
are won by $\exists$ (see above), it suffices to argue about plays
that leave $\T_<$ eventually, but never reach $\T_>$. Hence,
such a play will eventually stay within one of the components
of $T_p \times [k-1]$, i.e.\ it happens in a copy of $\T_p$. Since
$\exists$ plays her winning strategy from the original copy, 
she wins any such play.

The above establishes that $\T^k_{v_i,v_j} \in \mathcal{L}(\mathcal{A}_L)$. 
By the definition of relative regularity, all such trees that are
also in $R$ hence must be members of $L$.
\end{proof}

Readers familiar with pumping lemmas for regular languages would probably have
expected that $n$ is linear in the size of $Q$, not exponential. The reason
for this discrepancy is the automaton model: since we work with unranked, unordered
trees, symmetric automata are the model of choice. However, these require an explicit
conjunction in the transition function (as opposed to the implicit conjunction present
in e.g., nondeterministic automata for ranked trees via transitions
sending copies into several directions at once). This also means that any notion
of a run may potentially contain several copies of the automaton for the same node
of the tree, but in different states. Hence, a pumping argument requires a powerset 
argument here.

\section{Power Graphs}
\label{sec:char}

In Sect.~\ref{sec:separate} we will make use of the results presented in Sect.~\ref{sec:prel} in order to obtain a
characterisation of the separation of \PTIME from \NPTIME, resp.\ \PSPACE in terms of relative non-regularity of certain
tree languages. The languages providing that relative context are going to be the sets of trees that are bisimilar to
a power graph. The purpose of this section is therefore to provide a characterisation of such power graphs which could
be useful in potential applications of Prop.~\ref{lem:pumping} in order to carry out proofs of relative non-regularity,
i.e.\ do pumping within the class of trees representing power graphs.  

\subsection{A Characterisation via $d$-Bisimulations}

\begin{definition}
Let $\Sigma,\Cols$ and $d \ge 1$ be given and $\G = (V,\Transition{}{}{},L,v_I)$ be an $\Func{d}(\Sigma,\Cols)$-graph. 
We call a family $(\kbisim{ij})_{0 \leq i, j \leq d-1}$ of $d^2$ many relations a 
\emph{$d$-dimensional asynchronous bisimulation} or just \emph{$d$-bisimulation} for short, if it satisfies the 
following for all $i,j \in [d]$.
\begin{enumerate}
\item \Prop For all $v, v' \in V$ with $v \kbisim{ij} v'$, all $c \in \Cols$, we have $c_i \in L(v)$ iff 
      $c_j \in L(v')$.
\item \Forth For all $v_1, v_2 \in V$ with $v_1 \kbisim{ij} v_2$, all $a \in \Sigma$: 
      if there is $v'_1$ with $\Transition{v_1}{a_i}{v'_1}$ then there is $v'_2$ with 
		  $\Transition{v_2}{a_j}{v'_2}$ and $v'_1 \kbisim{ij} v'_2$. 
\item \Back For all $v_1, v_2 \in V$ with $v_1 \kbisim{ij} v_2$, all $a \in \Sigma$: 
      if there is $v'_2$ with $\Transition{v_2}{a_j}{v'_2}$ then there is $v'_1$ with 
			$\Transition{v_1}{a_i}{v'_1}$ and $v'_1 \kbisim{ij} v'_2$. 
\end{enumerate}
\end{definition}

A $d$-bisimulation on an $\Func{d}(\Sigma,\Cols)$-graph $\G$ can be seen as the extension
of the concept of ordinary bisimulation to a $d$-dimensional setting. When treating
nodes in $\G$ as if they were actually $d$-tuples, the $d$-bisimulation identifies
which components of these tuples behave similarly.  Intuitively, $v \kbisim{ij} v'$ if $v_i \sim v'_j$, i.e.\ the $i$-th (candidate) component of $v$ (as a $d$-tuple) is bisimilar to the $j$-th (candidate) component of $v'$. 

It is not hard to see that $d$-bisimulations are closed under arbitrary (pointwise) unions, and
that they always exist (by setting $v \kbisim{ii} v$ for all $v$ and $v \not\kbisim{ij} v'$ for $i \not = j$ and
$v \ne v'$). Hence, on
any given $\Func{d}(\Sigma,\Cols)$-graph there is a unique largest such family of relations. From now on, 
$(\kbisim{ij})_{0 \leq i, j \leq d-1}$ refers to this largest $d$-bisimulation. The following is shown by routine
inspection of the definition of $d$-bisimulations.

\begin{lemma}
\label{lem:kbisim-eq}
Let $\Sigma,\Cols,d$ be given and $\G$ be a $\Func{d}(\Sigma,\Cols)$-graph with associated $d$-bisimulation 
$(\kbisim{ij})_{0 \leq i, j \leq d-1}$. The following hold for all $i,j,h \in [d]$.
\begin{enumerate}
\item (Pseudo-Reflexivity) For all $v \in V$ we have $v \kbisim{ii} v$.
\item (Pseudo-Symmetry) For all $v, v' \in V$ we have $v \kbisim{ij} v'$ iff $v' \kbisim{ji} v$.
\item (Pseudo-Transitivity) For all $v, v', v'' \in V$ with $v \kbisim{ij} v'$ and $v' \kbisim{jh} v''$ we have 
       $v \kbisim{ih} v''$.
\end{enumerate}
\end{lemma}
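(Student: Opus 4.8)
The plan is to establish the three ``pseudo'' properties directly from the definition of a $d$-bisimulation, exploiting the fact that $(\kbisim{ij})_{0 \leq i,j \leq d-1}$ is the \emph{largest} such family, which is obtained as a pointwise union of all $d$-bisimulations. For each property I would exhibit a new family of relations, verify that it is a $d$-bisimulation, and conclude by maximality that it is contained in $(\kbisim{ij})_{0 \leq i,j \leq d-1}$ — so the claimed pairs already lie in the largest family. This is the standard ``coinduction via the final coalgebra'' style of argument familiar from ordinary bisimulation proofs, lifted to the indexed setting.

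For \textbf{pseudo-reflexivity}, I would note that the family $(\kbisim{ij}')$ defined by $v \kbisim{ii}' v$ for all $v \in V$ and $i \in [d]$, and $v \kbisim{ij}' v'$ never when $i \neq j$ or $v \neq v'$, is trivially a $d$-bisimulation: \Prop holds because $c_i \in L(v)$ iff $c_i \in L(v)$; \Forth and \Back hold because if $\Transition{v}{a_i}{v'}$ then we may pick the same witness $v'$ with $\Transition{v}{a_i}{v'}$ and $v' \kbisim{ii}' v'$. Hence $v \kbisim{ii} v$ for all $v$. For \textbf{pseudo-symmetry}, I would take the ``transposed'' family $(\approx^{\mathsf{op}}_{ij})$ given by $v \approx^{\mathsf{op}}_{ij} v' :\iff v' \kbisim{ji} v$; checking that this is a $d$-bisimulation is immediate by swapping the roles of $(v_1,v_2)$ in \Forth and \Back (the \Forth condition for $\approx^{\mathsf{op}}_{ij}$ becomes exactly the \Back condition for $\kbisim{ji}$, and vice versa) and by symmetry of the biconditional in \Prop. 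By maximality $\approx^{\mathsf{op}}_{ij} \subseteq \kbisim{ij}$, which gives one direction of the ``iff'', and the reverse direction follows by applying the same inclusion with $i,j$ swapped.

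For \textbf{pseudo-transitivity}, the natural candidate is the relational composition: define $v \kbisim{ih}'' v'' :\iff \exists v' \in V,\ v \kbisim{ij} v' \text{ and } v' \kbisim{jh} v''$ for each triple $i,j,h$ — more precisely I would take, for fixed $j$, the family indexed by $(i,h)$ given by this composition, and I would need to be slightly careful to package all these together into a single $d$-indexed family (one clean way: for each $j$ let $\approx^{(j)}_{ih}$ be the composition above when the ``middle index'' is $j$, extended by $\kbisim{ii}$ on the diagonal where needed, and verify each $\approx^{(j)}$ is a $d$-bisimulation). The verification is routine: \Prop chains the two biconditionals through the witness $v'$; \Forth takes an $a_i$-move from $v_1$, pushes it through $\kbisim{ij}$ to an $a_j$-move from $v'$, then through $\kbisim{jh}$ to an $a_h$-move from $v_2$, with the resulting targets again related by the composition; \Back is symmetric. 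By maximality, the composition is contained in $\kbisim{ih}$, which is exactly the claim.

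The main obstacle is essentially bookkeeping rather than mathematical depth: a $d$-bisimulation is a single family of $d^2$ relations that must \emph{jointly} satisfy the closure conditions, so when I build the transposed or composed family I must make sure every index pair $(i,j)$ is assigned a relation and that the \Forth/\Back conditions still hold for all of them simultaneously (not just the ``interesting'' ones) — in particular the off-diagonal entries not directly involved in the property at hand can safely be taken to be the corresponding entries of the original $\kbisim{}$, which already satisfy everything, or the trivial diagonal relation. Once the families are set up correctly, each of the three verifications is a one-line chase through \Prop, \Forth, \Back, exactly as the paper anticipates by calling the lemma a matter of ``routine inspection''.
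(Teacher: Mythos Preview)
Your proposal is correct and is exactly the kind of ``routine inspection'' the paper defers: the paper does not spell out a proof but sets up precisely the machinery you use (closure of $d$-bisimulations under pointwise unions, existence of a unique largest family), so a coinductive maximality argument is the intended route. One small simplification: in the transitivity case your worry about ``packaging'' is unnecessary---for each fixed middle index $j$, the family $(\approx^{(j)}_{ih})_{i,h\in[d]}$ with $\approx^{(j)}_{ih} := {\kbisim{ij}} \circ {\kbisim{jh}}$ already is a full $d^2$-indexed family satisfying \Prop, \Forth, \Back (the conditions on a $d$-bisimulation impose no cross-constraints between different index pairs), so no diagonal extension is needed.
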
 

Remember that we are interested in characterising $d$-powers, i.e.\ $d$-dimensional products of the same underlying
graph. Such $d$-bisimulations, however, only witness the fact that a graph is bisimilar to something that could be
construed to be the Cartesian product of $d$ possibly different graphs, and this product also need not respect the
asynchronicity properties of $d$-products.

\begin{definition}
\label{def:persistence-roots}
Let $\Sigma,\Cols,d$ be given, $\G$ be an $\Func{d}(\Sigma,\Cols)$-graph and $(\kbisim{ij})_{0 \leq i, j \leq d-1}$ 
be the $d$-bisimulation associated with $\G$. 

We say that $(\kbisim{ij})_{0 \leq i, j \leq d-1}$ is \emph{persistent} if for all $v,v' \in V$ and $a \in \Sigma$: 
if $j \ne i$ and $\Transition{v}{a_i}{v'}$ or $\Transition{v}{\reset{i}}{v'}$ then $v \kbisim{jj} v'$.
 
Moreover, $(\kbisim{ij})_{0 \leq i, j \leq d-1}$ has the \emph{reset property} if for all $v,v' \in V$: 
if $\Transition{v}{\reset{i}}{v'}$ then $v' \kbisim{ii} v_I$.

$\G$ is a \emph{power} if $v_I \kbisim{ij} v_I$ for all $0 \leq i, j \leq d-1$.
\end{definition}
Persistence is a sanity criterion that holds on real products of $(\Sigma, \Cols)$-graphs:
advancing one component of a tuple will leave the other components unchanged. 
The reset property means that resets in the same component
of the candidate product always lead to nodes that behave like the root. Clearly,
if $\G = \prod_{i=0}^{d-1} \G_i$ for $(\Sigma,\Cols)$-graphs $\G_i$, then $(\kbisim{ij})_{0 \leq i,j <d}$ is automatically persistent and has the reset property. 

Finally, $\G$ is a power if the initial nodes of
all the candidate product members are bisimilar, whence $\G$ can be considered to be the $d$-power
of the same graph. Clearly, this holds on $\G$ if $\G \in \textsc{Power}_{d}$.

Now let $\Sigma,\Cols,d$ be given and $\G = (V,\Transition{}{}{},L,v_I)$ be an $\Func{d}(\Sigma,\Cols)$-graph 
with associated $d$-bisimulation $(\kbisim{ij})_{0 \leq i, j \leq d-1}$ that is persistent and has the reset property.
We define $d$ different $(\Sigma, \Cols)$-graphs $\G^i/d := (V/d, \TransitionD{}{}{}{i}, L^i/d, v^i_I)$ for $i \in [d]$ as follows.
\begin{itemize}
\item $V/d := V/_{\kbisim{ii}}$,
\item $\TransitionD{[v]_{\kbisim{ii}}}{a}{[v']_{\kbisim{ii}}}{i}$ iff there is $v'' \in [v']_{\kbisim{ii}}$ s.t.\ 
      $\Transition{v}{a_i}{v''}$,
\item $c \in (L^i/d)([v]_{\kbisim{ii}})$ iff $c_j \in L(v)$, 
\item $v^i_I := [v_I]_{\kbisim{ii}}$.
\end{itemize}

Each such $\G^i/d$ is indeed well-defined. 
Property \Prop gives us well-definedness of the labelling function $L^i/d$. 
The transition relation $\TransitionD{}{}{}{i}$ is well-defined in each $\G^i/d$ due to the back-and-forth conditions on 
$\kbisim{ii}$: Let $\TransitionD{[v_1]}{a}{[v_2]}{i}$, whence there are
$v'_1 \in [v_1]_{\kbisim{ii}}$ and $v'_2 \in [v_2]_{\kbisim{ii}}$ with $\Transition{v'_1}{a_i}{v'_2}$.
Then for any $v''_1 \in [v_1]_{\kbisim{ii}}$ and $v''_2 \in [v_2]_{\kbisim{ii}}$, we have
$v''_1 \kbisim{ii} v_1 \kbisim{ii} v''_1$, and $v''_2 \kbisim{ii} v_2 \kbisim{ii} v''_2$.

The following shows that this construction correctly re-factors graphs that indeed are powers.
\begin{lemma}
Let $\G_0,\dotsc,G_{d-1}$ be $(\Sigma, \Cols)$-graphs. For all $i \in [d]$ we have 
$\G_i \sim (\prod_{i=0}^{d-1} \G_i)^i/d$.
\end{lemma}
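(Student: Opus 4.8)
The plan is to write $\G := \prod_{k=0}^{d-1} \G_k$, with node set $V = V_0 \times \dotsb \times V_{d-1}$ and root $v_I = (v^0_I, \dotsc, v^{d-1}_I)$, and, for each fixed $i \in [d]$, to exhibit an explicit bisimulation between $\G_i$ and $\G^i/d$. Note first that $\G^i/d$ is even defined here because, as already observed, the $d$-bisimulation $(\kbisim{ij})_{0 \le i,j \le d-1}$ of a product graph is automatically persistent and has the reset property. The candidate relation is
\[
R_i \;:=\; \{\,(v_i,\,[v]_{\kbisim{ii}}) \mid v \in V\,\} \;\subseteq\; V_i \times (V/_{\kbisim{ii}}),
\]
relating the $i$-th coordinate of a tuple $v \in V$, viewed as a node of $\G_i$, with the $\kbisim{ii}$-class of $v$, viewed as a node of $\G^i/d$. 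Since $(v_I)_i = v^i_I$, the pair of roots $(v^i_I, [v_I]_{\kbisim{ii}})$ lies in $R_i$, so it only remains to check the three clauses \Prop, \Forth, \Back.

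Clause \Prop is immediate from the definitions of the colourings involved: $c \in (L^i/d)([v]_{\kbisim{ii}})$ iff $c_i \in L(v)$ iff $c \in L_i(v_i)$ by the product colouring, so $v_i$ and $[v]_{\kbisim{ii}}$ carry exactly the same colours. For \Forth, given $\TransitionD{v_i}{a}{u}{i}$ in $\G_i$, set $w := v[i \leftarrow u] \in V$; then $\Transition{v}{a_i}{w}$ in $\G$ by the definition of the asynchronous product, hence $\TransitionD{[v]_{\kbisim{ii}}}{a}{[w]_{\kbisim{ii}}}{i}$ in $\G^i/d$ (the representative $w$ of its own class witnesses the defining condition, using pseudo-reflexivity from Lemma~\ref{lem:kbisim-eq}), and $(u, [w]_{\kbisim{ii}}) = (w_i, [w]_{\kbisim{ii}}) \in R_i$. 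For \Back, given $\TransitionD{[v]_{\kbisim{ii}}}{a}{C}{i}$ in $\G^i/d$, one uses that $\TransitionD{}{}{}{i}$ on the quotient does not depend on the chosen representative --- the point already verified in the text preceding the lemma, resting on the \Forth/\Back clauses of $\kbisim{ii}$ --- so the defining condition holds for the representative $v$ itself: there is $w \in C$ with $\Transition{v}{a_i}{w}$ in $\G$, which by the product definition means $\TransitionD{v_i}{a}{w_i}{i}$ in $\G_i$; and $(w_i, C) \in R_i$ because $w \in C$ forces $C = [w]_{\kbisim{ii}}$.

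I do not expect a real obstacle here: the whole argument is a direct, careful unfolding of the three nested definitions (asynchronous product, quotient $\G^i/d$, bisimulation), and in particular it never needs the --- true, but separately provable --- characterisation that $v \kbisim{ii} v'$ iff $v_i \sim v'_i$ in $\G_i$. The one step that genuinely relies on something non-trivial is \Back, where well-definedness of the quotient transition relation is invoked; the remaining difficulties are purely notational, chiefly the overloaded symbol $v^i_I$ (root of $\G_i$ versus root of $\G^i/d$) and the minor typo of $c_j$ for $c_i$ in the definition of $L^i/d$.
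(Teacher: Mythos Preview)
Your proof is correct and follows essentially the same approach as the paper: exhibit an explicit bisimulation between $\G_i$ and $(\prod_k \G_k)^i/d$ and verify \Prop, \Forth, \Back. The only minor difference is the choice of relation---the paper uses the slightly larger $\{(v, [(v_0,\dotsc,v_{d-1})]_{\kbisim{ii}}) \mid v_i \sim v\}$, closing under bisimilarity in $\G_i$, whereas your $R_i = \{(v_i, [v]_{\kbisim{ii}}) \mid v \in V\}$ is the more direct ``projection'' relation; both work, and yours has the advantage (which you note) of not needing any facts about $\sim$ on $\G_i$ beyond the definitions.
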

\begin{proof}
It suffices to show that $\{(v, [(v_0,\dotsc,v_{d-1})]_{\kbisim{ii}}) \mid v_i \sim v\}$ is a bisimulation 
that includes $(v^i_I, [(v^0_I, \dotsc, v^{d-1}_I)]_{\kbisim{ii}})$ where $v^j_I$ is the initial
node of $\G_j$. This is rather straightforward.
\end{proof}

We now show that the above construction yields factors such that their product
is bisimilar to the original graph again.
\begin{lemma}
Let $\Sigma,\Cols,d$ be given and $\G = (V,\Transition{}{}{},L,v_I)$ be an $\Func{d}(\Sigma,\Cols)$-graph with
associated $d$-bisimulation $(\kbisim{ij})_{0 \leq i, j \leq d-1}$. If $(\kbisim{ij})_{0 \leq i, j \leq d-1}$ is persistent and has the reset property, then $\G \sim \prod_{i=0}^{d-1} \G^i/d$.
\end{lemma}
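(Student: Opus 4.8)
The plan is to exhibit an explicit bisimulation between $\G$ and $\prod_{i=0}^{d-1} \G^i/d$. Since the node set of the product is $(V/{\kbisim{00}}) \times \dotsb \times (V/{\kbisim{(d-1)(d-1)}})$, the natural candidate relation is
\[
R := \bigl\{\bigl(v,\,([v_0]_{\kbisim{00}},\dotsc,[v_{d-1}]_{\kbisim{(d-1)(d-1)}})\bigr) \;\bigm|\; v \kbisim{ii} v_i \text{ for all } i \in [d]\bigr\}.
\]
First I would check that $R$ contains the pair of roots: by the hypothesis that $\G$ need not be a power here we only know $v_I \kbisim{ii} v_I$ (pseudo-reflexivity, Lem.~\ref{lem:kbisim-eq}), so $\bigl(v_I,([v_I]_{\kbisim{00}},\dotsc,[v_I]_{\kbisim{(d-1)(d-1)}})\bigr) \in R$ by taking $v_i = v_I$ for all $i$. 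Note the root of $\prod_{i=0}^{d-1}\G^i/d$ is exactly this tuple since $v^i_I = [v_I]_{\kbisim{ii}}$.

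Next I would verify the three bisimulation clauses for an arbitrary pair in $R$. For \Prop: the colours of a product node $([v_0],\dotsc,[v_{d-1}])$ are $\{c_i \mid c \in (L^i/d)([v_i]_{\kbisim{ii}})\}$, and $(L^i/d)([v_i]) \ni c$ iff $c_i \in L(v_i)$ iff (by \Prop for $\kbisim{ii}$ and $v \kbisim{ii} v_i$) $c_i \in L(v)$, so the colour sets match. For \Forth and \Back I would split on the two kinds of edges in the product. An $a_i$-edge of the product changes only the $i$-th component, via a $\G^i/d$-edge $\TransitionD{[v_i]_{\kbisim{ii}}}{a}{[w]_{\kbisim{ii}}}{i}$, which by definition means some representative has an $a_i$-successor in $[w]_{\kbisim{ii}}$; using the back-and-forth conditions on $\kbisim{ii}$ together with $v \kbisim{ii} v_i$ one matches this with an $a_i$-edge $\Transition{v}{a_i}{v'}$ in $\G$ with $v' \kbisim{ii} w$, and \emph{persistence} guarantees $v \kbisim{jj} v'$ for $j \ne i$ so the other components of the tuple are still correctly related to $v'$. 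Symmetrically for the other direction. A $\reset{i}$-edge of the product sends the $i$-th component to $v^i_I = [v_I]_{\kbisim{ii}}$ and leaves the rest fixed; matching it requires a $\reset{i}$-successor $v'$ of $v$ in $\G$ with $v' \kbisim{ii} v_I$ (the \emph{reset property}) and $v \kbisim{jj} v'$ for $j\neq i$ (persistence again). The reverse direction for $\reset{i}$ uses that every $\Func{d}(\Sigma,\Cols)$-graph obtained from the product has, and $\G$ must also have via the $d$-bisimulation, appropriate $\reset{i}$-edges — here one uses that $\G^i/d$ by construction always has a $\reset{i}$ back-edge only through its being a $(\Sigma,\Cols)$-graph... actually the product's $\reset{i}$-edge from $([v_0],\dotsc,[v_{d-1}])$ always exists, and we must produce a matching $\reset{i}$-edge from $v$; this needs the $d$-bisimulation's \Back-type closure applied appropriately, which is exactly what persistence plus reset give.

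The main obstacle I anticipate is the $\reset{i}$-direction of \Back: the product graph \emph{always} has a $\reset{i}$-edge out of every node (it is a total reset operation), so to match it we must know that $\G$ likewise has a $\reset{i}$-successor $v'$ of $v$ with $v' \kbisim{ii} v_I$ and $v \kbisim{jj} v'$ for all $j \ne i$. This is not immediate from persistence and the reset property alone (which constrain $\reset{i}$-edges that \emph{exist} but do not assert they exist); presumably the argument here leans on $\G$ being drawn from a context where such edges are guaranteed, or on the largest $d$-bisimulation already encoding enough closure — I would expect the paper to handle this by the standing assumption that $\G$ is finitely representable via explicit back-edges and carries the $\Func{d}$-signature faithfully, or by a preliminary normalisation. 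Apart from that subtlety, every clause reduces to a mechanical application of \Prop, \Forth, \Back for the relations $\kbisim{ii}$ combined with persistence to propagate the "unchanged" components and the reset property to handle resets, so the bulk of the proof is routine bookkeeping over the $d$ coordinates.
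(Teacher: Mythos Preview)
Your approach is essentially the same as the paper's. Your relation $R$ is identical to the paper's choice $R = \{(v,([v]_{\kbisim{00}},\dotsc,[v]_{\kbisim{(d-1)(d-1)}}))\mid v\in V\}$: since $v \kbisim{ii} v_i$ iff $[v]_{\kbisim{ii}} = [v_i]_{\kbisim{ii}}$, your more general-looking definition collapses to exactly this set. The verification of \Prop, \Forth, and \Back for $a_i$-edges proceeds just as you outline, using persistence to keep the $j\neq i$ components fixed.

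The obstacle you flag in the $\reset{i}$ case of \Back is real and well spotted, but the paper does not resolve it by any additional argument: it simply writes ``Consider the unique node $v''$ in $\G$ with $\Transition{v}{\reset{i}}{v''}$'' and proceeds, i.e.\ it tacitly assumes that every node of $\G$ has (exactly one) $\reset{i}$-successor. So your proposal matches the paper's proof, and the caveat you raise is a standing implicit assumption in the paper rather than something you are missing.
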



\begin{proof}

Let $\G = (V,\Transition{}{}{},L,v_I)$ and let $\G^i/d = (V/d, \TransitionD{}{}{}{i}, L^i/d, v^i_I)$. 
We write $\G' = (V', \TransitionD{}{}{}{\Pi}, L', v'_I)$ for $\prod_{i=0}^{d-1} \G^i/d$.

It suffices to show that the relation 
\[R = \{ (v, ([v]_{\kbisim{00}}, \ldots, [v]_{\kbisim{d-1\;d-1}})) \mid v \in V \}
\] is a bisimulation and that
$(v_I,([v_I]_{\kbisim{00}}, \ldots, [v_I]_{\kbisim{d-1\;d-1}})) \in R$. 
For the sake of readability, we drop the subscripts $[\cdot]_{\kbisim{jj}}$ from
the equivalence classes in places where there is no possibility
of confusion.

\Prop: For all states, we have that $v$ and $([v], \dotsc, [v])$ have the same labelling, since 
$c_i \in L'([v], \dotsc, [v])$ iff $c \in L^i_d([v]_{\kbisim{ii}})$ iff $c_i \in L(v)$.

\Forth: Assume that $\Transition{v}{a_i}{v'}$ in $\G$. We have to show that there
is an $a_i$-transition  in $\G'$ from $([v], \dotsc, [v])$
to $([v'], \dotsc, [v'])$.

By persistence, we have $v \kbisim{jj} v'$ (and $v' \kbisim{jj} v$) for all $j \not = i$, 
whence $[v]_{\kbisim{jj}} = [v']_{\kbisim{jj}}$ for all $j \not = i$. Hence, it suffices to observe
that $\TransitionD{[v]_{\kbisim{ii}}}{a}{[v']_{\kbisim{ii}}}{i}$ in $\G^i/d$ to obtain the desired result.

\Back: Conversely, assume that 
there is an $a_i$-transition from $([v], \dotsc, [v])$ to 
\begin{displaymath}
([v]_{\kbisim{00}}, \dotsc, [v]_{\kbisim{i-1\;i-1}}, [v']_{\kbisim{ii}}, [v]_{\kbisim{i+1\;i+1}},\dotsc, [v]_{\kbisim{d-1\;d-1}})
\end{displaymath}
in $\G'$. 
We have to show that $\Transition{v}{a_i}{v''}$ in $\G$ for some $v''$ such that $[v]_{\kbisim{jj}} = [v'']_{\kbisim{jj}}$ for all $j \not = i$, and $[v']_{\kbisim{ii}} = [v'']_{\kbisim{ii}}$. 

From the transition above we obtain that $\Transition{[v]_{\kbisim{ii}}}{a}{[v']_{\kbisim{ii}}}$ in $\G^i/_d$. Hence,
there is $v'' \in [v']_{\kbisim{ii}}$ s.t.\ $\Transition{v}{(a, i)}{v''}$ in $\G$. By persistence, $[v]_{\kbisim{jj}} = [v'']_{\kbisim{jj}}$ for 
all $j \not = i$. Since also $[v'']_{\kbisim{ii}} = [v']_{\kbisim{ii}}$, we are done.

\Forth (again, this time for the $\reset{i}$-transitions):
Now assume that $\Transition{v}{\reset{i}}{v'}$ in $\G$.  We have to show that
there is a $\reset{i}$-transition from$([v], \dotsc, [v])$
to $([v'], \dotsc, [v'])$
in $\G'$. By persistence of $\G$, we have $v \kbisim{jj} v'$ and, hence
$[v]_{\kbisim{jj}} = [v']_{\kbisim{jj}}$ for all $i \not =j$. By the
reset property, $v' \kbisim{ii} v_I$. Since in $\G'$, reset transitions
simply change the $i$th component to $v^i_I = [v_I]_{\kbisim{ii}}$, we obtain that $[v_I]_{\kbisim{ii}} = [v']_{\kbisim{ii}}$.

\Back (for $\reset{i}$-transitions): Conversely, assume that there is a $\reset{i}$-transition in $\G'$
from $([v], \dotsc, [v])$ to 
\[([v]_{\kbisim{00}}, \dotsc,([v]_{\kbisim{i-1\;i-1}},  [v']_{\kbisim{ii}}, [v]_{\kbisim{i+1\;i+1}} \dotsc, [v]_{\kbisim{d-1\;d-1}}).\]
We have to show that $\Transition{v}{\reset{i}}{v''}$ in $\G$
for some $v''$ with $[v, j]_{\kbisim{jj}} = [v'']_{\kbisim{jj}}$ for all $j \not = i$,
and $[v']_{\kbisim{ii}} = [v'']_{\kbisim{ii}}$. 
Due to the way that reset transitions are defined, we have $[v']_{\kbisim{ii}} = v_I^i = [v_I]_{\kbisim{ii}}$.
Consider the unique node $v''$ in $\G$ with $\Transition{v}{\reset{i}}{v''}$.
By persistence, we obtain that $[v]_{\kbisim{jj}} = [v'']_{\kbisim{jj}}$ for
all $j \not = i$, and by the reset property, $[v'']_{\kbisim{ii}} = [v_I]_{\kbisim{ii}}$. 
\end{proof}

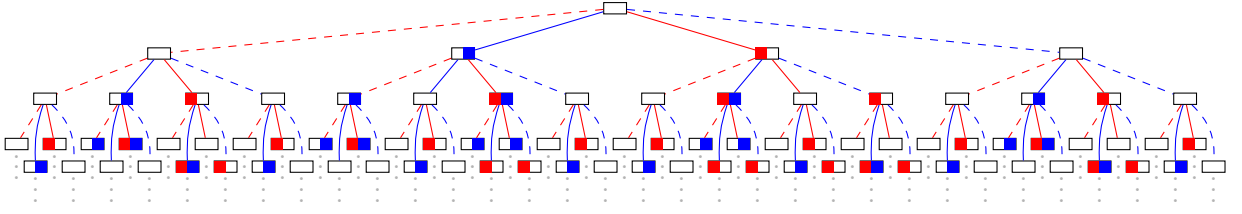
\begin{figure*}
\begin{center}
\begin{tikzpicture}[level distance=6mm,
                    level 1/.style={sibling distance=4cm},
                    level 2/.style={sibling distance=1cm},
                    level 3/.style={sibling distance=2.5mm},
                    level 4/.style={level distance=2mm}]
  \tikzstyle{a0}=[red,solid]
  \tikzstyle{a1}=[blue,solid]
  \tikzstyle{rst0}=[red,dashed]
  \tikzstyle{rst1}=[blue,dashed]
  \tikzstyle{f0}=[fill=red]
  \tikzstyle{f1}=[fill=blue]

  \newcommand{\vd}{\textcolor{gray!60}{\vdots}}
  
  \tikzstyle{second}=[edge from parent path={(\tikzparentnode\tikzparentanchor) edge [bend right=10] (\tikzchildnode\tikzchildanchor)}]
  \tikzstyle{fourth}=[edge from parent path={(\tikzparentnode\tikzparentanchor) edge [bend left=25] (\tikzchildnode\tikzchildanchor)}]
    
  \node[snode] (eps) {\fnone{}{}}
   child { node[snode] (r0-00) {\fnone{}{}}
           child { node[snode] (r0-00-r0-00) {\fnone{}{}}
                   child { node[snode] (r0-00-r0-00-r0-00) {\fnone{}{}}
                           child { node {$\vd$} edge from parent [draw=none] } 
                           edge from parent [rst0]
                         }
                   child { node[snode,yshift=-3mm] (r0-00-r0-00-a1-01) {\fone{}{}}
                           child { node {$\vd$} edge from parent [draw=none] } 
                           edge from parent [a1,second]
                         }
                   child { node[snode] (r0-00-r0-00-a0-10) {\fzero{}{}}
                           child { node {$\vd$} edge from parent [draw=none] } 
                           edge from parent [a0]
                         }
                   child { node[snode,yshift=-3mm] (r0-00-r0-00-r1-00) {\fnone{}{}}
                           child { node {$\vd$} edge from parent [draw=none] } 
                           edge from parent [rst1,fourth]
                         }
                   edge from parent [rst0]
                 }
           child { node[snode] (r0-00-a1-01) {\fone{}{}}
                   child { node[snode] (r0-00-a1-01-r0-01) {\fone{}{}}
                           child { node {$\vd$} edge from parent [draw=none] } 
                           edge from parent [rst0]
                         }
                   child { node[snode,yshift=-3mm] (r0-00-a1-01-a1-02) {\fnone{}{}}
                           child { node {$\vd$} edge from parent [draw=none] } 
                           edge from parent [a1,second]
                         }
                   child { node[snode] (r0-00-a1-01-a0-11) {\fboth{}{}}
                           child { node {$\vd$} edge from parent [draw=none] } 
                           edge from parent [a0]
                         }
                   child { node[snode,yshift=-3mm] (r0-00-a1-01-r1-00) {\fnone{}{}}
                           child { node {$\vd$} edge from parent [draw=none] } 
                           edge from parent [rst1,fourth]
                         }
                   edge from parent [a1]
                 }
           child { node[snode] (r0-00-a0-10) {\fzero{}{}}
                   child { node[snode] (r0-00-a0-10-r0-00) {\fnone{}{}}
                           child { node {$\vd$} edge from parent [draw=none] } 
                           edge from parent [rst0]
                         }
                   child { node[snode,yshift=-3mm] (r0-00-a0-10-a1-11) {\fboth{}{}}
                           child { node {$\vd$} edge from parent [draw=none] } 
                           edge from parent [a1,second]
                         }
                   child { node[snode] (r0-00-a0-10-a0-20) {\fnone{}{}}
                           child { node {$\vd$} edge from parent [draw=none] } 
                           edge from parent [a0]
                         }
                   child { node[snode,yshift=-3mm] (r0-00-a0-10-r1-10) {\fzero{}{}}
                           child { node {$\vd$} edge from parent [draw=none] } 
                           edge from parent [rst1,fourth]
                         }
                   edge from parent [a0]
                 }
           child { node[snode] (r0-00-r1-00) {\fnone{}{}}
                   child { node[snode] (r0-00-r1-00-r0-00) {\fnone{}{}}
                           child { node {$\vd$} edge from parent [draw=none] } 
                           edge from parent [rst0]
                         }
                   child { node[snode,yshift=-3mm] (r0-00-r1-00-a1-01) {\fone{}{}}
                           child { node {$\vd$} edge from parent [draw=none] } 
                           edge from parent [a1,second]
                         }
                   child { node[snode] (r0-00-r1-00-a0-10) {\fzero{}{}}
                           child { node {$\vd$} edge from parent [draw=none] } 
                           edge from parent [a0]
                         }
                   child { node[snode,yshift=-3mm] (r0-00-r1-00-r1-00) {\fnone{}{}}
                           child { node {$\vd$} edge from parent [draw=none] } 
                           edge from parent [rst1,fourth]
                         }
                   edge from parent [rst1]
                 }
           edge from parent [rst0]
         }
   child { node[snode] (a1-01) {\fone{}{}}
           child { node[snode] (a1-01-r0-01) {\fone{}{}}
                   child { node[snode] (a1-01-r0-01-r0-01) {\fone{}{}}
                           child { node {$\vd$} edge from parent [draw=none] } 
                           edge from parent [rst0]
                         }
                   child { node[snode,yshift=-3mm] (a1-01-r0-01-a1-02) {\fnone{}{}}
                           child { node {$\vd$} edge from parent [draw=none] } 
                           edge from parent [a1,second]
                         }
                   child { node[snode] (a1-01-r0-01-a0-11) {\fboth{}{}}
                           child { node {$\vd$} edge from parent [draw=none] } 
                           edge from parent [a0]
                         }
                   child { node[snode,yshift=-3mm] (a1-01-r0-01-r1-00) {\fnone{}{}}
                           child { node {$\vd$} edge from parent [draw=none] } 
                           edge from parent [rst1,fourth]
                         }
                   edge from parent [rst0]
                 }
           child { node[snode] (a1-01-a1-02) {\fnone{}{}}
                   child { node[snode] (a1-01-a1-02-r0-02) {\fnone{}{}}
                           child { node {$\vd$} edge from parent [draw=none] } 
                           edge from parent [rst0]
                         }
                   child { node[snode,yshift=-3mm] (a1-01-a1-02-a1-01) {\fone{}{}}
                           child { node {$\vd$} edge from parent [draw=none] } 
                           edge from parent [a1,second]
                         }
                   child { node[snode] (a1-01-a1-02-a0-12) {\fzero{}{}}
                           child { node {$\vd$} edge from parent [draw=none] } 
                           edge from parent [a0]
                         }
                   child { node[snode,yshift=-3mm] (a1-01-a1-02-r1-00) {\fnone{}{}}
                           child { node {$\vd$} edge from parent [draw=none] } 
                           edge from parent [rst1,fourth]
                         }
                   edge from parent [a1]
                 }
           child { node[snode] (a1-01-a0-11) {\fboth{}{}}
                   child { node[snode] (a1-01-a0-11-r0-01) {\fone{}{}}
                           child { node {$\vd$} edge from parent [draw=none] } 
                           edge from parent [rst0]
                         }
                   child { node[snode,yshift=-3mm] (a1-01-a0-11-a1-12) {\fzero{}{}}
                           child { node {$\vd$} edge from parent [draw=none] } 
                           edge from parent [a1,second]
                         }
                   child { node[snode] (a1-01-a0-11-a0-21) {\fone{}{}}
                           child { node {$\vd$} edge from parent [draw=none] } 
                           edge from parent [a0]
                         }
                   child { node[snode,yshift=-3mm] (a1-01-a0-11-r1-10) {\fzero{}{}}
                           child { node {$\vd$} edge from parent [draw=none] } 
                           edge from parent [rst1,fourth]
                         }
                   edge from parent [a0]
                 }
           child { node[snode] (a1-01-r1-00) {\fnone{}{}}
                   child { node[snode] (a1-01-r1-00-r0-00) {\fnone{}{}}
                           child { node {$\vd$} edge from parent [draw=none] } 
                           edge from parent [rst0]
                         }
                   child { node[snode,yshift=-3mm] (a1-01-r1-00-a1-01) {\fone{}{}}
                           child { node {$\vd$} edge from parent [draw=none] } 
                           edge from parent [a1,second]
                         }
                   child { node[snode] (a1-01-r1-00-a0-10) {\fzero{}{}}
                           child { node {$\vd$} edge from parent [draw=none] } 
                           edge from parent [a0]
                         }
                   child { node[snode,yshift=-3mm] (a1-01-r1-00-r1-00) {\fnone{}{}}
                           child { node {$\vd$} edge from parent [draw=none] } 
                           edge from parent [rst1,fourth]
                         }
                   edge from parent [rst1]
                 }
           edge from parent [a1]
         }
   child { node[snode] (a0-10) {\fzero{}{}}
           child { node[snode] (a0-10-r0-00) {\fnone{}{}}
                   child { node[snode] (a0-10-r0-00-r0-00) {\fnone{}{}}
                           child { node {$\vd$} edge from parent [draw=none] } 
                           edge from parent [rst0]
                         }
                   child { node[snode,yshift=-3mm] (a0-10-r0-00-a1-01) {\fone{}{}}
                           child { node {$\vd$} edge from parent [draw=none] } 
                           edge from parent [a1,second]
                         }
                   child { node[snode] (a0-10-r0-00-a0-10) {\fzero{}{}}
                           child { node {$\vd$} edge from parent [draw=none] } 
                           edge from parent [a0]
                         }
                   child { node[snode,yshift=-3mm] (a0-10-r0-00-r1-00) {\fnone{}{}}
                           child { node {$\vd$} edge from parent [draw=none] } 
                           edge from parent [rst1,fourth]
                         }
                   edge from parent [rst0]
                 }
           child { node[snode] (a0-10-a1-11) {\fboth{}{}}
                   child { node[snode] (a0-10-a1-11-r0-01) {\fone{}{}}
                           child { node {$\vd$} edge from parent [draw=none] } 
                           edge from parent [rst0]
                         }
                   child { node[snode,yshift=-3mm] (a0-10-a1-11-a1-12) {\fzero{}{}}
                           child { node {$\vd$} edge from parent [draw=none] } 
                           edge from parent [a1,second]
                         }
                   child { node[snode] (a0-10-a1-11-a0-21) {\fone{}{}}
                           child { node {$\vd$} edge from parent [draw=none] } 
                           edge from parent [a0]
                         }
                   child { node[snode,yshift=-3mm] (a0-10-a1-11-r1-10) {\fzero{}{}}
                           child { node {$\vd$} edge from parent [draw=none] } 
                           edge from parent [rst1,fourth]
                         }
                   edge from parent [a1]
                 }
           child { node[snode] (a0-10-a0-20) {\fnone{}{}}
                   child { node[snode] (a0-10-a0-20-r0-00) {\fnone{}{}}
                           child { node {$\vd$} edge from parent [draw=none] } 
                           edge from parent [rst0]
                         }
                   child { node[snode,yshift=-3mm] (a0-10-a0-20-a1-21) {\fone{}{}}
                           child { node {$\vd$} edge from parent [draw=none] } 
                           edge from parent [a1,second]
                         }
                   child { node[snode] (a0-10-a0-20-a0-10) {\fzero{}{}}
                           child { node {$\vd$} edge from parent [draw=none] } 
                           edge from parent [a0]
                         }
                   child { node[snode,yshift=-3mm] (a0-10-a0-20-r1-20) {\fzero{}{}}
                           child { node {$\vd$} edge from parent [draw=none] } 
                           edge from parent [rst1,fourth]
                         }
                   edge from parent [a0]
                 }
           child { node[snode] (a0-10-r1-10) {\fzero{}{}}
                   child { node[snode] (a0-10-r1-10-r0-00) {\fnone{}{}}
                           child { node {$\vd$} edge from parent [draw=none] } 
                           edge from parent [rst0]
                         }
                   child { node[snode,yshift=-3mm] (a0-10-r1-10-a1-11) {\fboth{}{}}
                           child { node {$\vd$} edge from parent [draw=none] } 
                           edge from parent [a1,second]
                         }
                   child { node[snode] (a0-10-r1-10-a0-20) {\fnone{}{}}
                           child { node {$\vd$} edge from parent [draw=none] } 
                           edge from parent [a0]
                         }
                   child { node[snode,yshift=-3mm] (a0-10-r1-10-r1-10) {\fzero{}{}}
                           child { node {$\vd$} edge from parent [draw=none] } 
                           edge from parent [rst1,fourth]
                         }
                   edge from parent [rst1]
                 }
           edge from parent [a0]
         }
   child { node[snode] (r1-00) {\fnone{}{}}
           child { node[snode] (r1-00-r0-00) {\fnone{}{}}
                   child { node[snode] (r1-00-r0-00-r0-00) {\fnone{}{}}
                           child { node {$\vd$} edge from parent [draw=none] } 
                           edge from parent [rst0]
                         }
                   child { node[snode,yshift=-3mm] (r1-00-r0-00-a1-01) {\fone{}{}}
                           child { node {$\vd$} edge from parent [draw=none] } 
                           edge from parent [a1,second]
                         }
                   child { node[snode] (r1-00-r0-00-a0-10) {\fzero{}{}}
                           child { node {$\vd$} edge from parent [draw=none] } 
                           edge from parent [a0]
                         }
                   child { node[snode,yshift=-3mm] (r1-00-r0-00-r1-00) {\fnone{}{}}
                           child { node {$\vd$} edge from parent [draw=none] } 
                           edge from parent [rst1,fourth]
                         }
                   edge from parent [rst0]
                 }
           child { node[snode] (r1-00-a1-01) {\fone{}{}}
                   child { node[snode] (r1-00-a1-01-r0-01) {\fone{}{}}
                           child { node {$\vd$} edge from parent [draw=none] } 
                           edge from parent [rst0]
                         }
                   child { node[snode,yshift=-3mm] (r1-00-a1-01-a1-02) {\fnone{}{}}
                           child { node {$\vd$} edge from parent [draw=none] } 
                           edge from parent [a1,second]
                         }
                   child { node[snode] (r1-00-a1-01-a0-11) {\fboth{}{}}
                           child { node {$\vd$} edge from parent [draw=none] } 
                           edge from parent [a0]
                         }
                   child { node[snode,yshift=-3mm] (r1-00-a1-01-r1-00) {\fnone{}{}}
                           child { node {$\vd$} edge from parent [draw=none] } 
                           edge from parent [rst1,fourth]
                         }
                   edge from parent [a1]
                 }
           child { node[snode] (r1-00-a0-10) {\fzero{}{}}
                   child { node[snode] (r1-00-a0-10-r0-00) {\fnone{}{}}
                           child { node {$\vd$} edge from parent [draw=none] } 
                           edge from parent [rst0]
                         }
                   child { node[snode,yshift=-3mm] (r1-00-a0-10-a1-11) {\fboth{}{}}
                           child { node {$\vd$} edge from parent [draw=none] } 
                           edge from parent [a1,second]
                         }
                   child { node[snode] (r1-00-a0-10-a0-20) {\fnone{}{}}
                           child { node {$\vd$} edge from parent [draw=none] } 
                           edge from parent [a0]
                         }
                   child { node[snode,yshift=-3mm] (r1-00-a0-10-r1-10) {\fzero{}{}}
                           child { node {$\vd$} edge from parent [draw=none] } 
                           edge from parent [rst1,fourth]
                         }
                   edge from parent [a0]
                 }
           child { node[snode] (r1-00-r1-00) {\fnone{}{}}
                   child { node[snode] (r1-00-r1-00-r0-00) {\fnone{}{}}
                           child { node {$\vd$} edge from parent [draw=none] } 
                           edge from parent [rst0]
                         }
                   child { node[snode,yshift=-3mm] (r1-00-r1-00-a1-01) {\fone{}{}}
                           child { node {$\vd$} edge from parent [draw=none] } 
                           edge from parent [a1,second]
                         }
                   child { node[snode] (r1-00-r1-00-a0-10) {\fzero{}{}}
                           child { node {$\vd$} edge from parent [draw=none] } 
                           edge from parent [a0]
                         }
                   child { node[snode,yshift=-3mm] (r1-00-r1-00-r1-00) {\fnone{}{}}
                           child { node {$\vd$} edge from parent [draw=none] } 
                           edge from parent [rst1,fourth]
                         }
                   edge from parent [rst1]
                 }
           edge from parent [rst1]
         }
  ; 


  
%

\end{tikzpicture}
\end{center}
\caption{Tree unfolding of the graph $\G^2$ from Fig.~\ref{fig:power}.}
\label{fig:dbisimulationtree}
\end{figure*}

\begin{example}
Reconsider the $(\{a\},\{f\})$-graph $\G$ from Ex.~\ref{exm:power} and its $2$-power $\G^2$ shown 
in Fig.~\ref{fig:power}. Its tree unfolding $(\G^2)^{\mathsf{unf}}$ is shown in Fig.~\ref{fig:dbisimulationtree}.
We omit arrow tips; all arrows point downwards in these trees as usual.

While the $2$-bisimulation $(\kbisim{ij})_{i,j \in [2]}$ exists on $\G^2$, we are ultimately interested in
characterisations of tree languages, with tree unfoldings of power graphs playing an important role. However,
it is impossible to draw $(\kbisim{ij})_{i,j \in [2]}$ on $(\G^2)^{\mathsf{unf}}$ without covering the entire
shown part of the tree in a blob of edges. Even on the 85 nodes alone that are shown of that tree, relation
$\kbisim{01}$ for instance contains $27^2 + 58^2 = 3645$ pairs already.

However, in this particular case, it is not difficult to describe these relations. The reason is of course
the simplicity of the underlying graph $\G$, in particular the fact that its bisimulation quotient partitions
the three nodes into two classes, and these are already distinguished by the atomic formula $f$, resp.\ by the
red colour on the left and the blue colour on the right. So,
\begin{itemize}
\item relation $\kbisim{00}$ contains all pairs of nodes $(u,v)$ whose left parts are coloured in the same
      way,
\item likewise, relation $\kbisim{11}$ contains all such pairs in which the right parts are coloured 
      in the same way,
\item relation $\kbisim{01}$ contains all pairs $(u,v)$ such that the left part of $u$ is red if and
      only if the right part of $v$ is blue, and
\item relation $\kbisim{10}$ is necessarily the inverse of $\kbisim{01}$.
\end{itemize}
Since all four relations can be seen -- in this particular case -- as relations on the four node types 
$\tikz \node[tnode] {\fnone{}{}};$, $\tikz \node[tnode] {\fzero{}{}};$,
$\tikz \node[tnode] {\fone{}{}};$ and $\tikz \node[tnode] {\fboth{}{}};$,
we can depict them more easily as a graph on these types, as follows.
\begin{center}
\begin{tikzpicture}[node distance=2cm]
  \node[tnode]             (ww) {\fnone{}{}};
  \node[tnode,right of=ww] (wb) {\fone{}{}};
  \node[tnode,below of=ww] (rw) {\fzero{}{}};
  \node[tnode,right of=rw] (rb) {\fboth{}{}};
  
  \path[semithick,->] 
    (ww) edge [out=120,in=170,looseness=10] node [above]            {\scriptsize $00,01,10,11,$} (ww) 
         edge [bend left=15]                node [pos=.2,right=-2pt]       {\scriptsize $01,11$}    (rw)
         edge [bend left=15]                node [above=-2pt]       {\scriptsize $00,10$}    (wb)
    (wb) edge [out=60,in=20,looseness=10]   node [above]            {\scriptsize $00,11$} (wb) 
         edge [bend left=15]                node [above=-2pt]       {\scriptsize $00,01$} (ww)
         edge [bend left=15]                node [sloped,below right=-2pt] {\scriptsize $01,10$} (rw)
         edge [bend left=15]                node [right=-2pt]       {\scriptsize $10,11$} (rb)
    (rw) edge [out=240,in=200,looseness=10] node [below]            {\scriptsize $00,11$} (rw) 
         edge [bend left=15]                node [left=-2pt]        {\scriptsize $10,11$}    (ww)
         edge [bend left=15]                node [sloped,above left=-2pt]  {\scriptsize $01,10$} (wb)
         edge [bend left=15]                node [below=-2pt]       {\scriptsize $00,01$} (rb)
    (rb) edge [out=300,in=340,looseness=10] node [below]            {\scriptsize $00,01,10,11$} (rb)
         edge [bend left=15]                node [below=-2pt]       {\scriptsize $00,10$} (rw)
         edge [bend left=15]                node [pos=.2,left=-2pt]        {\scriptsize $01,11$} (wb)
                      ;
\end{tikzpicture}
\end{center}
\end{example}


\subsection{Definability in $\polymucalc$}

We provide a second characterisation of power graphs, showing that the property of being a $d$-product,
and in fact a $d$-power, is in fact definable in the polyadic $\mu$-calculus itself. In fact, this is
possible in the \emph{dyadic} $\mu$-calculus $\mucalc{2}$ already.

Take a $\Func{d}(\Sigma,\Cols)$-graph $\G$ for some $\Sigma,\Cols$. Note that it has actions and atomic 
propositions of the form $a_i$, resp.\ $c_i$ for $i \in [d]$. Now, consider interpreting a \polymucalc-formula 
of arity $d'$ over $\G$. Its atomic formulas are of the form $c'_j$ for $j \in [d']$ and $c'$ of the form
$c_i$. Hence, in order to avoid confusion with the double indices, we write such a formula as $(c_i)_j$.
For actions, confusion should not arise in the first place, as the syntax prescribes formulas of the form
$\mudiam{a_i}{j}$ in such cases. However, when monofying such formulas according to Prop.~\ref{prop:langelozes},
we would also obtain actions with double indices which we would also write as $(a_i)_j$.
 
Let $\Sigma,\Cols, d \ge 1$ and $i,j \in [d]$ be given. Consider the $\mucalc{2}$-formula 
\begin{align*}
\varphi^{i,j}_{\mathsf{bis}} = \nu X_{i,j}.\ &\Big(\bigwedge_{c \in \Cols} (c_i)_0 \leftrightarrow (c_j)_1\Big) \ \wedge \ \\
&\Big(\bigwedge_{a \in \Sigma}  (\mubox{a_i}{0} \mudiam{a_j}{1} X_{i,j}) \wedge (\mubox{a_j}{1}\mudiam{a_i}{0} X_{i,j})\Big)
\end{align*}
which is readily seen to define $\kbisim{ij}$ as a greatest fixpoint
alongside the standard pattern for bisimulation (cf.\ \cite{Otto/99b}). Here, the $0$-th
component tracks the behaviour of a node w.r.t.\ to (base) transitions and colours
of the form $a_i$ and $c_i$, and the $1$-st component tracks behaviour for
$a_j$ and $c_j$.

We write $\allbox{i} \varphi$ for the formula
\begin{displaymath}
\nu X.\ \varphi \wedge \bigwedge_{a \in \Sigma} \bigwedge_{j=0}^{d-1} \mubox{a_j}{i} X
\end{displaymath}
which simply expresses that formula $\varphi$ holds for every possible (i.e., reachable) value of the $i$th
component. Again, cf.\ \cite{Otto/99b} for this use of universal quantification.

Definability of $d$-products and -powers of graphs hinges on the following three $\mucalc{2}$-formulas.
\begin{align*}
\varphi_{\mathsf{per}} &= \allbox{0}\allbox{1} \bigwedge_{i=0}^{d-1} \Big( \varphi^{j,j}_{\mathsf{bis}} \rightarrow \bigwedge_{\substack{j=0 \\ j \not =i}}^{d-1} \big(\mubox{a_i}{0} \varphi^{j,j}_{\mathsf{bis}}\big)  \wedge \big(\mubox{\reset{i}}{0}  \varphi^{j,j}_{\mathsf{bis}}\big)\Big) \\ 
\varphi_{\mathsf{rst}} &= \allbox{0} \bigwedge_{i=0}^{d-1} \mubox{\reset{i}}{0} \varphi^{i,i}_{\mathsf{bis}} \\
\varphi_{\mathsf{pow}} &= \bigwedge_{i=0}^{d-1}\bigwedge_{j=0}^{d-1} \varphi^{i,j}_{\mathsf{bis}}
\end{align*}

\begin{lemma}
\label{lem:powerchar2}
An $\Func{d}(\Sigma,\Cols)$-graph $\G$ is bisimilar to a $d$-product of $(\Sigma,\Cols)$ graphs 
iff $\G \models \varphi_{\mathsf{per}} \wedge \varphi_{\mathsf{rst}}$. 
It is bisimilar to the $d$-power of some $(\Sigma,\Cols)$-graph iff, additionally, 
$\G \models \varphi_{\mathsf{pow}}$.
\end{lemma}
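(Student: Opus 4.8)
The plan is to establish, for an arbitrary $\Func{d}(\Sigma,\Cols)$-graph $\G$ with largest $d$-bisimulation $(\kbisim{ij})_{0\le i,j\le d-1}$, the three equivalences (i) $\G\models\varphi_{\mathsf{per}}$ iff $(\kbisim{ij})$ is persistent, (ii) $\G\models\varphi_{\mathsf{rst}}$ iff $(\kbisim{ij})$ has the reset property, and (iii) $\G\models\varphi_{\mathsf{pow}}$ iff $\G$ is a power, and then to combine these with the re-factorization lemmas established above and with bisimulation-invariance of closed $\polymucalc$-formulas. The two facts I would lean on throughout are that $\varphi^{i,j}_{\mathsf{bis}}$ defines exactly $\kbisim{ij}$ and that $\allbox{k}\psi$ holds at a tuple precisely when $\psi$ holds after the $k$-th component is replaced by any node reachable from it.

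For (ii), I would unwind the semantics starting from the root pair $(v_I,v_I)$: the prefix $\allbox{0}$ advances only the first component and leaves the second pinned at $v_I$, so after it one sits at an arbitrary pair $(u,v_I)$ with $u$ reachable, and there $\mubox{\reset{i}}{0}\varphi^{i,i}_{\mathsf{bis}}$ says exactly that every $\reset{i}$-successor $w$ of $u$ satisfies $w\kbisim{ii}v_I$, i.e.\ the reset property. For (i), $\allbox{0}\allbox{1}$ reaches precisely the pairs $(u,u')$ of reachable nodes, and the body of $\varphi_{\mathsf{per}}$ then states, for $i\ne j$, that at any such pair with $u\kbisim{jj}u'$ every $a_i$- or $\reset{i}$-successor $w$ of $u$ still satisfies $w\kbisim{jj}u'$; instantiating $u':=u$ (a reachable pair) and invoking pseudo-reflexivity, -symmetry and -transitivity from Lemma~\ref{lem:kbisim-eq} turns this into ``$\Transition{u}{a_i}{w}$ or $\Transition{u}{\reset{i}}{w}$ with $j\ne i$ implies $u\kbisim{jj}w$'', which is persistence, and the converse direction is the same computation run backwards. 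For (iii), evaluating $\varphi_{\mathsf{pow}}$ at $(v_I,v_I)$ asserts $v_I\kbisim{ij}v_I$ for all $i,j$, which is by definition powerhood.

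With (i)--(iii) in hand the lemma follows by routine bookkeeping. For the ``if''-direction of the first statement, persistence plus the reset property give $\G\sim\prod_{i=0}^{d-1}\G^i/d$ by the re-factorization lemma above, hence $\G$ is bisimilar to a $d$-product; for the ``only if''-direction, a genuine $d$-product carries a persistent $d$-bisimulation with the reset property (as already noted), so by (i)--(ii) it satisfies $\varphi_{\mathsf{per}}\wedge\varphi_{\mathsf{rst}}$, and bisimulation-invariance transports this to $\G$. For the power refinement, from $\G\models\varphi_{\mathsf{pow}}$ I would deduce $\G^i/d\sim\G^j/d$ for all $i,j$: the relation $\{([v]_{\kbisim{ii}},[w]_{\kbisim{jj}})\mid v\kbisim{ij}w\}$ is well defined by pseudo-transitivity, contains the pair of roots because $v_I\kbisim{ij}v_I$, and satisfies \Prop, \Forth and \Back directly from the corresponding clauses of $\kbisim{ij}$ together with the definition of the factor graphs; consequently all $\G^i/d$ are bisimilar to a single graph $\mathcal{H}$ and $\G\sim\prod_{i=0}^{d-1}\G^i/d\sim\mathcal{H}^d\in\textsc{Power}_{d}$. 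Conversely, if $\G\sim\mathcal{H}^d$ it suffices, by bisimulation-invariance, to check $\mathcal{H}^d\models\varphi_{\mathsf{pow}}$, which I would do by observing that the family on $\mathcal{H}^d$ given by $v\approx'_{ij}v' :\Leftrightarrow v_i\sim v'_j$ (bisimilarity in $\mathcal{H}$) is a $d$-bisimulation, hence contained in the largest one, so $v_I\kbisim{ij}v_I$ follows from reflexivity of $\sim$ on $\mathcal{H}$; the conjuncts $\varphi_{\mathsf{per}},\varphi_{\mathsf{rst}}$ are already handled since a $d$-power is in particular a $d$-product.

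The main obstacle I anticipate is not conceptual but notational: keeping the two index systems straight -- the $a_i$/$c_i$ of the underlying $\Func{d}$-graph versus the two tuple positions $0,1$ of the dyadic bisimulation formulas -- while unwinding the $\allbox{}$-prefixes, and in particular making precise that $\allbox{0}$ on its own pins the second component to the root (which is exactly what makes $\varphi_{\mathsf{rst}}$ express the reset property rather than something strictly weaker), as opposed to $\allbox{0}\allbox{1}$ ranging over all pairs of reachable nodes. The reduction of $\varphi_{\mathsf{per}}$ to persistence is the one spot where a small auxiliary observation (namely $u\kbisim{jj}u'\wedge w\kbisim{jj}u'\Rightarrow u\kbisim{jj}w$, and back) from Lemma~\ref{lem:kbisim-eq} must be invoked explicitly.
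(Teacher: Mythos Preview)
Your proposal is correct and follows exactly the paper's own approach: the paper's proof simply asserts that $\varphi_{\mathsf{per}}$, $\varphi_{\mathsf{rst}}$, and $\varphi_{\mathsf{pow}}$ respectively enforce persistence, the reset property, and powerhood (your equivalences (i)--(iii)), tacitly relying on the preceding re-factorization lemmas and on bisimulation-invariance of $\polymucalc$ for the two directions. You have merely made these implicit steps explicit---including the appeal to Lemma~\ref{lem:kbisim-eq} for (i) and the construction of the bisimulation between the factor graphs $\G^i/d$ and $\G^j/d$ for the power direction---so there is nothing to add.
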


\begin{proof}
It suffices to recognise the following. Formula $\varphi_{\mathsf{per}}$ enforces persistence: for
any tuple $(v_0, v_1)$ in $\kbisim{jj}$, it must be the case 
that taking any transition (of the form $a^i$ or $\reset{i}$, with $i \not = j$) in the $0$-th component yields a tuple 
$(v'_0, v_1)$ in $\kbisim{jj}$, i.e., taking a transition in the candidate $i$-th component 
anywhere leaves the other candidate components equivalent if they were so before.

Formula $\varphi_{\mathsf{rst}}$ enforces the reset property: for any tuple $(v, v_I)$, after taking a 
$\reset{k}$-transition in the $0$-th component, the resulting tuple satisfies $\kbisim{ii}$.

Formula $\varphi_{\mathsf{pow}}$ demands that $(v_I, v_I)$ in $\kbisim{ij}$ for all $i,j$ and the initial state 
$v_I$ of a $d$-product. This then means that all its $d$ components are mutually bisimilar, i.e.\ the 
$d$ factors of the product are all bisimilar, and the $d$-product is in fact a $d$-power. 
\end{proof}

An immediate consequence of Lemma~\ref{lem:powerchar2}, namely the fact that $d$-powers are definable
in a fragment of \polymucalc of fixed arity, independent of $d$, is the following, making use of
Prop.~\ref{prop:otto}.

\begin{corollary}
The property of a $\Func{d}(\Sigma,\Cols)$-graph being the $d$-power of some $(\Sigma,\Cols)$-graph is 
decidable in polynomial time.
\end{corollary}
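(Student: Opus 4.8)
The plan is to derive this directly from the characterisation in Lemma~\ref{lem:powerchar2} together with Otto's Theorem (Prop.~\ref{prop:otto}). First I would record that, by Lemma~\ref{lem:powerchar2}, the class $\textsc{Power}_d$ is exactly $L(\varphi_{\mathsf{per}} \wedge \varphi_{\mathsf{rst}} \wedge \varphi_{\mathsf{pow}})$, and that $\varphi_{\mathsf{per}} \wedge \varphi_{\mathsf{rst}} \wedge \varphi_{\mathsf{pow}}$ is a formula of $\mucalc{2}$, hence of $\polymucalc$; so $\textsc{Power}_d$ is $\polymucalc$-definable. Next I would observe that $\textsc{Power}_d$ is bisimulation-invariant: it is by construction the union, over all $(\Sigma,\Cols)$-graphs $\G$, of the $\sim$-class of $\G^d$, and $\sim$ is transitive, so any graph bisimilar to a member of $\textsc{Power}_d$ is itself in $\textsc{Power}_d$ (equivalently, the class $L(\psi)$ defined by any $\polymucalc$-formula $\psi$ is bisimulation-invariant).

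It then remains to match this with the hypothesis of Otto's Theorem, which speaks of a $d'$-rooted formula of $\mucalc{d'+1}$. By the remark following Prop.~\ref{prop:otto}, every $\polymucalc$-formula --- in particular $\varphi_{\mathsf{per}} \wedge \varphi_{\mathsf{rst}} \wedge \varphi_{\mathsf{pow}}$ --- is equivalent to a $d'$-rooted formula of $\mucalc{d'+1}$ for some $d' \ge 1$, still defining $\textsc{Power}_d$. Applying the implication (b)~$\Rightarrow$~(a) of Otto's Theorem to that formula and the bisimulation-invariant class $\textsc{Power}_d$ yields that $\textsc{Power}_d$ is recognisable in polynomial time, as claimed. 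Alternatively, one may sidestep the rootedness bookkeeping entirely: for a fixed finitely representable $\G = (V,\Transition{}{}{},L,v_I)$, deciding $\G \models \varphi_{\mathsf{per}} \wedge \varphi_{\mathsf{rst}} \wedge \varphi_{\mathsf{pow}}$ is just model checking a fixed, alternation-free, purely greatest-fixpoint $\mucalc{2}$-formula, which is done by evaluating the denotational semantics over subsets of $V^2$ in time polynomial in $|V|$.

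Finally I would add the sharpening stressed just before the statement: since the defining formula always lies in $\mucalc{2}$, a fragment of arity fixed at $2$ and independent of the dimension $d$, the exponent of the polynomial bound can be taken independent of $d$; only the size of the formula, which is $\mathrm{poly}(d,|\Sigma|,|\Cols|)$ and enters as a multiplicative factor, grows with $d$. I do not anticipate a real obstacle: all the substance has already been spent on Lemma~\ref{lem:powerchar2}, and the only delicate point is the purely cosmetic gap between the arity-$2$ formula we produce and the $d$-rooted $\mucalc{d+1}$ shape literally required by Otto's Theorem, which the equivalence remark (or the direct model-checking argument) closes.
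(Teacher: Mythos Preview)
Your proposal is correct and follows essentially the same approach as the paper: the paper's entire argument is the one-sentence remark preceding the corollary, namely that Lemma~\ref{lem:powerchar2} yields $\mucalc{2}$-definability (fixed arity, independent of $d$) and then Prop.~\ref{prop:otto} gives polynomial-time recognisability. Your write-up is more careful---explicitly handling the $d$-rootedness gap and offering a direct model-checking alternative---but the underlying idea is identical.
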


\section{Separability of Complexity Classes via Relative Non-Regularity of Tree Languages}
\label{sec:separate}

We study the structure of tree languages that result from unfoldings of $d$-powers of graphs satisfying particular queries
which are known to be in \bisim{\NPTIME}, resp.\ \bisim{\PSPACE}. These are the aforementioned universality problems for
NFA over a 1-letter, resp.\ 2-letter alphabet, as used in the proof of Thm.~\ref{thm:bisseparate}.

\subsection{A Construction for \NPTIME}

Let $\Sigma = \{a\}$ and $\Cols = \{f\}$, i.e.\ here we are considering $(1,1)$-graphs, in particular the set 
$\OneNonUnivNFA := \{ \G = (V,\Transition{}{}{},L,v_I) \mid \exists n \ge 0\, \forall v \in V:$ if 
$\Transition{v_I}{a^n}{v}$ then $f \not\in L(v) \}$ where, as usual, $\Transition{}{a^n}{}$ is the $n$-fold product
of $\Transition{}{a}{}$. Note that this is indeed just a formalisation of the non-universality for NFA over the 
singleton alphabet $\Sigma$ where accepting states are coloured with $f$: is there a word (length) such that all 
paths under this word starting in the initial state end in non-accepting states?

\begin{proposition}[\cite{STOC73*1,Milner80,Park81:conais}]
\label{prop:1NonUnivNFAinNPbisim}
\enspace
\begin{enumerate}
\item[a)] $\OneNonUnivNFA{} \in \bisim{\NPTIME}$.
\item[b)] $\OneNonUnivNFA{}$ is \NPTIME-complete.
\end{enumerate}
\end{proposition}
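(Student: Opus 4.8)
The plan is to prove the two parts separately; part (b) then follows from the \NPTIME-membership in (a) together with an \NPTIME-hardness reduction, and all of the substance is classical (Stockmeyer and Meyer \cite{STOC73*1} for the facts about unary NFA, and \cite{Milner80,Park81:conais} for the basic theory of bisimulation). I would first dispose of the bisimulation-invariance claim in (a). Suppose $\G \sim \G'$ via a bisimulation $R$ with $(v_I,v'_I) \in R$. An induction on $n$, iterating the \Forth-condition along a path, shows that for every $v$ with $\Transition{v_I}{a^n}{v}$ in $\G$ there is a $v'$ with $\Transition{v'_I}{a^n}{v'}$ in $\G'$ and $(v,v') \in R$, and iterating \Back gives the symmetric statement; together with \Prop this yields $\{\,L(v) \mid \Transition{v_I}{a^n}{v}\,\} = \{\,L'(v') \mid \Transition{v'_I}{a^n}{v'}\,\}$ for every $n$. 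In particular the colour $f$ is avoided at distance $n$ in $\G$ iff it is avoided at distance $n$ in $\G'$, so membership in $\OneNonUnivNFA$ is preserved under $\sim$.

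For \NPTIME-membership in (a), write $m = |V|$ and set $B = \{\, n \ge 0 \mid \exists v.\ \Transition{v_I}{a^n}{v} \text{ and } f \in L(v)\,\}$. The crucial observation -- and the one point I expect to require real care -- is that $B$ is eventually periodic with a threshold polynomial in $m$ and a period at most exponential in $m$: passing to the condensation of the underlying graph, the set of lengths of walks from $v_I$ to a fixed node is eventually periodic with a polynomial threshold (via the standard quadratic bound on when walk-length sets become purely periodic within a strongly connected component) and with period dividing $\mathrm{lcm}(1,\dots,m) \le 2^{O(m)}$, and $B$ is a finite union of such sets. Hence if the complement of $B$ is non-empty, it already contains some $n$ of bit-length $O(m)$. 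The \NPTIME-procedure therefore guesses such an $n$ in binary and verifies in time polynomial in $m$ and $\log n$ that no $f$-coloured node is reachable at distance exactly $n$; this verification amounts to computing the Boolean power $M^{n}$ of the adjacency matrix $M$ by repeated squaring ($O(\log n) = O(m)$ Boolean matrix multiplications) and checking the resulting reachability vector against the $f$-coloured nodes. These are exactly the periodicity bounds underlying the \textsc{coNP} upper bound for unary-NFA universality in \cite{STOC73*1}.

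For \NPTIME-hardness in (b), I would reduce from the \NPTIME-complete \emph{simultaneous incongruences} problem: given pairs $(r_1,p_1),\dots,(r_k,p_k)$ with $p_i \ge 1$, decide whether there is an $n \ge 1$ with $n \not\equiv r_i \pmod{p_i}$ for all $i$. From such an instance construct a $(1,1)$-graph whose root $v_I$ carries the colour $f$ and from which one $a$-edge leads into each of $k$ disjoint directed $a$-cycles, the $i$-th of length $p_i$, with the colour $f$ placed on the node of the $i$-th cycle chosen so that, for $n \ge 1$, an $f$-coloured node lies at distance exactly $n$ from $v_I$ iff $n \equiv r_i \pmod{p_i}$. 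Then an $f$-coloured node is reachable at distance $n$ iff $n = 0$ (reaching $v_I$) or $n \ge 1$ and $n \equiv r_i \pmod{p_i}$ for some $i$, so the constructed graph lies in $\OneNonUnivNFA$ iff the incongruences instance is positive; the construction is plainly polynomial-time. Together with (a) this gives \NPTIME-completeness, and one could alternatively just invoke the \textsc{coNP}-completeness of unary-NFA universality from \cite{STOC73*1}.
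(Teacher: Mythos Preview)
Your proposal is correct and follows essentially the same approach as the paper: for \NPTIME-membership both arguments guess a short binary representation of a witness $n$ (justified by an exponential bound on the minimal witness, which the paper obtains by a pumping argument and you obtain via eventual periodicity of walk-length sets) and verify via iterated squaring of the adjacency matrix; for bisimulation-invariance the paper invokes the classical fact that bisimilarity implies trace equivalence, while your direct \Forth/\Back induction is just an in-line proof of that same fact. The only noteworthy difference is that for \NPTIME-hardness the paper simply defers to \cite{STOC73*1}, whereas you spell out the standard reduction from simultaneous incongruences --- this is more informative but not a different route.
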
 
 
Inclusion in \NPTIME is usually argued for as follows. Using a pumping argument one obtains an exponential upper bound 
on the minimal $n$ such that all nodes reachable under $a^n$ are not coloured. A nondeterministic algorithm can then 
guess a binary representation of this number $n$ of polynomial size and construct the relation $\Transition{}{a^n}{}$
in polynomial time using iterated squaring. It can be stored as an adjacency matrix of quadratic size for example. Once
this matrix is computed, it is easy to derive all nodes reachable from the initial state under this relation and check
their colouring.

At last, \OneNonUnivNFA{} is also bisimulation-invariant: given two NFA $\G, \G'$ such that $\G \sim \G'$, suppose that
$\G \in \OneNonUnivNFA{}$. Bisimilarity is known to preserve trace equivalence, cf.\ \cite{Milner80,Park81:conais}, i.e.\ 
the languages of the two NFA must be equal. Thus, $\G' \in \OneNonUnivNFA{}$ as well which means that not only
$\OneNonUnivNFA{} \in \NPTIME$ but indeed $\OneNonUnivNFA{} \in \bisim{\NPTIME}$.

Now let $d \ge 1$. Remember that $\Func{d}(\{a\},\{f\})$-trees potentially represent unfoldings of $d$-powers of
$(\{a\},\{f\})$-graphs, i.e.\ NFA over a one-letter alphabet. Our aim is to characterise those structurally as
tree languages. 

\begin{definition}
\label{def:OneNonUnivNFAd}
For any $d \ge 1$ let $\OneNonUnivNFA{d}$ be the set of all $\Func{d}(\{a\},\{f\})$-graphs $\G = (V,\Transition{}{}{},L,v_I)$
for which there is some $n \ge 0$ such that for every path 
\begin{displaymath}
\Transition{v_I}{x_1}{}\Transition{v_1}{x_2}{}\Transition{v_2}{x_3}{\ldots}\Transition{}{x_m}{v_m}
\end{displaymath}
and every $i \in [d]$ the following holds. If the subsequence of $x_1,\ldots,x_m$ consisting of all actions 
$a_i$ that occur after the last action $\reset{i}$ (if it occurs) has length $n$, then $f_i \not\in L(v_m)$.
\end{definition}

\begin{lemma}
\label{lem:OneNonUnivNFAd}
For all $d \ge 1$ and all $(\{a\},\{f\})$-graphs $\G$ we have $\G^d \in \OneNonUnivNFA{d}$ iff 
$\G \in \OneNonUnivNFA{}$.
\end{lemma}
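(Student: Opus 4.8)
The plan is to derive both implications from a single structural fact about how a finite path through $\G^d$ looks when restricted to one coordinate, and then to match that fact against Def.~\ref{def:OneNonUnivNFAd} and the definition of the $d$-th power.

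The first step is the following \emph{projection claim}. Consider a path $(v_I,\dotsc,v_I) = \bar{u}^0, \bar{u}^1, \dotsc, \bar{u}^m$ in $\G^d$ with consecutive action labels $x_1,\dotsc,x_m$, and fix $i \in [d]$. Let $k$ be the number of $\ell \in \{1,\dotsc,m\}$ with $x_\ell = a_i$ that occur after the last $\ell'$ with $x_{\ell'} = \reset{i}$ --- or the total number of $a_i$ among $x_1,\dotsc,x_m$ if $\reset{i}$ never occurs. Then $\Transition{v_I}{a^k}{\bar{u}^m_i}$ in $\G$. This is immediate from the definition of $\G^d$: the $i$-th coordinate of the tuple changes only on a step labelled $a_i$, where it advances along an $a$-edge of $\G$, or on a step labelled $\reset{i}$, where it is reset to $v_I$. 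Writing $p$ for the position of the last $\reset{i}$-step (and $p = 0$ if there is none), we have $\bar{u}^p_i = v_I$, and between positions $p$ and $m$ exactly $k$ many $a$-edges of $\G$ are traversed in that coordinate; this also covers the degenerate cases $k = 0$ and ``$\reset{i}$ absent''.

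With the claim in hand, the ``$\Leftarrow$'' direction is short: pick $n$ witnessing $\G \in \OneNonUnivNFA{}$, i.e.\ $f \notin L(v)$ whenever $\Transition{v_I}{a^n}{v}$ in $\G$, and check the same $n$ for $\G^d$. Given any path of $\G^d$ ending in $\bar{u}^m$ and any $i$ whose associated $a_i$-count is $n$, the claim yields $\Transition{v_I}{a^n}{\bar{u}^m_i}$ in $\G$, hence $f \notin L(\bar{u}^m_i)$, hence $f_i \notin L(\bar{u}^m)$ by the colouring rule of the power, so $\G^d \in \OneNonUnivNFA{d}$. For ``$\Rightarrow$'', take $n$ witnessing $\G^d \in \OneNonUnivNFA{d}$ and check it for $\G$: for any $v$ with $\Transition{v_I}{a^n}{v}$ in $\G$, realised by a walk $v_I = w_0, w_1, \dotsc, w_n = v$, consider the path of $\G^d$ that performs $a_0$ exactly $n$ times, moving the $0$-th coordinate along $w_0,\dotsc,w_n$ and keeping every other coordinate at $v_I$ (a legal path, since an $a_0$-transition of $\G^d$ only needs an $a$-edge in coordinate $0$). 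This path has no $\reset{0}$ and $a_0$-count $n$, so membership of $\G^d$ applied with $i = 0$ gives $f_0 \notin L(v,v_I,\dotsc,v_I)$, i.e.\ $f \notin L(v)$; as $v$ was arbitrary, $\G \in \OneNonUnivNFA{}$.

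I do not expect a genuine obstacle here --- the statement essentially unwinds into these definitional manipulations. The closest thing to a subtlety is the bookkeeping around ``the last occurrence of $\reset{i}$'' in the projection claim (including $n = 0$ and the case where $\reset{i}$ does not appear on the path), together with the routine verification that the path built in the ``$\Rightarrow$'' direction is a well-formed path of $\G^d$.
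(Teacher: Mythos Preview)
Your proposal is correct and follows essentially the same route as the paper: both directions use the same projection-of-a-coordinate observation (your ``projection claim''), and the ``$\Rightarrow$'' direction builds the very same $a_0$-only path in $\G^d$. The only cosmetic difference is that you isolate the projection claim up front, whereas the paper unfolds it inline during the ``$\Leftarrow$'' direction.
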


\begin{proof}
Assume that $\G = (V,\Transition{}{}{},L,v_I) \in \OneNonUnivNFA{}$. Then there is $n \geq 0$ such that, for all 
$v_n \in V$ with 
\begin{displaymath}
\Transition{v_I}{a}{}\Transition{v_1}{a}{}\Transition{v_2}{a}{\ldots}\Transition{}{a}{v_n}
\end{displaymath}
where $a$ is the unique edge label of $\G$, we have $f \in L(v_n)$. 

Let $d \geq 1$. Then $\G^d = (V^d, \TransitionD{}{}{}{d}, L^d, (v_I,\dotsc,v_I))$.
Let
\[\pi = \TransitionD{\bar{v_0}}{x_1}{}{d}\TransitionD{\bar{v_1}}{x_2}{}{d}\TransitionD{\bar{v_2}}{x_3}{}{d}{\ldots}\TransitionD{}{x_m}{\bar{v_m}}{d}\]
be any sequence of transitions in $\G^d$ such that $\bar{v_0} = (v_I, \dotsc, v_I)$. 
Let $i \in [d]$ and let $k \geq 0$ be the smallest number such
that $x_j \not  = \reset{i}$ for all $j \geq k$. 
Assume that $|\{k' > k \mid x_k = a_i\}| = n$, i.e.\ after the last
$\reset{i}$-transition (if it exists at all) in $\pi$, there are exactly $n$ transitions in the 
$i$-th component. Let $\pi_k$ denote
the subsequence of $\pi$ starting at node $v_{k}$, and 
let $v'_{k},\dotsc,v'_m$ be such that $v'_{k'} = (\bar{v_{k'}})_i$,
i.e.\ this is the sequence of $i$-th components of the tuples in $\pi_k$,
starting from index $k$. 

Observe that $v'_k = v_I$ necessarily, and that $v'_{k'} = v'_{k' + 1}$ if 
$x_{k'} \not = a_i$. Since there are exactly $n$ transitions
of the form $a_i$ in $\pi_k$, let $v''_0,\dotsc, v''_n$ be 
the unique subsequence of $v'_{k},\dotsc,v'_m$ such that $v''_0 = v'_k = v_I$
and $v''_{j} = v'_{k'+1}$ is such that $\TransitionD{v'_{k'}}{a_i}{v'_{k'+1}}{d}$,
i.e.\ this is the subsequence obtained by taking all nodes directly after
an $a_i$-transition. Note that $v''_n = (\bar{v_m})_i$ since this
component does not change after the last $a_i$-transition.

It is immediate that $\Transition{v''_0}{a}{}\Transition{v''_1}{a}{}\ldots\Transition{}{a}{}\Transition{v''_{n-1}}{a}{v_m}$
in $\G$, i.e.\ $f \in L(v_m)$ and, hence, $f_i \in L^d(\bar{v_m})$. Since
$\pi$ was arbitrary, this shows that $\G^d \in \OneNonUnivNFA{d}$.

Conversely, let $\G^d \in \OneNonUnivNFA{d}$, and let $n$ be as in the definition of $\OneNonUnivNFA{d}$.
It suffices to show that, for all paths $\Transition{v_I}{a}{}\Transition{v_1}{a}{}\dots\Transition{v_{n-1}}{a}{v_n}$
in $\G$, we have that $f \in L(v_n)$. But such a path gives rise to a sequence
\[\TransitionD{(v_I,\dotsc,v_I)}{a_0}{}{d}\TransitionD{(v_1, v_I, \dotsc,v_I)}{a_0}{}{d}\ldots\TransitionD{}{a_0}{(v_n, v_I, \dotsc,v_I)}{d}\] 
of transitions in $\G^d$ which clearly satisfies the conditions
lined out in the definition of $\OneNonUnivNFA{d}$. In particular,
it contains no $\reset{i}$-transitions at all. Hence, $f_i \in L^d((v_n, v_I, \dotsc,v_I))$
whence $f \in L(v_n)$. Consequently, $\G \in \OneNonUnivNFA{}$.
\end{proof}

\begin{theorem}
\label{thm:nonregforNP}
We have $\PTIME{\ne}\NPTIME$ iff for all $d \ge 1$, $\OneNonUnivNFA{d}$ is non-regular 
relative to $\textsc{Power}_{d}$.
\end{theorem}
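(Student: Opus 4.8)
The plan is to prove the contrapositive equivalence: $\PTIME{=}\NPTIME$ holds if and only if there is some $d \ge 1$ for which $\OneNonUnivNFA{d}$ \emph{is} regular relative to $\textsc{Power}_d$. Throughout, $\OneNonUnivNFA{d}$ and $\textsc{Power}_d$ are read as tree languages by intersecting with the class of $\Func{d}(\{a\},\{f\})$-trees; note that $\OneNonUnivNFA{d}$ is bisimulation-invariant, since the defining condition of Def.~\ref{def:OneNonUnivNFAd} only mentions action- and reset-labelled paths from the root together with the colours of their endpoints, and these are preserved in both directions along a bisimulation by \Forth, \Back and \Prop. Two building blocks will be used in both directions: first, Lemma~\ref{lem:OneNonUnivNFAd}, stating that $\G \in \OneNonUnivNFA{}$ iff $\G^d \in \OneNonUnivNFA{d}$ for every $(\{a\},\{f\})$-graph $\G$; and second, the fact that $(\G^d)^{\mathsf{unf}}$ is a tree bisimilar to $\G^d$ by Prop.~\ref{prop:treeunfoldbisim}, hence a member of $\textsc{Power}_d$, so that by bisimulation-invariance of $\OneNonUnivNFA{d}$ and of $\mucalc{}$ one may freely transfer ``$\in \OneNonUnivNFA{d}$'' and ``$\models \psi$'' (for any $\mucalc{}$-formula $\psi$) between $\G^d$ and $(\G^d)^{\mathsf{unf}}$.

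For the direction from $\PTIME{=}\NPTIME$ to relative regularity, I would argue as follows. Under $\PTIME{=}\NPTIME$, Prop.~\ref{prop:1NonUnivNFAinNPbisim} gives $\OneNonUnivNFA{} \in \bisim{\NPTIME}$, hence $\OneNonUnivNFA{} \in \bisim{\PTIME}$, so $\OneNonUnivNFA{}$ is bisimulation-invariant and polynomial-time recognisable. By Otto's Theorem (Prop.~\ref{prop:otto}) there are then $d \ge 1$ and a $d$-rooted closed $\varphi \in \mucalc{d+1}$ defining $\OneNonUnivNFA{}$. Put $\psi := \mono{\varphi} \in \mucalc{}$ over $\Func{d}(\{a\},\{f\})$; by Prop.~\ref{prop:langelozes} we get $\G \in \OneNonUnivNFA{}$ iff $\G^d \models \psi$ for every $(\{a\},\{f\})$-graph $\G$. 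Now fix an arbitrary tree $\T \in \textsc{Power}_d$ and choose $\G$ with $\T \sim \G^d$. Chaining bisimulation-invariance of $\OneNonUnivNFA{d}$, Lemma~\ref{lem:OneNonUnivNFAd}, the previous equivalence, and bisimulation-invariance of $\mucalc{}$ yields $\T \in \OneNonUnivNFA{d}$ iff $\T \models \psi$. Hence $\psi$ witnesses that $\OneNonUnivNFA{d}$ is regular relative to $\textsc{Power}_d$ for this particular $d$.

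For the converse direction, I would start from a $\mucalc{}$-formula $\varphi_L$ over $\Func{d}(\{a\},\{f\})$ witnessing relative regularity of $\OneNonUnivNFA{d}$ with respect to $\textsc{Power}_d$, and turn it into a polynomial-time decision procedure for $\OneNonUnivNFA{}$; since $\OneNonUnivNFA{}$ is $\NPTIME$-complete by Prop.~\ref{prop:1NonUnivNFAinNPbisim}, this forces $\PTIME{=}\NPTIME$. On input a finite $(\{a\},\{f\})$-graph $\G$, first construct $\G^d$, whose size is $|\G|^d$, i.e.\ polynomial for the fixed $d$. Then $\G \in \OneNonUnivNFA{}$ iff $\G^d \in \OneNonUnivNFA{d}$ (Lemma~\ref{lem:OneNonUnivNFAd}) iff $(\G^d)^{\mathsf{unf}} \in \OneNonUnivNFA{d}$ iff $(\G^d)^{\mathsf{unf}} \models \varphi_L$ (relative regularity, applicable since $(\G^d)^{\mathsf{unf}}$ is a tree in $\textsc{Power}_d$) iff $\G^d \models \varphi_L$. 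Finally, model checking the \emph{fixed} formula $\varphi_L$ on $\G^d$ runs in time polynomial in $|\G^d|$ -- for instance by Knaster--Tarski fixpoint iteration, as the number and nesting depth of fixpoints is constant and each fixpoint stabilises after at most $|\G^d|$ monotone iterations over subsets of the node set. Composing these steps gives the desired polynomial-time algorithm for $\OneNonUnivNFA{}$.

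I do not expect a genuine mathematical obstacle: all of the substantial content is already packaged in Otto's Theorem, the monofication correspondence (Prop.~\ref{prop:langelozes}), and Lemma~\ref{lem:OneNonUnivNFAd}, so the argument is essentially an assembly of these. The two points that do require care are (a) fixing the convention under which the graph classes in the statement are interpreted as tree languages, and carefully shuttling membership back and forth between $\G^d$, its unfolding $(\G^d)^{\mathsf{unf}}$, and arbitrary bisimilar members of $\textsc{Power}_d$ via bisimulation-invariance of $\OneNonUnivNFA{d}$ and of $\mucalc{}$; and (b) the standard but worth-stating observations that forming the $d$-th power and model checking a fixed $\mu$-calculus formula are both polynomial for fixed $d$.
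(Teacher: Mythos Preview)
Your argument is correct. The forward direction (from $\PTIME{=}\NPTIME$ to relative regularity) is essentially identical to the paper's: Otto's Theorem yields a $d$-rooted $\mucalc{d+1}$-formula for $\OneNonUnivNFA{}$, and its monofication witnesses regularity of $\OneNonUnivNFA{d}$ relative to $\textsc{Power}_d$ via Prop.~\ref{prop:langelozes} and Lemma~\ref{lem:OneNonUnivNFAd}.

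For the converse direction you take a slightly different, more elementary route than the paper. The paper applies polyfication (Lemma~\ref{lem:polyfication}) to the relative-regularity witness $\varphi_L$, obtaining a $\mucalc{d+1}$-formula $\poly{\varphi_L}$ that defines $\OneNonUnivNFA{}$, and then invokes the easy direction of Otto's Theorem to conclude $\OneNonUnivNFA{} \in \PTIME$. You instead bypass polyfication entirely and inline the polynomial-time argument: build $\G^d$ and model-check the fixed $\mucalc{}$-formula $\varphi_L$ directly on it. Both approaches are sound; yours is more self-contained (it does not need Lemma~\ref{lem:polyfication} and uses only the trivially easy half of Otto's Theorem, namely that $\mucalc{}$ model checking for a fixed formula is polynomial), while the paper's stays uniformly within the logical framework and makes the symmetry between the two directions more visible. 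Your explicit verification of the bisimulation-invariance of $\OneNonUnivNFA{d}$ is a useful addition that the paper leaves implicit.
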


\begin{proof}
For the ``if''-part suppose that $\PTIME{=}\NPTIME$. According to Thm.~\ref{thm:bisseparate}, we then also have 
$\bisim{\PTIME}{=}\bisim{\NPTIME}$. According to Prop.~\ref{prop:1NonUnivNFAinNPbisim}
we would have $\OneNonUnivNFA{} \in \bisim{\PTIME}$ in particular. According to Prop.~\ref{prop:otto}, there is some
$d \ge 1$ and a $d$-rooted $\mucalc{d+1}$-formula $\varphi_{\OneNonUnivNFA{}}$ s.t.\ 
$L(\varphi_{\OneNonUnivNFA{}}) = \OneNonUnivNFA{}$. Now consider its monofication  
$\mono{\varphi_{\OneNonUnivNFA{}}} \in \mucalc{}$ over $\Func{d}(\{a\},\{f\})$. By Prop.~\ref{prop:langelozes}, 
we have $\G^d \in L(\mono{\varphi_{\OneNonUnivNFA{}}})$ iff $\G \in \OneNonUnivNFA{}$ for any $(\{a\},\{f\})$-graph $\G$.
Hence, $\mono{\varphi_{\OneNonUnivNFA{}}} \in \mucalc{}$ defines the set of all trees that are bisimilar to 
$d$-powers of graphs in $\OneNonUnivNFA{}$ which is the tree language $\OneNonUnivNFA{d}$ according to 
Lemma~\ref{lem:OneNonUnivNFAd}. Thus, $\OneNonUnivNFA{d}$ is regular relative to $\textsc{Power}_d$.    

For the ``only if''-part suppose that $\OneNonUnivNFA{d}$, for some $d \ge 1$, is
a regular language of trees, relative to $\textsc{Power}_{d}$. I.e.\ there is some formula $\varphi \in \mucalc{}$
over $\Func{d}(\{a\},\{f\})$ which correctly identifies trees that belong to $\OneNonUnivNFA{d}$, provided that they
are bisimilar to the $d$-power of some $(\{a\},\{f\})$-graph $\G$. According to Lemma~\ref{lem:OneNonUnivNFAd}, this 
is the case only if $\G \in \OneNonUnivNFA{}$. According to Lemma~\ref{lem:polyfication}, the $d$-rooted formula
$\poly{\varphi} \in \mucalc{d+1}$ then defines the class $\OneNonUnivNFA{}$ of $(\{a\},\{f\})$-graphs. According
to Otto's Theorem (Prop.~\ref{prop:otto}), it is polynomial-time recognisable, i.e.\ it belongs to $\bisim{\PTIME}$
and therefore also to $\PTIME$. According to Prop.~\ref{prop:1NonUnivNFAinNPbisim} it is \NPTIME-hard, i.e.\ we
immediately get $\PTIME{=}\NPTIME$.
\end{proof}

\subsection{A Construction for \PSPACE}
\label{sec:sepPSPACE}

An argument can be formed along the same lines, characterising the separation of \PTIME from \PSPACE through the
existence of tree languages that are non-regular relative to the language of tree representations of power graphs, 
up to bisimilarity. All that is needed is a set of graphs that is bisimulation-invariant and \PSPACE-complete. A
candicate was been mentioned before: the (non-)universality problem for NFA over a two-letter alphabet. Hence, 
here we are dealing with $(\{a,b\},\{f\})$-graphs and their powers, etc.

\begin{proposition}[\cite{STOC73*1,Milner80,Park81:conais}]
\label{prop:2NonUnivNFAinPSPACEbisim}
\enspace
\begin{enumerate}
\item[a)] $\TwoNonUnivNFA{} \in \bisim{\PSPACE}$.
\item[b)] $\TwoNonUnivNFA{}$ is \PSPACE-complete.
\end{enumerate}
\end{proposition}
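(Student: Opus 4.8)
The plan is to follow the template already used for $\OneNonUnivNFA{}$, separating the invariance claim from the space bound in (a) and then supplying a hardness reduction for (b).

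\emph{Part (a).} Bisimulation-invariance is obtained exactly as for $\OneNonUnivNFA{}$: if $\G \sim \G'$, then $\G$ and $\G'$, read as NFA with $f$ marking accepting states and $v_I$ the initial state, have the same trace set and hence accept the same language, since bisimilarity refines trace equivalence \cite{Milner80,Park81:conais}; the property ``some finite word over $\{a,b\}$ is not accepted from the root'' is therefore preserved under $\sim$. For the space bound I would give a nondeterministic polynomial-space procedure and invoke Savitch's theorem. On input $\G = (V,\Transition{}{}{},L,v_I)$, the procedure performs an on-the-fly subset construction: it keeps a set $S \subseteq V$ as a bit vector (polynomial space), starting from $\{v_I\}$; it repeatedly guesses a letter $x \in \{a,b\}$ and updates $S$ to $\{v' \mid \exists v \in S.\ \Transition{v}{x}{v'}\}$; it accepts as soon as $S$ contains no $f$-coloured node (a counter bounded by $2^{|V|}$, stored in polynomial space, guarantees termination). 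This accepts iff some word drives $v_I$ only into non-$f$ nodes, i.e.\ iff $\G \in \TwoNonUnivNFA{}$; hence $\TwoNonUnivNFA{} \in \textsc{NPSpace} = \PSPACE$, and combined with invariance, $\TwoNonUnivNFA{} \in \bisim{\PSPACE}$.

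\emph{Part (b).} Membership in \PSPACE{} is (a), so only \PSPACE-hardness remains. I would reduce from the acceptance problem for polynomially space-bounded deterministic Turing machines, which is \PSPACE-complete. Given such a machine $M$ with space bound $p$ and an input $w$, encode configurations of $M$ on $w$ as fixed-length blocks over a finite tape-and-state alphabet, and a halting run as the marker-separated concatenation of successive configurations. In polynomial time one constructs an NFA $A_{M,w}$ over $\{a,b\}$ that accepts precisely the strings which are \emph{not} valid accepting computation histories: the automaton nondeterministically guesses the kind of defect --- malformed block structure, a first block differing from the initial configuration, a constant-width ``window'' that is inconsistent with $M$'s transition relation, or a last block that is non-accepting --- and verifies it locally using polynomially many states. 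Then $L(A_{M,w}) = \{a,b\}^*$ iff no accepting history exists iff $M$ rejects $w$; equivalently, $A_{M,w} \in \TwoNonUnivNFA{}$ iff $M$ accepts $w$. Reading $A_{M,w}$ as a $(\{a,b\},\{f\})$-graph with $v_I$ its initial state and $f$ marking its final states yields the reduction. Two letters are genuinely required: over a singleton alphabet the problem degenerates to $\OneNonUnivNFA{}$, which is only \NPTIME-complete by Prop.~\ref{prop:1NonUnivNFAinNPbisim}.

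\emph{Main obstacle.} The one delicate point is carrying the history construction out over a \emph{two}-letter alphabet. The textbook presentation uses a large alphabet; one either performs the construction directly over $\{a,b\}$ by block-encoding tape symbols, markers and states as fixed-length bit strings and replacing each local window check by the corresponding (still constant-width) check on the encoding, or one first proves hardness for an arbitrary fixed alphabet and then composes with such an encoding. In either case the property that must be preserved is that every defect remains \emph{locally} detectable, so that $A_{M,w}$ stays of polynomial size; everything else is routine. Since both claims are classical \cite{STOC73*1}, the final write-up would keep this at the level of a sketch, as was done for $\OneNonUnivNFA{}$.
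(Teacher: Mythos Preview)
Your proposal is correct and follows the standard textbook arguments. The paper itself gives no proof for this proposition at all: it is stated purely as a cited classical result, relying on \cite{STOC73*1} for \PSPACE-completeness of NFA non-universality and on \cite{Milner80,Park81:conais} for bisimulation-invariance via trace equivalence (the latter argument is spelled out only for the one-letter case after Prop.~\ref{prop:1NonUnivNFAinNPbisim}, and the two-letter case is left entirely to the reader). Your sketch is precisely what those citations stand for, so there is nothing to compare --- you have simply unpacked what the paper takes as folklore.
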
 

What remains to be done is to characterise -- along the lines of Def.~\ref{def:OneNonUnivNFAd} and 
Lemma~\ref{lem:OneNonUnivNFAd} -- the set of $\Func{d}(\{a,b\},\{f\})$-trees that are bisimilar to the $d$-power 
of some graph in $\TwoNonUnivNFA{}$.

\begin{definition}
\label{def:TwoNonUnivNFAd}
For any $d \ge 1$ let $\TwoNonUnivNFA{d}$ be the set of all $\Func{d}(\{a\},\{f\})$-graphs $\G = (V,\Transition{}{}{},L,v_I)$
for which there is some $w \in \{a,b\}^*$ such that for every path 
\begin{displaymath}
\Transition{v_I}{x_1}{}\Transition{v_1}{x_2}{}\Transition{v_2}{x_3}{\ldots}\Transition{}{x_m}{v_m}
\end{displaymath}
and every $i \in [d]$ the following holds. If the subsequence of $x_1,\ldots,x_m$ consisting of all actions 
$a_i,b_i$ that occur after the last action $\reset{i}$ (if it occurs) equals $w$ when removing index $i$ on each letter, 
then $f \not\in L(v_m)$.
\end{definition}

\begin{lemma}
\label{lem:TwoNonUnivNFAd}
For all $d \ge 1$ and all $(\{a,b\},\{f\})$-graphs $\G$ we have $\G^d \in \TwoNonUnivNFA{d}$ iff 
$\G \in \TwoNonUnivNFA{}$.
\end{lemma}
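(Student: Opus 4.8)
The plan is to reproduce the proof of Lemma~\ref{lem:OneNonUnivNFAd} almost verbatim, replacing the numerical parameter $n$ — a subsequence \emph{length} — by a concrete witness word $w \in \{a,b\}^*$ — a subsequence \emph{value} — and to use the \emph{same} $w$ as witness in both directions of the equivalence. The underlying observation is unchanged: in the asynchronous power $\G^d$ the $i$-th coordinate of a tuple evolves independently of the others, moving along an $a$- resp.\ $b$-edge of $\G$ exactly on an $a_i$- resp.\ $b_i$-action, being left fixed by every $a_j/b_j$ with $j \neq i$ and every $\reset{j}$ with $j \neq i$, and being sent back to $v_I$ by $\reset{i}$.

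\textbf{From $\G$ to $\G^d$.} Suppose $\G = (V,\Transition{}{}{},L,v_I) \in \TwoNonUnivNFA{}$, witnessed by some $w \in \{a,b\}^*$; that is, every path in $\G$ from $v_I$ spelling $w$ ends in a node not coloured $f$. I would show $w$ also witnesses $\G^d \in \TwoNonUnivNFA{d}$. Take an arbitrary path in $\G^d$ from $(v_I,\ldots,v_I)$ with edge labels $x_1,\ldots,x_m$ ending in a tuple $\bar v_m$, fix $i \in [d]$, let $k$ be the position of the last $\reset{i}$-labelled edge on the path (or $k=0$ if there is none), and assume the subsequence of $x_{k+1},\ldots,x_m$ formed by the $a_i$- and $b_i$-labels, each with the index $i$ deleted, equals $w$. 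Tracking the $i$-th coordinate from position $k$ on, exactly as in Lemma~\ref{lem:OneNonUnivNFAd}: it equals $v_I$ at position $k$, it is unchanged by every label other than $a_i,b_i$, and the $a_i/b_i$-labels make it traverse in $\G$ a path from $v_I$ spelling $w$. Hence $(\bar v_m)_i$ is reachable from $v_I$ by a $w$-labelled path, so it is not coloured $f$, i.e.\ $f_i \notin L^d(\bar v_m)$. As the path and $i$ were arbitrary, $\G^d \in \TwoNonUnivNFA{d}$.

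\textbf{From $\G^d$ to $\G$.} Conversely, suppose $\G^d \in \TwoNonUnivNFA{d}$, witnessed by $w$. I would show $\G \in \TwoNonUnivNFA{}$, again with witness $w$. Let $v_I \to v_1 \to \cdots \to v_{|w|}$ be any path in $\G$ spelling $w$. It lifts to the path in $\G^d$ that performs the same moves in coordinate $0$, i.e.\ from $(v_I,\ldots,v_I)$ to $(v_{|w|},v_I,\ldots,v_I)$ with edge labels the letters of $w$ each indexed by $0$. For $i=0$ this path has no $\reset{0}$-edge and the subsequence of its $a_0/b_0$-labels, with index $0$ deleted, is exactly $w$; so Def.~\ref{def:TwoNonUnivNFAd} gives $f_0 \notin L^d\big((v_{|w|},v_I,\ldots,v_I)\big)$, which by the definition of the product colouring means $f \notin L(v_{|w|})$. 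Since the path was arbitrary, $w$ witnesses $\G \in \TwoNonUnivNFA{}$.

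\textbf{Expected difficulty.} I foresee no real obstacle: the statement is structurally identical to the one-letter case, only carrying a word instead of a length. The one point needing care is the bookkeeping in the first implication — pinpointing the last $\reset{i}$-edge as the moment the $i$-th coordinate is reset, and checking that deleting the index from the relevant subsequence recovers precisely the word spelled by the $i$-th coordinate in $\G$ — but this is routine once Lemma~\ref{lem:OneNonUnivNFAd} is in place. (One could even merge the two lemmas into one statement parametric in the alphabet; keeping them apart simply mirrors the \NPTIME/\PSPACE split of the two constructions.)
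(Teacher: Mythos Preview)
Your proposal is correct and is exactly the approach the paper intends: the paper does not spell out a proof of this lemma, leaving it implicit that the argument for Lemma~\ref{lem:OneNonUnivNFAd} carries over with the length $n$ replaced by a witness word $w$. Your write-up in fact cleans up a couple of typos present in the one-letter proof (e.g.\ $f \notin L(v_n)$ rather than $f \in L(v_n)$), and your parenthetical remark that both lemmas could be merged into one alphabet-parametric statement is spot on.
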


With this characterisation, Prop.~\ref{prop:2NonUnivNFAinPSPACEbisim} and Thm.~\ref{thm:bisseparate}, we can 
repeat the argument in the proof of Thm.~\ref{thm:nonregforNP} using \TwoNonUnivNFA{} instead of \OneNonUnivNFA{} to
obtain the following.
 
\begin{theorem}
\label{thm:nonregforPSPACE}
We have $\PTIME{\ne}\PSPACE$ iff for all $d \ge 1$, $\TwoNonUnivNFA{d}$ is non-regular 
relative to $\textsc{Power}_{d}$.
\end{theorem}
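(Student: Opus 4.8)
The plan is to transcribe the proof of Theorem~\ref{thm:nonregforNP} essentially verbatim, replacing $\OneNonUnivNFA{}$ by $\TwoNonUnivNFA{}$, the alphabet $\{a\}$ by $\{a,b\}$, Proposition~\ref{prop:1NonUnivNFAinNPbisim} by Proposition~\ref{prop:2NonUnivNFAinPSPACEbisim}, Lemma~\ref{lem:OneNonUnivNFAd} by Lemma~\ref{lem:TwoNonUnivNFAd}, and part~(a) of Theorem~\ref{thm:bisseparate} by part~(b). Everything else in that argument --- Otto's Theorem (Proposition~\ref{prop:otto}) and the monofication/polyfication correspondence (Proposition~\ref{prop:langelozes} and Lemma~\ref{lem:polyfication}) --- is alphabet-agnostic and can be invoked unchanged, so the two-letter case is structurally identical.

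For the ``only if'' direction I would argue contrapositively. Assume that for some $d \ge 1$ the language $\TwoNonUnivNFA{d}$ is regular relative to $\textsc{Power}_d$, witnessed by a formula $\varphi \in \mucalc{}$ over $\Func{d}(\{a,b\},\{f\})$ that correctly decides membership in $\TwoNonUnivNFA{d}$ on all trees bisimilar to a $d$-power. Polyfying yields $\poly{\varphi} \in \mucalc{d+1}$, which may be taken $d$-rooted by the remark following Otto's Theorem. By Lemma~\ref{lem:polyfication} together with Lemma~\ref{lem:TwoNonUnivNFAd}, $\poly{\varphi}$ defines exactly the class $\TwoNonUnivNFA{}$ of $(\{a,b\},\{f\})$-graphs. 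Otto's Theorem then places $\TwoNonUnivNFA{} \in \bisim{\PTIME} \subseteq \PTIME$; since $\TwoNonUnivNFA{}$ is \PSPACE-hard by Proposition~\ref{prop:2NonUnivNFAinPSPACEbisim}(b), this forces $\PTIME = \PSPACE$.

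For the ``if'' direction, again contrapositively: assume $\PTIME = \PSPACE$, hence $\bisim{\PTIME} = \bisim{\PSPACE}$ by Theorem~\ref{thm:bisseparate}(b). By Proposition~\ref{prop:2NonUnivNFAinPSPACEbisim}(a) we have $\TwoNonUnivNFA{} \in \bisim{\PSPACE}$, so $\TwoNonUnivNFA{} \in \bisim{\PTIME}$, and by Otto's Theorem there is a $d$-rooted $\mucalc{d+1}$-formula $\varphi$ with $L(\varphi) = \TwoNonUnivNFA{}$ for some $d \ge 1$. Its monofication $\mono{\varphi} \in \mucalc{}$ over $\Func{d}(\{a,b\},\{f\})$ satisfies $\G^d \models \mono{\varphi}$ iff $\G \in \TwoNonUnivNFA{}$ by Proposition~\ref{prop:langelozes}, which by Lemma~\ref{lem:TwoNonUnivNFAd} means $\mono{\varphi}$ decides membership in $\TwoNonUnivNFA{d}$ correctly on every tree bisimilar to a $d$-power. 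Thus $\TwoNonUnivNFA{d}$ is regular relative to $\textsc{Power}_d$ for that $d$, which is the negation of the right-hand side of the stated equivalence.

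I expect no genuine obstacle in this argument itself: the only points needing (routine) care are that Lemma~\ref{lem:TwoNonUnivNFAd} and Proposition~\ref{prop:2NonUnivNFAinPSPACEbisim} are indeed the correct two-letter analogues of their one-letter counterparts, and that polyfication of a $\mucalc{}$-formula over $\Func{d}(\{a,b\},\{f\})$ produces a $d$-rooted $\mucalc{d+1}$-formula as required to apply Otto's Theorem. The hard combinatorial content --- actually establishing relative non-regularity of \emph{all} the $\TwoNonUnivNFA{d}$, presumably via Lemma~\ref{lem:pumping} within the context of $\textsc{Power}_d$ --- is precisely what this equivalence does \emph{not} require, and is left open.
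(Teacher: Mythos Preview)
Your proposal is correct and follows exactly the route the paper takes: the paper's own proof of this theorem consists of nothing more than the sentence ``repeat the argument in the proof of Thm.~\ref{thm:nonregforNP} using \TwoNonUnivNFA{} instead of \OneNonUnivNFA{}'', together with the references to Prop.~\ref{prop:2NonUnivNFAinPSPACEbisim}, Lemma~\ref{lem:TwoNonUnivNFAd}, and Thm.~\ref{thm:bisseparate}, which are precisely the substitutions you spell out. Your expanded write-up of both directions faithfully mirrors the paper's proof of Thm.~\ref{thm:nonregforNP}.
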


The argument used in Thms.~\ref{thm:nonregforNP} and \ref{thm:nonregforPSPACE} is of course not restricted to the use 
of the 1-letter or 2-letter non-universality problem for NFA. Any \NPTIME-complete, resp.\ \PSPACE-complete problem that 
is bisimulation-invariant will also give rise to an infinite family of tree languages such that a proof of non-regularity 
of all of them, relative to the corresponding class $\textsc{Power}_{d}$ will separate \PTIME from \NPTIME, resp.\ \PSPACE. 
We therefore reformulate these theorems without the explicit reference to one such problem. 

\begin{corollary}
There are $\Sigma,\Cols$ and families $(L_d)_{d \ge 1}$ of languages of $\Func{d}(\Sigma,\Cols)$-trees 
such that $\PTIME{\ne}\NPTIME$, resp.\ $\PTIME{\ne}\PSPACE$ iff for all $d \ge 1$, $L_d$ is non-regular relative to 
$\textsc{Power}_d$.
\end{corollary}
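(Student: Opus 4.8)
The plan is to recognise that this corollary is essentially a repackaging of Theorems~\ref{thm:nonregforNP} and~\ref{thm:nonregforPSPACE}: the concrete witness problems named there — non-universality of one-letter, resp.\ two-letter NFA — are simply absorbed into the existential quantifiers over $\Sigma,\Cols$ and over the family $(L_d)_{d \ge 1}$. So the whole proof amounts to picking the right instantiations and citing the two theorems verbatim.

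Concretely, I would proceed in two steps. First, for the statement about \PTIME\ and \NPTIME, I would take $\Sigma := \{a\}$, $\Cols := \{f\}$ and, for each $d \ge 1$, set $L_d := \OneNonUnivNFA{d}$ as given in Def.~\ref{def:OneNonUnivNFAd}; these are languages of $\Func{d}(\{a\},\{f\})$-trees as required. Theorem~\ref{thm:nonregforNP} then states precisely that $\PTIME{\ne}\NPTIME$ iff for all $d \ge 1$ the language $\OneNonUnivNFA{d}$ is non-regular relative to $\textsc{Power}_d$, which is exactly the claimed equivalence under this choice. Second, for the statement about \PTIME\ and \PSPACE, I would instead take $\Sigma := \{a,b\}$, $\Cols := \{f\}$ and $L_d := \TwoNonUnivNFA{d}$ from Def.~\ref{def:TwoNonUnivNFAd}, and invoke Theorem~\ref{thm:nonregforPSPACE} in the same way. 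The only point worth making explicit is that the two instantiations live over \emph{different} signatures — a one-letter versus a two-letter alphabet — but this is harmless, since $\Sigma$ and $\Cols$ are existentially quantified separately in the two halves of the corollary's statement.

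There is no genuine obstacle here; all the substantive work has already been discharged in Sect.~\ref{sec:separate}, in particular in Lemmas~\ref{lem:OneNonUnivNFAd} and~\ref{lem:TwoNonUnivNFAd} and the chain Otto's Theorem (Prop.~\ref{prop:otto}) $\to$ monofication (Prop.~\ref{prop:langelozes}) $\to$ polyfication (Lem.~\ref{lem:polyfication}). If one wanted a self-contained argument not referring to NFA at all, one could instead fix \emph{any} bisimulation-invariant \NPTIME-complete (resp.\ \PSPACE-complete) graph query $\mathfrak{G}$, define $\mathfrak{G}_d$ as the set of trees bisimilar to $d$-powers of graphs in $\mathfrak{G}$, and rerun the proof of Theorem~\ref{thm:nonregforNP} line by line; the \NPTIME-hardness (resp.\ \PSPACE-hardness) of $\mathfrak{G}$ is all that the backward direction uses, and bisimulation-invariance together with the power-graph machinery handles the forward direction. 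But since the concrete families are already in hand, I would simply cite the two theorems and be done.
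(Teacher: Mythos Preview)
Your proposal is correct and matches the paper's approach exactly: the corollary is stated in the paper immediately after Thm.~\ref{thm:nonregforPSPACE} as a mere reformulation of Thms.~\ref{thm:nonregforNP} and~\ref{thm:nonregforPSPACE}, with the remark that any bisimulation-invariant \NPTIME-complete (resp.\ \PSPACE-complete) problem would serve in place of the NFA non-universality problems. Your observation about the two halves living over different signatures is apt and not an issue, since the ``resp.'' in the statement is to be read as two separate existential claims.
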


\section{Conclusion}
\label{sec:concl}

\subsection{Summary}

We took two constructions from the literature on the polyadic $\mu$-calculus -- namely Otto's Theorem stating that 
it captures bisimula\-tion-invariant \PTIME, and a reduction of its model checking problem to that of the ordinary
modal $\mu$-calculus -- and studied the use of their combination in order to characterise the coincidence or separation 
of \bisim{\PTIME} from \bisim{\NPTIME}, resp.\ \bisim{\PSPACE}, which would equally entail the coincidence or separation
of the corresponding general complexity classes, and it does not suffer from the order problem. 

In the end, we obtained a characterisation of separation in terms
of non-regularity of all the members of some tree languages, relative to the language of trees that are bisimilar to
graphs that are obtained as powers in that product construction.

While this provides another line of attack regarding the \PTIME{=}\NPTIME problem (and the \PTIME{=}\PSPACE problem)
via the theory of formal tree languages and expressiveness therein, it is certainly fair to ask whether this can 
feasibly be used for such a proof of separation or coincidence between these classes. It is important to note that 
standard proofs of non-regularity do not suffice; instead candidate trees and their pumped versions need to be
constructed to remain with the class of trees bisimilar to power graphs. To this end, we provided an algebraic
characterisation of such graphs in terms of an extended notion of bisimulation, and a logical characterisation in
terms of definability in $\mucalc{2}$. This feels tantalisingly close to $\mucalc{1} = \mucalc{}$, but definability
in $\mucalc{}$ is of course equivalent to regularity in which case the notion of relative non-regularity would just
collapse to ordinary non-regularity. This would probably make such a combinatorial approach via pumping more 
feasible which is not to be expected. Regularity of the tree languages $\textsc{Power}_d$ would also be very
counterintuitive.

At this point it is perhaps worth mentioning the Abiteboul-Vianu Theorem stating that \PTIME{=}\PSPACE iff 
FO+LFP{=}FO+PFP \cite{Abiteboul87a}. Thus, it also provides a characterisation of the coincidence or separation
between the classes \PTIME and \PSPACE, there in terms of equi-expressiveness of two fixpoint logics. This
characterisation also circumvents the order problem in that it relates the \PTIME{=}\PSPACE question to the
question of expressiveness of FO+LFP and FO+PFP over general structures, not just ordered ones.

\subsection{Further Work}

From this formal-tree-language-based characterisation of the \PTIME{=}\NPTIME and \PTIME{=}\PSPACE problem, we
immediately obtain two obvious tasks for further research: (\textsc{i}) a better understanding of the structure 
of trees that are bisimilar to power graphs will be useful, should one even stand the chance to successfully use 
Lemma~\ref{lem:pumping} (or something similar) for proofs of non-regularity relative to the languages 
$\textsc{Power}_d$. (\textsc{ii}) Very much related to that, it will be useful to better understand the 
limits of expressivity of the dyadic fragment $\mucalc{2}$ of the polyadic $\mu$-calculus, as this may not only 
also shed more light onto the possibility to manipulate trees (by pumping etc.) within the class $\textsc{Power}_d$
but, conversely, it may provide insight on what is \emph{not} possible there.

At last, Fig.~\ref{fig:bisimcomplexity} points at another rather obvious line of future work: it is interesting
that descriptive complexity theory was started with a logical characterisation of the class \NPTIME, and then
others followed. On the bisimulation-invariant side, though, it is exactly the class \bisim{\NPTIME} for which
a logical characterisation is unknown so far, even though the bisimulation-invariant fragments of other major
time- and space-complexity classes do have such characterisations by now. Again, a capturing result for 
\bisim{\NPTIME} would open another possibility for attacking the \PTIME{=}\NPTIME question that is not impeded
by concerns about the existence or non-existence of orders.

\bibliographystyle{plain}
\bibliography{./literature}

@TechReport{AndersenPMC:1994,
  title =	"A Polyadic Modal {$\mu$}-Calculus",
  author =	"H. R. Andersen",
  year = 	"1994",
  institution =	"Dept.\ of Computer Science, Technical University of Denmark, Copenhagen",
  number =      "ID-TR: 1994-195",
  doi =         "10.1.1.42.1859"
}

@Article{Otto/99b,
  author =	"M. Otto",
  title =	"Bisimulation-invariant {PTIME} and higher-dimensional $\mu$-calculus",
  journal =	"Theor.~Comput.~Sci.",
  pages =	"237--265",
  year = 	"1999",
  number =	"1--2",
  volume =	"224",
  publisher =	"Elsevier",
  doi =         "10.1016/S0304-3975(98)00314-4"
}

@inproceedings{LL-FICS12,
  author =       "M. Lange and {\'{E}}. Lozes",
  title =        "Model Checking the Higher-Dimensional Modal $\mu$-Calculus",
  booktitle =    "Proc.\ 8th Workshop on Fixpoints in Computer Science, {FICS'12}",
  year =         "2012",
  series =       "Electr.\ Proc.\ in Theor.\ Comp.\ Sc.",
  volume =       "77",
  pages =        "39--46",
  doi =          "10.4204/EPTCS.77"
}

@InProceedings{focs91*368,
  author =       "E. A. Emerson and C. S. Jutla",
  title =        "Tree Automata, $\mu$-Calculus and Determinacy",
  pages =        "368--377",
  ISBN =         "0-8186-2445-0",
  booktitle =    "Proc.\ 32nd Symp.\ on Foundations of Computer Science",
  address =      "San Juan, Puerto Rico",
  year =         "1991",
  publisher =    "IEEE",
  doi =          "10.1109/sfcs.1991.185392"
}

@Inproceedings{JW:autmcr,
  author =       "D. Janin and I. Walukiewicz",
  title =        "Automata for the $\mu$-Calculus and Related Results",
  editor =       "J. Wiedermann and P. H{\'a}jek",
  booktitle =    "Proc.\ 20th Symp.\ on Math.\ Foundations of Computer Science, {MFCS'95}",
  series =       "LNCS",
  volume =       "969",
	pages        = {552--562},
  publisher =    "Springer",
  address =      "Berlin",
  year =         "1995"
}

@Article{Wilke:2001:BBMS,
  title =	"Alternating tree automata, parity games, and modal {$\mu$}-calculus",
  author =	"T. Wilke",
  year = 	"2001",
  number =	"2",
  journal =     "Bull.\ Belgian Math.\ Soc.",
  volume =      "8",
  pages =       "359--391"
}

@Article{Emerson:1987:UIT,
  author =       "E. A. Emerson",
  title =        "Uniform inevitability is tree automaton ineffable",
  journal =      "Information Processing Letters",
  volume =       "24",
  number =       "2",
  pages =        "77--79",
  year =         "1987",
}

@Article{Fagin74,
  author =       "R. Fagin",
  title =        "Generalized First-Order Spectra and Polynomial-Time Recognizable Sets",
  publisher =    "SIAM-AMS",
  journal =      "Complexity and Computation",
  year =         "1974",
  volume =       "7",
  pages =        "43--73",
  keywords =     "NP=Sigma11"
}

@Article{JACM::AbiteboulVV1997,
  title =        "Fixpoint Logics, Relational Machines, and Computational Complexity",
  author =       "S. Abiteboul and M. Y. Vardi and V. Vianu",
  journal =      "Journal of the ACM",
  pages =        "30--56",
  year =         "1997",
  volume =       "44",
  number =       "1",
  keywords =     "Complexity classes, computational complexity, fixpoint
                 logic, relational complexity, PSPACE=FO+PFP",
}

@Article{Imm:relqcp,
  author =       "N. Immerman",
  title =        "Relational Queries Computable in Polynomial Time",
  journal =      "Information and Control",
  volume =       "68",
  number =       "1--3",
  pages =        "86--104",
  year =         "1986",
  doi =          "10.1016/S0019-9958(86)80029-8"
}

@InProceedings{STOC::Vardi1982,
  title =        "The Complexity of Relational Query Languages (Extended Abstract)",
  author =       "M. Y. Vardi",
  pages =        "137--146",
  booktitle =    "Proc.\ 14th Symp.\ on Theory of Computing, {STOC'82}",
  year =         "1982",
  address =      "San Francisco, CA, USA",
  publisher =    "ACM",
  doi =          "10.1145/800070"
}

@InCollection{Imm89,
  author =       "N. Immerman",
  title =        "Descriptive and Computational complexity",
  booktitle =    "Computational Complexity Theory, Proc. Symp. Applied Math.", 
  volume =       "38",
  editor =       "J. Hartmanis",
  publisher =    "AMS",
  year =         "1989",
  pages =        "75--91"
}

@Article{Kozen83, 
  author =       "D. Kozen", 
  title =        "Results on the Propositional $\mu$-calculus", 
  journal =      "TCS", 
  year =         "1983", 
  volume =       "27", 
  pages  =       "333--354",
  doi =          "10.1016/0304-3975(82)90125-6"
}

@InProceedings{conf/ifipTCS/LangeL14,
  title =	"Capturing Bisimulation-Invariant Complexity Classes with Higher-Order Modal Fixpoint Logic",
  author =	"M. Lange and {\'E}. Lozes",
  booktitle =	"Proc.\ 8th Int.\ {IFIP} Conf.\ on Theoretical Computer Science, {TCS'14}",
  publisher =	"Springer",
  year = 	"2014",
  volume =	"8705",
  pages =	"90--103",
  series =	"LNCS",
  doi =  	"10.1007/978-3-662-44602-7",
}

@inproceedings{DBLP:journals/corr/abs-2209-10311,
  author       = {F. Bruse and D. Kronenberger and M. Lange},
  title        = {Capturing Bisimulation-Invariant Exponential-Time Complexity Classes},
  booktitle    = {Proc.\ 13th Int.\ Symp.\ on Games, Automata, Logics and Formal Verification, {GandALF'22}},
  series       = {{EPTCS}},
  volume       = {370},
  pages        = {17--33},
  year         = {2022},
  doi          = {10.4204/EPTCS.370.2},
}

@inproceedings{DBLP:conf/lics/Grohe08,
  author       = {M. Grohe},
  title        = {The Quest for a Logic Capturing {PTIME}},
  booktitle    = {Proc.\ 23rd Annual {IEEE} Symp.\ on Logic in Computer Science, {LICS'08}},
  pages        = {267--271},
  publisher    = {{IEEE}},
  year         = {2008},
  doi          = {10.1109/LICS.2008.11},
}

@article{DBLP:journals/siglog/DawarP24,
  author       = {A. Dawar and B. Pago},
  title        = {A Logic for {P:} Are we Nearly There Yet?},
  journal      = {{ACM} {SIGLOG} News},
  volume       = {11},
  number       = {2},
  pages        = {35--60},
  year         = {2024},
  doi          = {10.1145/3665453.3665459},
}

@inproceedings{DBLP:conf/stoc/AaronsonW08,
  author       = {S. Aaronson and A. Wigderson},
  title        = {Algebrization: a new barrier in complexity theory},
  booktitle    = {Proc.\ 40th Annual {ACM} Symp.\ on Theory of Computing, {STOC'08}},
  pages        = {731--740},
  publisher    = {{ACM}},
  year         = {2008},
  doi          = {10.1145/1374376.1374481},
}

@article{DBLP:journals/siamcomp/BakerGS75,
  author       = {T. P. Baker and J. Gill and R. Solovay},
  title        = {Relativizations of the {P} =? {NP} Question},
  journal      = {{SIAM} J. Comput.},
  volume       = {4},
  number       = {4},
  pages        = {431--442},
  year         = {1975},
  doi          = {10.1137/0204037},
}

@article{HartmanisS65,
  author       = "J. Hartmanis and R.E. Stearns",
  year         = "1965",
  title        = "On the computational complexity of algorithms",
  journal      = "Trans.\ AMS",
  volume       = "117",
  publisher    = "AMS",
  pages        = "285–306",
  doi          = "10.2307/1994208"
}

@InProceedings{FOCS::StearnsHL1965,
  title =	  "Hierarchies of memory limited computations",
  author =	  "R. E. Stearns and J. Hartmanis and P. M. {Lewis II}",
  pages =	  "179--190",
  booktitle = "Proc.\ 6th Ann.\ Symp.\ on Switching Circuit Theory and Logical Design",
  year = 	  "1965",
  publisher    = {{IEEE} Computer Society},
}

@inproceedings{Cook:STOC72,
  author       = "S. A. Cook",
  year         = "1972",
  title        = "A hierarchy for nondeterministic time complexity",
  booktitle    = "Proc.\ 4th ACM Symp.\ on Theory of computing, {STOC'72}",
  publisher    = "ACM",
  pages        = "187–192",
  doi          = "10.1145/800152.804913"
}

@Article{Imm:lanccc,
  author =       "N. Immerman",
  title =        "Languages That Capture Complexity Classes",
  journal =      "SIAM Journal of Computing",
  volume =       "16",
  number =       "4",
  pages =        "760--778",
  year =         "1987"
}

@InProceedings{lncs164*313,
  author =	"D. Kozen and R. Parikh",
  title =	"A Decision Procedure for the Propositional $\mu$-Calculus",
  pages =	"313--325",
  booktitle =	"Proc.\ Workshop on Logics of Programs",
  year = 	"1983",
  series =	"LNCS",
  volume =	"164",
  publisher =	"Springer",
  doi =         "10.1007/3-540-12896-4_370"
}

@InProceedings{STOC73*1,
  author =       "A. R. Meyer and L. J. Stockmeyer",
  title =        "Word problems requiring exponential time",
  pages =        "1--9",
  booktitle =    "Proc.\ 5th Symp.\ on Theory of Computing, {STOC'73}",
  publisher =    "ACM",
  address =      "New York",
  year =         "1973",
  doi =          "10.1145/800125.804029"
}

@Book{Milner80,
  author =	"R. Milner",
  title =	"A Calculus of Communicating Systems",
  series =	"LNCS",
  volume =	"94",
  publisher =	"Springer",
  year = 	"1980",
}

@InProceedings{Park81:conais,
  author =	"D. Park",
  title =	"Concurrency and Automata on Infinite Sequences",
  booktitle =	"Proc.\ 5th {GI}-Conference on Theoretical Computer Science",
  editor =	"P.~Deussen",
  publisher =	"Springer",
  year = 	"1981",
  volume =	"104",
  series =	"LNCS",
  pages =	"167--183",
}

@InProceedings{conf/icla/Lange19,
  title =	"Specifying Program Properties Using Modal Fixpoint Logics: A Survey of Results",
  author =	"M. Lange",
  booktitle =	"Proc.\ 8th Indian Conf.\ on Logic and Its Applications, {ICLA'19}",
  publisher =	"Springer",
  year = 	"2019",
  volume =	"11600",
  pages =	"42--51",
  series =	"LNCS",
}

@article{LLVG:TCS:2014,
  author =       {M. Lange and {\'{E}}. Lozes and M. Vargas Guzm{\'a}n},
  title =        {Model-Checking Process Equivalences},
  journal =      {Theoretical Computer Science},
  volume    =    {560},
  pages     =    {326--347},
  year      =    {2014},
  doi =          "10.1016/j.tcs.2014.08.020"
}

@InProceedings{Abiteboul87a,
  author =	"S. Abiteboul and V. Vianu",
  title =	"A Transaction Language Complete for Database Update and Specification",
  booktitle =	"Proc.\ ACM SIGACT-SIGMOD Symp.\ on Principles of Database Systems",
  address =	"San Diego, CA",
  year = 	"1987",
  pages =	"260--268"
}

@Article{Tars55,
  author =       "A. Tarski",
  title =        "A Lattice-theoretical Fixpoint Theorem and its Application",
  journal =      "Pacific Journal of Mathematics",
  year =         "1955",
  volume =       "5",
  pages =        "285--309",
  doi =          "10.2140/pjm.1955.5.285"
}

@Article{Kna28,
  author =	"B. Knaster",
  title =	"Un th\'eor\`em sur les fonctions d'ensembles",
  journal =	"Annals Soc. Pol. Math",
  year = 	"1928",
  volume =	"6",
  pages =	"133--134",
}

@InProceedings{Stirling95,
  author =       "C. Stirling",
  title =        "Local Model Checking Games",
  pages =        "1--11",
  booktitle =    "Proc.\ 6th Conf.\ on Concurrency Theory, {CONCUR}'95",
  series =       "LNCS",
  volume =       962,
  publisher =    "Springer",
  year =         1995,
  doi =          "10.1007/3-540-60218-6\_1"
}

@InProceedings{Lange:FICS15,
  title =		"The Arity Hierarchy in the Polyadic {$\mu$}-Calculus",
  author =		"M. Lange",
  booktitle =	"Proc.\ 10th Int.\ Workshop on Fixed Points in Comp.\ Sci., {FICS'15}",
  year = 		"2015",
  volume =		"191",
  pages =		"105--116",
  series =		"EPTCS",
  url = 	 	"http://arxiv.org/abs/1509.02826",
}

\end{document}